\pdfminorversion=4

\documentclass[10pt,twocolumn,journal]{IEEEtran}

\usepackage{subfig,float}
\usepackage{amsmath, mathrsfs, mathtools}
\usepackage{amsfonts}
\usepackage{amssymb}
\usepackage{graphicx}
\usepackage{color}
\usepackage{multirow}
\usepackage{graphicx}
\usepackage{stmaryrd}
\usepackage{mathtools}
\usepackage{textgreek}

\def\Fxi{\clF_{\text{\textxi,\texttheta}}}
\def\Fnu{\clF_{\text{\textnu}}}
\def\Dxirho{\clD_{\text{\textvarsigma,\textrho}}}

\def\ripa{{\epsilon}}
\def\ripb{{\delta}}
\def\ripc{{\mu}}
\def\kapp{{\kappa}}

\def\vtheta{{\vartheta}}
\def\vsigma{{\varsigma}}
\def\vrho{{\varrho}}
\def\vpi{{\varpi}}
\def\lambdam{{\boldsymbol{\lambda}}}

\allowdisplaybreaks

\usepackage{amsfonts}
\usepackage{amssymb}
\usepackage{graphicx}
\usepackage{color}
\usepackage{multirow}
\usepackage{graphicx}

\newcommand{\subparagraph}{}
\usepackage[compact]{titlesec}
\titlespacing*{\section}{10pt}{1.2\baselineskip}{1.1\baselineskip}

\titlespacing\section{0pt}{10pt plus 4pt minus 2pt}{5pt plus 2pt minus 2pt}
\titlespacing\subsection{0pt}{8pt plus 4pt minus 2pt}{3pt plus 2pt minus 2pt}
\titlespacing\subsubsection{0pt}{6pt plus 4pt minus 2pt}{3pt plus 2pt minus 2pt}

\setlength{\textfloatsep}{10pt plus 1.0pt minus 1.0pt}

\allowdisplaybreaks

\newcommand{\myhash}{%
  {\settoheight{\dimen0}{C}\kern-.05em\, \resizebox{!}{\dimen0}{\raisebox{\depth}{\#}}}}

\usepackage{caption}

\usepackage[numbers,sort&compress]{natbib}

\usepackage{multicol}

\makeatletter
\setlength{\@fptop}{0pt}
\makeatother

\usepackage{mathbbol}

\def\mindex#1{\index{#1}}



%
%

\def\sq{\hbox{\rlap{$\sqcap$}$\sqcup$}}
\def\qed{\ifmmode\sq\else{\unskip\nobreak\hfil
\penalty50\hskip1em\null\nobreak\hfil\sq
\parfillskip=0pt\finalhyphendemerits=0\endgraf}\fi\medskip}


\long\def\defbox#1{\framebox[.9\hsize][c]{\parbox{.85\hsize}{%
\parindent=0pt
\baselineskip=12pt plus .1pt      
\parskip=6pt plus 1.5pt minus 1pt 
 #1}}}


\long\def\beginbox#1\endbox{\subsection*{}%
\hbox{\hspace{.05\hsize}\defbox{\medskip#1\bigskip}}%
\subsection*{}}

\def\endbox{}


\def\diag{{\text{diag}}}

\def\tr{\mathsf{tr}}

\newsavebox{\junk}
\savebox{\junk}[1.6mm]{\hbox{$|\!|\!|$}}

\def\limsup{\mathop{\rm lim\ sup}}
\def\liminf{\mathop{\rm lim\ inf}}
\def\argmin{\mathop{\rm arg\, min}}




\newcommand{\field}[1]{\mathbb{#1}}

\def\ZZ{\field{Z}}



\def\bE{{\mathbb E}}
\def\bF{{\mathbb F}}

\def\bP{{\mathbb P}}

\def\bR{{\mathbb R}}

\def\bbb{{\mathbb b}}

\def\bfA{{\bf A}}
\def\bfB{{\bf B}}

\def\bfH{{\bf H}}
\def\bfI{{\bf I}}

\def\bfL{{\bf L}}

\def\bfS{{\bf S}}
\def\bfT{{\bf T}}
\def\bfU{{\bf U}}

\def\bfX{{\bf X}}
\def\bfY{{\bf Y}}
\def\bfZ{{\bf Z}}

\def\bfb{{\bf b}}

\def\bfe{{\bf e}}

\def\bfg{{\bf g}}

\def\bfw{{\bf w}}
\def\bfx{{\bf x}}
\def\bfy{{\bf y}}
\def\bfz{{\bf z}}




\def\scrE{{\mathscr{E}}}


\def\ttA{{\mathtt{A}}}

\def\ttC{{\mathtt{C}}}

\def\ttG{{\mathtt{G}}}

\def\ttT{{\mathtt{T}}}


\def\sfF{{\sf F}}
\def\sfG{{\sf G}}

\def\sfM{{\sf M}}
\def\sfN{{\sf N}}

\def\sfQ{{\sf Q}}
\def\sfR{{\sf R}}

\def\bfmath#1{{\mathchoice{\mbox{\boldmath$#1$}}%
{\mbox{\boldmath$#1$}}%
{\mbox{\boldmath$\scriptstyle#1$}}%
{\mbox{\boldmath$\scriptscriptstyle#1$}}}}




\def\bfmY{\bfmath{Y}}

\def\bfmhhaY{\bfmath{\hhaY}} 
\def\bfmhhaY{\hbox to 0pt{$\widehat{\bfmY}$\hss}\widehat{\phantom{\raise 1.25pt\hbox{$\bfmY$}}}}









\def\til={{\widetilde =}}



\def\clB{{\cal B}}
\def\clC{{\cal C}}
\def\clD{{\cal D}}
\def\clE{{\cal E}}
\def\clF{{\cal F}}
\def\clG{{\cal G}}
\def\clH{{\cal H}}
\def\clJ{{\cal J}}
\def\clI{{\cal I}}

\def\clL{{\cal L}}

\def\clN{{\cal N}}

\def\clR{{\cal R}}
\def\clS{{\cal S}}

\def\clW{{\cal W}}
\def\clX{{\cal X}}
\def\clY{{\cal Y}}
\def\clZ{{\cal Z}}






 \def\FRAC#1#2#3{\genfrac{}{}{}{#1}{#2}{#3}}

\def\ddtp{{\mathchoice{\FRAC{1}{d^{\hbox to 2pt{\rm\tiny +\hss}}}{dt}}%
{\FRAC{1}{d^{\hbox to 2pt{\rm\tiny +\hss}}}{dt}}%
{\FRAC{3}{d^{\hbox to 2pt{\rm\tiny +\hss}}}{dt}}%
{\FRAC{3}{d^{\hbox to 2pt{\rm\tiny +\hss}}}{dt}}}}

\def\average#1,#2,{{1\over #2} \sum_{#1}^{#2}}

\def\eye(#1){{\bf(#1)}\quad}


\newtheorem{theorem}{{\bf Theorem}}
\newtheorem{proposition}{{\bf Proposition}}

\newtheorem{definition}{{\bf Definition}}
\newtheorem{condition}{{\bf Condition}}
\newtheorem{remark}{{\bf Remark}}
\newtheorem{example}{{\bf Example}}

\newtheorem{lemma}[theorem]{{\bf Lemma}}

\def\eq#1/{(\ref{e:#1})}

\newcommand{\inp}[2]{{\langle #1, #2 \rangle}}

\newcommand{\beqn}[1]{\notes{#1}%
\begin{eqnarray} \elabel{#1}}

\newcommand{\eeqn}{\end{eqnarray} }

\newcommand{\beq}[1]{\notes{#1}%
\begin{equation}\elabel{#1}}

\newcommand{\eeq}{\end{equation}}

\def\bdes{\begin{description}}
\def\edes{\end{description}}




%

\newcounter{rmnum}

\newcounter{anum}


%
{\end{list}}

\def\ass(#1:#2){(#1\ref{#1:#2})}

\def\ritem#1{
\item[{\sf \ass(\current_model:#1)}]
}

\newenvironment{recall-ass}[1]{%
\begin{description}
\def\current_model{#1}}{
\end{description}
}




\usepackage{tikz}
\usepackage{pgfplots,tikz-3dplot}
\pgfplotsset{compat=newest}
\usetikzlibrary{patterns}
\usetikzlibrary{positioning}
\usetikzlibrary{datavisualization}
\usetikzlibrary{datavisualization.formats.functions}
\usetikzlibrary{backgrounds}
\usetikzlibrary{shapes,snakes}
\usepgfplotslibrary{fillbetween}

\usepgfplotslibrary{units}
\usetikzlibrary{spy}
\usepackage{pgfplotstable}

\usepgfplotslibrary{external} 
\tikzexternalize

\usepackage{float}
\usepackage{afterpage}
\usepackage{balance}


\def\snr{{\mathsf{snr}}}


\usepackage[none]{hyphenat}

\def\vol{{\text{vol}}}

\def\name{{Unlabeled Ordered Sampling}}
\def\namesh{{UOS}} 

\def\algname{{$\mathtt{AltMin}$}}

\setlength\unitlength{1mm}

\long\def\comment#1{}







\newcommand{\Gammam}{\hbox{\boldmath$\Gamma$}}
\newcommand{\Lambdam}{\hbox{\boldmath$\Lambda$}}

\newcommand{\Sigmam}{\hbox{\boldmath$\Sigma$}}









\newcommand{\transp}{{\sf T}}
\renewcommand{\vec}{{\rm vec}}



\newcommand{\range}[2]{{\text{$#1$\,:\,$#2$}}}

\title{Signal Recovery from Unlabeled Samples}
\author{Saeid Haghighatshoar,  \IEEEmembership{Member, IEEE}, Giuseppe Caire, \IEEEmembership{Fellow, IEEE}   
\thanks{The authors are with the Communications and Information Theory Group, Technische Universit\"{a}t Berlin (\{saeid.haghighatshoar, caire\}@tu-berlin.de).}
\thanks{A short version of this paper was presented in IEEE International Symposium on Information Theory (ISIT), 2017, Aachen, Germany.}
}

\begin{document}

\maketitle

\begin{abstract}
In this paper, we study the recovery of a signal from a set of noisy linear projections (measurements), when such projections are unlabeled, that is, the correspondence between the measurements and the set of projection vectors (i.e., the rows of the measurement matrix) is not known a priori. We consider a special case of unlabeled sensing referred to as \textit{Unlabeled Ordered Sampling} (\namesh) where the ordering of the measurements is preserved. 
We identify a natural duality between this problem and  classical \textit{Compressed Sensing} (CS), where we show that the unknown support (location of nonzero elements) of a sparse signal in CS corresponds to the unknown indices of the measurements  in \namesh. While in CS it is possible to recover a sparse signal from an under-determined set of linear equations (less equations than the signal dimension), successful recovery in \namesh\ requires taking more samples than the dimension of the signal. Motivated by this duality, we develop a \textit{Restricted Isometry Property} (RIP) similar to that in CS. We also design a low-complexity Alternating Minimization algorithm that achieves a stable signal recovery under the established RIP.  We analyze our proposed algorithm for different signal dimensions and number of measurements theoretically and  investigate its performance empirically via  simulations. The results are a reminiscent of  phase-transition similar to that occurring in CS.
\end{abstract}

\begin{keywords}
Unlabeled Sensing, Compressed Sensing, Alternating Minimization algorithm.
\end{keywords}

\section{Introduction}
The recovery of a vector-valued signal $\bfy \in \bR^k$ from a set of linear and possibly noisy measurements $\bfx=\bfB \bfy +\bfw$, with an $n\times k$ measurement matrix $\bfB$, is the classical problem of linear regression in statistical inference and is arguably the most widely-studied problem in statistics, mathematics, and computer science. For $n\geq k$, one has an over-determined set of noisy linear equations, and the \textit{Maximum Likelihood} (ML) estimate of $\bfy$ under the additive Gaussian noise $\bfw$ is given by the well-known method of \textit{least squares}. For $n<k$, in contrast, one deals with an under-determined set of linear equations, which is only solvable if some additional a priori information about the signal $\bfy$ is available. For example, the whole field of Compressed Sensing (CS) deals with the recovery of the signal $\bfy$ when it is sparse or more generally compressible, i.e., it has only a few significantly large coefficients when represented in a suitable basis \cite{donoho2006compressed, candes2006near, candes2005decoding}. 

Almost all the past research in linear regression mainly deals with exploiting the underlying signal structure, whereas it is generally assumed that the regression matrix $\bfB$ is  fully known. In practice, the matrix $\bfB$ is implemented through a measurement device, where due to physical limitations, there might be some uncertainty or mismatch between the intended matrix $\bfB$ and the one realized via measurement devices. This has resulted in a surge of interest in generalized linear regression problems in which the matrix $\bfB$ is mismatched \cite{herman2010general,herman2010mixed} or is known only up to some unknown transformation. In this paper, we are interested in the \textit{unlabeled sensing} case, where the observation model is given by
\begin{align}\label{ul_samp}
{\bfx=\bfS \bfB \bfy +\bfw,}
\end{align}
where $\bfB$ is a completely known $n \times k$ matrix, and where $\bfS$ is an unknown $m \times n$ matrix with 0-1 components that has only a single $1$ in each row and samples (selects) $m$ out of $n$ elements of $\bfB \bfy$. Although $\bfS$ is not known, it belongs to an a priori  known set of selection matrices $\clS$. Identifying $\bfS$ in \eqref{ul_samp}, therefore, corresponds to associating the  measurements $\bfx$ to their corresponding linear projections in $\bfB$. Once $\bfS$ is identified, \eqref{ul_samp} reduces to a  linear regression problem whose solution can be obtained via standard techniques.

In \cite{unnikrishnan2015unlabeledarxiv, unnikrishnan2015unlabeled}, a variant of  this problem was studied when $\clS$ is the set of all permutations of $n$ measurements taken by $\bfB$. It was shown that if the measurement matrix $\bfB$ is generated randomly, any arbitrary $k$-dim signal can be recovered from a set of $n=2k$ noiseless unlabeled measurements, where this bound was shown to be tight. In \cite{pananjady2016linear}, the authors studied a similar problem but rather than recovering the unknown signal $\bfy$, they obtained a scaling law of the \textit{Signal-to-Noise Ratio} (SNR) required for identifying the unknown permutation matrix in $\clS$, where they showed that the required SNR increases logarithmically with the signal dimension. A \textit{Multiple Measurement} version of the problem defined by $\bfy^i=\bfS \bfB \bfx^i +\bfw^i$, $i=1, \dots, l$, where the signals $\bfx^i$ might vary but $\bfS$ remains the same across all $l$ measurements, was recently studied in \cite{pananjady2017denoising, hsu2017linear} and was proved to yield a signal recovery at a finite SNR provided that $l$ is sufficiently large.

A practical scenario well-modeled by \eqref{ul_samp} is sampling in the presence of jitter \cite{balakrishnan1962problem}, in which $\clS$ consists of 0-1 matrices with 1s as their diagonal elements and with some off-diagonal 1s representing the location of the jittered samples. A similar problem arises in molecular channels \cite{rose2012timing} and in the reconstruction of phase-space dynamics of linear/nonlinear dynamical systems \cite{fung2016dynamics} because of synchronization issues. 
Unlabeled regression \eqref{ul_samp} is also encountered in noncooperative multi-target tracking, e.g., in radar, where the receiver only observes the unlabeled data associated to all the targets, thus, $\clS$ consists of the set of all permutations corresponding to all possible data-target associations \cite{skolnik1980}.
A quite similar scenario arises in robotics and a well-known classical problem is \textit{Simultaneous Localization and Mapping} (SLAM) where  robots measure their relative coordinates and recovery of the underlying geometry requires suitably permuting the  data. 

A different line of work well-modeled by \eqref{ul_samp} is the genome assembly problem from shotgun reads \cite{huang1999cap3, bresler2013optimal} in which a sequence $y\in \{\ttA, \ttT , \ttG, \ttC \}^k$ of length $k$ is assembled from an unknown permutation of its sub-sequences called ``reads''.  {Designing efficient recovery (assembly) algorithms is still an active research area (see \cite{bresler2013optimal} and the refs. therein.).}
Communication over the  classical noisy deletion channel is another example of  \eqref{ul_samp}, where $\bfB$ represents the linear encoding matrix with elements belonging to a finite field $\bF_q$, for some prime number $q$, where $\bfy\in \bF_q^k$ denotes the $k$-dim vector containing $k$ information symbols, where $\bfw\in \bF_q^m$ is the additive noise of the channel, and where $\clS$ consists of all selection operators that keep only $m\leq n$ out of $n$ encoded symbols in $\bfB\bfy$ while preserving their order. In particular, in contrast with the erasure channel, where the location of erased symbols is known, in a deletion channel the location of deleted symbols is missing. 
 {This makes  designing good encoding and decoding techniques as well as identifying the capacity quite challenging \cite{mitzenmacher2009survey}. }

\begin{figure}[t]
\centering
\includegraphics{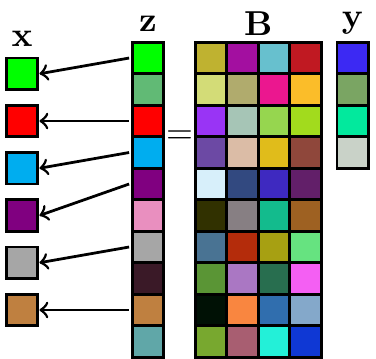}
\caption{\name.}
\label{fig:UOS}
\end{figure}

\subsection{Contribution}
Since satisfactory efficient algorithms for solving the unlabeled sampling problem in \eqref{ul_samp} are yet unknown,  we make 
progress towards this goal by addressing a relevant subproblem that we refer to as \textit{\name} (\namesh), where $\clS$ is the set of all 0-1 \textit{ordered sampling} matrices that select only $m$ out of $n$ components of $\bfB\bfy$ for some $m\leq n$ while preserving their relative order. This occurs in many practical scenarios (e.g., in a deletion channel) and is illustrated in Fig.\,\ref{fig:UOS}.
We discover a duality between this problem and the CS problem \cite{donoho2006compressed, candes2006near, candes2005decoding}, where the unknown location of samples in the former corresponds in a natural way to the unknown location of nonzero coefficients of the signal in the latter. 
To the best of our knowledge, this is the first paper addressing the underlying duality connection between the unlabeled sensing in \eqref{ul_samp} and  classical CS.  Designing a low-complexity algorithm for recovering the desired signal from its unlabeled samples is generally considered to be a  challenging problem \cite{unnikrishnan2015unlabeledarxiv, unnikrishnan2015unlabeled,pananjady2016linear, balakrishnan1962problem}. In particular, in most practically relevant situations, $\clS$ is a very large set such that a naive exhaustive search over $\clS$ would be unfeasible.  
In this paper, however, we are able to exploit the underlying ordering in the \namesh\ case  to design an efficient \textit{Alternating Minimization} recovery algorithm (\algname). We  analyze the performance of our proposed algorithm theoretically and investigate its performance empirically via  simulations.

\subsection{Notation}
We denote vectors by  boldface small letters (e.g., $\bfx$), matrices by boldface capital letters (e.g., $\bfX$), scalars by non-boldface letters (e.g., $x$), and sets by capital calligraphic letters (e.g., $\clX$).  For integers $k_1, k_2 \in \ZZ$, we use the shorthand notation $[\range{k_1}{k_2}]=\big \{k_1, k_1+1, \dots, k_2\}$, where the set is empty when $k_2 < k_1$, and use the simplified notation $[k_1]$ for $[\range{1}{k_1}]$. We denote the $k$-th component of a vector $\bfx$ by $\bfx_k$ and a subvector of $\bfx$ with indices in the range $\range{k_1}{k_2}$ by $\bfx_{\range{k_1}{k_2}}$. We denote the element of a matrix $\bfX$ at row $r$ and column $c$ by $\bfX_{r,c}$ and use an indexing notation similar to that for vectors for submatrices of $\bfX$, namely,  $\bfX_{r,c}$, $\bfX_{r,\range{c_1}{c_2}}$, $\bfX_{\range{r_1}{r_2},c}$, and $\bfX_{\range{r_1}{r_2}, \range{c_1}{c_2}}$. 
We denote the Kronecker product of a $p\times q$ matrix $\bfX$ and an $r\times s$ matrix $\bfY$ by a $pr\times qs$ matrix $\bfX\otimes \bfY$ that has $p\times q$ blocks with the $ij$-th block, $i\in[p], j\in[q]$, given by the $r\times s$ matrix $\bfX_{i,j} \bfY$. 
We use $\tr(.)$ for the trace  and $(.)^\transp$ for the transpose operator. We represent the inner product between two  vectors $\bfx$ and $\bfy$ and two matrices $\bfX$ and $\bfY$, with $\inp{\bfx}{\bfy}=\bfx^\transp \bfy$ and $\inp{\bfX}{\bfY}=\tr(\bfX^\transp \bfY)$  respectively, and use $\|\bfx\|=\sqrt{\inp{\bfx}{\bfx}}$ for the $l_2$ norm of a vector $\bfx$, and $\|\bfX\|_\sfF=\sqrt{\inp{\bfX}{\bfX}}$ for the Frobenius norm of a matrix $\bfX$. 
We denote a diagonal matrix with the elements $(x_1, \dots, x_k)$ by $\diag(x_1, \dots, x_k)$ and the identity matrix of the order $k$ with $\bfI_k$.
We represent a sequence of vectors and sequence of matrices by upper indices, e.g., $\bfx^1, \bfx^2, \cdots$ and $\bfX^1, \bfX^2,\cdots$. We denote a Gaussian distribution with a mean $\mu$ and a variance $\sigma^2$ by $\sfN(\mu, \sigma^2)$. We use $O(.)$ and $o(.)$ for the big-O and the small-O respectively.

\section{Statement of the Problem}
In this section, we  start from the more familiar CS problem. Then, we introduce the \namesh\ and make a duality connection between the two.

\subsection{Basic Setup}
Let $n$ and $m$ be positive integers with $m\leq n$, and let $\clI=\{i_1, \dots, i_m\} \subseteq [n]$ be a subset of $[n]$ consisting of ordered elements of $[n]$ satisfying $i_l<i_{l+1}$. We define the \textit{lift-up} operator associated to $\clI$ as a linear map from $\bR^m$ to $\bR^n$ given by the $\{0,1\}$-valued $n\times m$ tall matrix $\bfL$ with
\begin{align}
\bfL_{p,q}=\left \{ \begin{array}{ll} 1 & (p,q)=(i_l,l) \text{ for } l\in[m],\\ 0 & \text{otherwise.} \end{array} \right.
\end{align}
The operator $\bfL$ embeds $m$ components of $\bfx=(x_1, \dots, x_m)^\transp$ into the index set $\clI$ in the $n$-dim vector $\bfL\bfx$, while keeping their relative order, i.e., $(\bfL\bfx)_{i_l}=x_l$ for $l \in [m]$, and fills the rest with $0$. For example, for $n=4$, $m=3$, $\bfx=(x_1,x_2,x_3)$,  and $\clI=\{1,3,4\}$, we have $\bfL\bfx=(x_1,0,x_2,x_3)$.
We define the collection of all ${n \choose m}$ lift-up operators by $\clL_{n,m}$.

\subsection{Compressed Sensing}
In CS \cite{donoho2006compressed, candes2006near, candes2005decoding}, the goal is to recover a sparse or compressible signal by taking  less measurements than its dimension. We call a signal $\bfz \in \bR^n$ $m$-sparse ($m$-compressible) if it has only $m$ nonzero (significantly large) components. For simplicity, we focus on sparse rather than compressible signals.
We fix $m,n$ with $m\leq n$ as before. We define an instance of the CS problem for an $m$-sparse signal $\bfz \in \bR^n$ by the triple $(\bfx, \bfL, \bfA)$, where $\bfx\in \bR^m$ denotes the nonzero elements of $\bfz$, where $\bfL\in \clL_{n,m}$ encodes the location of these nonzero elements, and where $\bfA$ is the $k\times n$ matrix whose rows correspond to $k$ linear measurements. 
The  $m$-sparse signal $\bfz$ is generated by embedding the $m$ components of $\bfx$  via the lift-up matrix $\bfL$ as $\bfz=\bfL\bfx$, where it is seen that $\bfz$, albeit being $n$-dim, has at most $m$ nonzero components. This is illustrated in Fig.\,\ref{fig:dual_fig}. 
In CS, the sparse signal $\bfz$ is sampled via the measurement matrix $\bfA$ producing $k$  measurements $\bfy=\bfA \bfz$.  The goal is  to recover the unknown $(\bfx,\bfL)$ or equivalently $\bfz=\bfL\bfx$, from the known $(\bfy, \bfA)$.
The crucial idea in CS is that by exploiting the underlying sparsity, $\bfz$ can be recovered by taking less samples than its embedding dimension $n$.  
The practically interesting  regime of parameters in  CS is given by $m\leq k \leq n$, where the number of measurements $k$ is more than the number of nonzero coefficients of the signal $m$ but much less than the embedding dimension $n$.
\begin{figure}[t]
\centering
\includegraphics{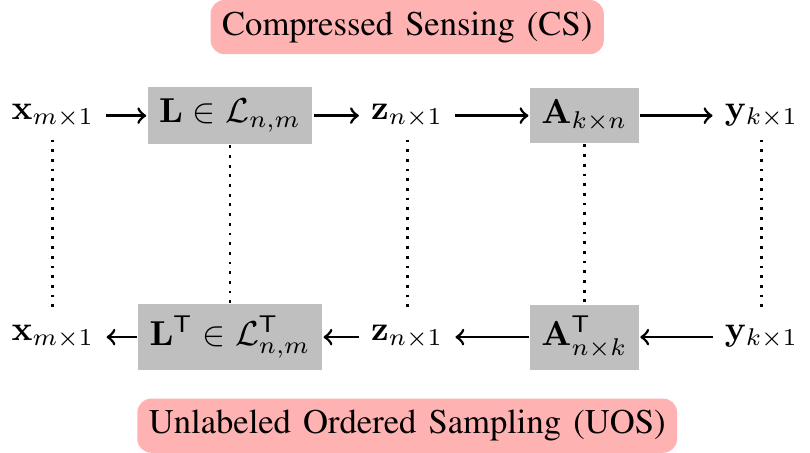}
\caption{Comparison between Compressed Sensing and Unlabeled Ordered Sensing.}
\label{fig:dual_fig}
\end{figure}
\subsection{\name}
By changing the role of $\bfy$ and $\bfx$ in the CS problem, we obtain an instance of \namesh\ in \eqref{ul_samp} as the dual problem with $\bfB=\bfA^\transp$ and $\bfS=\bfL^\transp$. This is also illustrated in  Fig.~\ref{fig:dual_fig}. In \namesh, a $k$-dim signal $\bfy$ is oversampled via the tall $n \times k$ matrix $\bfB=\bfA^\transp$, which gives $n$ measurements  $\bfz=\bfA^\transp \bfy$. The resulting over-complete measurements ($n\geq k$) in $\bfz$ are subsampled by the $m\times n$ matrix $\bfS=\bfL^\transp$, which selects only $m$ out of $n$ measurements in $\bfz$ and yields  the  unlabeled samples  $\bfx$. The goal is to recover the unknown signal $\bfy$ from the known  $(\bfx,\bfB)$, without knowing $\bfS$ explicitly. Compared with  CS, where the support (location of the nonzero values) of $\bfz$ is unknown, in \namesh\  the indices/labels of the measurements are missing. However, it is not difficult to check that, due to the special structure of $\bfS$, the relative order of the measurements is still preserved.
We define the set of all such selection matrices by $\clS_{m,n}:=\{\bfL^\transp: \bfL \in \clL_{n,m}\}$. In contrast with the lift-up matrices in $\clL_{n,m}$, which embed signals with a lower dimension $m$ in a higher dimension $n$, the selection matrices in $\clS_{m,n}$ reduce the dimensionality by sampling only $m$ out of $n$ components of their input (while keeping the order). 
The practically relevant regime of parameters for  \namesh\ is given by $k\leq m \leq n$, where one needs to take more measurements $m$ than the signal dimension $k$ to overcome the uncertainty caused by the unlabeled sensing.
Motivated by the duality between CS and \namesh, in the next section, we develop a \textit{Restricted Isometry Property} (RIP), which resembles its counterpart in CS \cite{candes2005decoding} and guarantees a stable signal recovery in \namesh. We also design an efficient low-complexity algorithm able to recover the target signal under the  RIP.

\section{Restricted Isometry Property}

\subsection{Basic Setup}
For the rest of the paper, we focus on  \namesh\   illustrated in Fig.\,\ref{fig:UOS}. We consider a $k$-dim signal $\bfy$ and an $n$-dim vector of  measurements $\bfz=\bfB \bfy$ taken via an $n\times k$  matrix $\bfB$, where $\bfB=\bfA^\transp$ with $\bfA$ being the $k\times n$ matrix in the CS variant (see Fig.~\ref{fig:dual_fig}). 
An instance of \namesh\   is given by the triple $(\bfy, \bfS, \bfB)$, where the goal is to recover the unknown signal $\bfy$ from the noisy unlabeled measurements $\bfx=\bfS \bfB \bfy+\bfw$ without having any explicit knowledge about $\bfS$ except that $\bfS\in \clS_{m,n}$. This corresponds to a variant of unlabeled sensing problem  in \eqref{ul_samp} with the set of possible  transformations $\clS$ given by $\clS_{m,n}$. In the rest of the paper, for simplicity, we drop the index $m,n$ and denote $\clS_{m,n}$ by $\clS$.
Using the $\vec$ notation, we can write 
\begin{align}\label{vec_uos}
\bfx=\vec(\bfS \bfB \bfy)+\bfw=(\bfy^\transp \otimes \bfS) \bbb +\bfw,
\end{align}
where $\bbb=\vec(\bfB)$ denotes the vector obtained by stacking the columns of $\bfB$   and where we used the well-known identity ${\vec(\bfX \bfY \bfZ)=(\bfZ^\transp \otimes \bfX) \vec(\bfY)}$ for matrices $\bfX, \bfY, \bfZ$ of appropriate dimensions. We will use the notations $\bfS \bfB \bfy$ and $(\bfy^\transp \otimes \bfS) \bbb$ interchangeably across the paper. Note that $\bbb$ induces a linear map $(\bfy^\transp \otimes \bfS)  \mapsto (\bfy^\transp \otimes \bfS) \bbb$ from the signal set ${\clH:=\{\bfy^\transp \otimes \bfS: \bfy \in \bR^k, \bfS\in \clS\}}$ into $\bR^m$.  We  define the \textit{Signal-to-Noise Ratio} (SNR) in \eqref{vec_uos} by $\snr=\frac{\|(\bfy^\transp \otimes \bfS) \bbb\|^2}{\|\bfw\|^2}=\frac{\|\bfS \bfB \bfy\|^2}{\|\bfw\|^2}$.

\subsection{Restricted Isometry Property on $\clH$}\label{sec:rip_H}
Our goal is to recover the desired signal $\bfy$. Since the signal set $\clH$ is unbounded, a requirement is that at least $\|\bfy\|$ should be feasibly recovered from the measurements $(\bfy^\transp\otimes \bfS) \bbb$. A sufficient condition for this  is the \textit{Restricted Isometry Property} (RIP) over $\clH$, which resembles a similar property  in CS \cite{candes2005decoding}. 
\begin{definition}[RIP over $\clH$]\label{RIP_H}
Let $\bfB$ be a matrix and let $\bbb=\vec(\bfB)$. The linear map $\bfH\mapsto \bfH\bbb$ induced by $\bbb$ satisfies the RIP over $\clH$  with a constant $\ripa \in (0,1)$ if 
\begin{align}\label{rip_H}
(1-\ripa) \|\bfH\|_\sfF^2 \leq \|\bfH \bbb\|^2 \leq (1+\ripa) \|\bfH\|_\sfF^2
\end{align}
holds for all $\bfH \in \clH$. \hfill $\lozenge$
\end{definition}
We prove that under suitable conditions on $m,n,k$, we can obtain a matrix $\bfB$ satisfying the RIP in \eqref{rip_H} by  sampling components of $\bfB$ i.i.d. from the Gaussian distribution. 
This is summarized in the following proposition.
\begin{proposition}\label{rip_1}
Let $\bfB$ be a random matrix with i.i.d. $\sfN(0,1)$ components and let $\bbb=\vec(\bfB)$. Then, there is a constant $c>0$ such that $\bfB$ satisfies  RIP over $\clH$ with a probability larger than $1- 2 (1+\frac{2}{\ripa})^k {n \choose m}e^{-c m \ripa^2}$.  \hfill $\square$
\end{proposition}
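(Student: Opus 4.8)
The plan is to reduce the claim to a uniform concentration estimate for a single i.i.d.\ Gaussian matrix and then pay a union bound over $\clS$. The key structural observation is that for any fixed selection matrix $\bfS\in\clS$, the product $\bfS\bfB$ is exactly the $m\times k$ submatrix of $\bfB$ obtained by keeping the $m$ distinct rows indexed by $\bfS$ (in increasing order); since the entries of $\bfB$ are i.i.d.\ $\sfN(0,1)$, so are those of $\bfS\bfB$. Combining this with $\|\bfH\bbb\|^2=\|\bfS\bfB\bfy\|^2$ and $\|\bfy^\transp\otimes\bfS\|_\sfF^2=\|\bfy\|^2\,\|\bfS\|_\sfF^2=m\,\|\bfy\|^2$, and scaling out $\|\bfy\|$ (the case $\bfy=0$ being trivial), the RIP \eqref{rip_H} is equivalent to
\begin{align}
(1-\ripa)\,m\;\le\;\|\bfS\bfB\bfu\|^2\;\le\;(1+\ripa)\,m
\end{align}
holding simultaneously for all $\bfu$ on the unit sphere $\bS^{k-1}\subset\bR^k$ and all $\bfS\in\clS$.

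Fix $\bfS$. For a fixed $\bfu\in\bS^{k-1}$ the coordinates of $\bfS\bfB\bfu$ are i.i.d.\ $\sfN(0,1)$, so $\|\bfS\bfB\bfu\|^2$ is $\chi^2_m$-distributed with mean $m$, and a standard chi-square tail bound gives an absolute constant $c_0>0$ such that $\Prob\big[\,\big|\|\bfS\bfB\bfu\|^2-m\big|>\ripa m\,\big]\le 2e^{-c_0 m\ripa^2}$ for $\ripa\in(0,1)$. I would then take a finite $\ripa$-net $\clN$ of $\bS^{k-1}$ with $|\clN|\le(1+2/\ripa)^k$ and union-bound over it, so that, outside an event of probability $2(1+2/\ripa)^k e^{-c_0 m\ripa^2}$, the two-sided estimate holds at every point of $\clN$. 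On that good event a routine net-to-sphere argument extends it to all of $\bS^{k-1}$: one first bounds the operator norm $\sup_{\|\bfu\|=1}\|\bfS\bfB\bfu\|$ by the usual bootstrap (it is at most its value over $\clN$ plus $\ripa$ times itself), and then uses this to control $\|\bfS\bfB(\bfu-\bfv)\|$ for the net point $\bfv$ closest to a given $\bfu$; this yields \eqref{rip_H} for this $\bfS$ after replacing $c_0$ by a smaller absolute constant $c$. (Equivalently, once $\bfS\bfB$ is recognized as an i.i.d.\ Gaussian matrix, one may simply quote the classical Gaussian RIP estimate for the single $k$-dimensional subspace $\bR^k$.)

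Finally, I would union-bound over the $\binom{n}{m}$ matrices in $\clS$, which multiplies the failure probability by $\binom{n}{m}$ and produces the claimed bound $1-2(1+2/\ripa)^k\binom{n}{m}e^{-cm\ripa^2}$. I do not anticipate a genuine obstacle: the ordered-sampling structure does all the work, turning $\bfS\bfB$ into a bona fide i.i.d.\ Gaussian matrix and collapsing the problem onto textbook concentration; the role of the hypotheses on $m,n,k$ is only to make the bound non-vacuous, i.e.\ to ensure $m\ripa^2$ dominates $k\log(1/\ripa)+\log\binom{n}{m}$. The single delicate point is constant bookkeeping in the net-to-sphere step — in particular, to obtain isometry constant exactly $\ripa$ one should take the net at resolution a fixed fraction of $\ripa$ and the pointwise deviation level likewise, which only alters the absolute constants $c$ and (strictly speaking) the $2$ appearing in $1+2/\ripa$.
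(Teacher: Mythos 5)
Your proposal is correct and follows essentially the same route as the paper's proof: fix $\bfS$ and a unit-norm $\bfy$, use the fact that $\bfS\bfB\bfy$ has i.i.d.\ Gaussian entries so that $\|\bfS\bfB\bfy\|^2$ concentrates at rate $e^{-cm\ripa^2}$, extend to all of $\bR^k$ via a $(1+\frac{2}{\ripa})^k$-point net with the standard net-to-operator-norm bootstrap, and finish with a union bound over the $\binom{n}{m}$ selection matrices. The only differences are cosmetic (net on the sphere versus the unit ball, explicit $\chi^2_m$ tail versus the paper's quoted Gaussian concentration and its Lemma~2.19 reference), and your remark about constant bookkeeping matches how the paper absorbs these factors into $c$.
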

\begin{proof}
Let us define the following probability event (for the random realization of $\bbb$):
\begin{align}\label{E_event_for_H}
\clE=\bigcup_{\bfH\in \clH} \Big \{\bbb:  \big | \|\bfH \bbb\|^2 - \|\bfH\|_\sfF^2\big | > \ripa \|\bfH\|_\sfF^2 \Big \}.
\end{align}
We need to prove that $\bP[\clE] \leq 2 (1+\frac{2}{\ripa})^k {n \choose m}e^{-c m \ripa^2}$. Since $\clE$ consists of the union of a continuum of events labeled with $\bfH \in \clH$, the conventional union bound is not immediately applicable. So, we  need to first quantize the labeling set $\clH$ appropriately and then apply the union bound. We do this by using the net argument (see \cite{talagrand2014upper, vybiral2016random} for further discussion on using the net technique). 
We start by first deriving a concentration bound for a fixed $\bfH=\bfy^\transp \otimes \bfS \in \clH$. From $\bfH\bbb=\bfS \bfB \bfy$ and $\bE[\|\bfH \bbb\|^2]=\|\bfH\|_{\sfF}^2$, the subevent of \eqref{E_event_for_H} corresponding to the given $\bfH$ can be written as 
\begin{align}\label{eq_otherform}
\Big \{\bfB: (1-\ripa) m \|\bfy\|^2  \leq \|\bfS \bfB \bfy\|^2 \leq (1+\ripa) m \|\bfy\|^2 \Big \},
\end{align}
where we used the fact {$\|\bfH\|_\sfF^2=m\|\bfy\|^2$}. 
Note that for a fixed $\bfS$ and $\bfy$, the vector $\bfS \bfB \bfy$ is an $m$-dim vector with i.i.d. $\sfN(0,1)$ components. From the Gaussian concentration result \cite{dubhashi2009concentration}, there is a $c>0$ such that 
\begin{align}\label{sy}
\bP\Big [  \Big \{ \bfB: \big |\|\bfS \bfB \bfy\|^2 - m \|\bfy\|^2\big |> \frac{\ripa}{2} m \|\bfy\|^2  \Big \}\Big ] \leq 2 e^{-c m\ripa^2}.
\end{align}
We first generalize this concentration result to all the vectors $\bfy$ inside the unit $l_2$ ball ${\clB_{2}^k:=\{\bfy \in \bR^k: \|\bfy\|\leq1\}}$. We  consider a discrete $\frac{\ripa}{2}$-net (grid) of minimal size $N$ over $\clB_{2}^k$ denoted by $\clN_{\ripa}=\{\bfg^1, \dots, \bfg^N\}$ that satisfies  $\max_{\bfy \in \clB_{2}^k} \min_{\bfg \in \clN_{\ripa}} \|\bfg- \bfy\| \leq \frac{\ripa}{2}$ (see \cite{vybiral2016random} for further details). 
Consider the set of $N$ spheres with centers belonging to the net $\clN_{\ripa}$  each having a radius $\frac{\ripa}{2}$. All these spheres lie inside a sphere of radius $1+\frac{\ripa}{2}$ centered at the origin.  Thus, using the volume inequality $N (\frac{\ripa}{2})^k \vol(\clB_2^k) \leq (1+\frac{\ripa}{2})^k \vol(\clB_2^k)$ for the  $k$-dim unit $l_2$ ball $\clB_{2}^k$, we obtain that such a minimal net consists of at most $N \leq (1+\frac{2}{\ripa})^k$ points. We also have  (see  Lemma 2.19. in \cite{vybiral2016random})
\begin{align}\label{net_app}
\max_{\bfg \in \clN_\ripa}\|\bfS \bfB \bfg\| \leq \max_{\bfy \in \clB_{2}^k} \|\bfS \bfB \bfy\| \leq (1+\frac{\ripa}{2}) \max_{\bfg \in \clN_\ripa}\|\bfS \bfB \bfg\|,
\end{align}
which implies that the operator norm of $\bfS \bfB$ can be well approximated via the points in $\clN_{\ripa}$. From \eqref{net_app}, we obtain
\begin{align}\label{s_bound}
\bP\Big [& \bigcup_{\bfy \in\clB_{2}^k} \Big \{ \bfB: \big |\|\bfS \bfB \bfy\|^2 - m \|\bfy\|^2\big |> m \|\bfy\|^2 \ripa  \Big \}  \Big ] \nonumber\\
&\leq \bP\Big [ \bigcup_{\bfg \in \clN_{\ripa}} \Big \{ \bfB: \big|\|\bfS \bfB \bfg\|^2 - m \|\bfg\|^2\big |> m \|\bfg\|^2\frac{\ripa}{2}  \Big \} \Big ]\nonumber\\
& \stackrel{(i)}{\leq} |\clN_{\ripa}| e^{-c m\ripa^2}=2 (1+\frac{2}{\ripa})^k e^{-c m\ripa^2}.
\end{align}
where in $(i)$ we applied the union bound over $\clN_{\ripa}$ and used the bound \eqref{sy}, which holds for any $\bfy$ and in particular for any $\bfg \in \clN_\ripa$.
Finally applying the union bound over all ${n \choose m}$ possible selection matrices $\bfS\in \clS$, we have 
\begin{align*}
\bP\Big [ &\bigcup_{\bfy\in \clB_2^k ,\bf\bfS \in \clS}  \Big \{ \bfB: \big|\|\bfS \bfB \bfy\|^2 - m\|\bfy\|^2 \big|> m \|\bfy\|^2 \ripa   \Big \}\Big ]\\
& \leq 2|\clS| (1+\frac{2}{\ripa})^k {n \choose m}e^{-c m \ripa^2} =2 (1+\frac{2}{\ripa})^k {n \choose m}e^{-c m \ripa^2},
\end{align*}
which from \eqref{E_event_for_H} implies that $\bP[\clE] \leq 2 (1+\frac{2}{\ripa})^k {n \choose m}e^{-c m \ripa^2}$. This completes the proof.  
\end{proof}

\subsection{Restricted Isometry Property on $\clH-\clH$}\label{sec:rip_H-H}
In terms of signal recovery, we will need a stronger RIP over the Minkowski difference of the signal set $\clH$ defined by $\clH-\clH:=\{\bfH-\bfH': \bfH, \bfH' \in \clH\}$. In this section, our goal is to develop a suitable notion of RIP over $\clH-\clH$. Our approach in this section follows from similar techniques in \cite{vershynin2015estimation}. 

Let $\bfB$ be the $m\times n$ measurement matrix and let $\bbb=\vec(\bfB)$ as before. Similar to the RIP over $\clH$, for the RIP over $\clH-\clH$, it seems reasonable to impose the condition
\small
\begin{align}\label{RIP_diff}
{(1-\ripb)\|\bfH-\bfH'\|_\sfF^2 \leq \|(\bfH-\bfH')\bbb\|^2 \leq (1+\ripb)\|\bfH-\bfH'\|_\sfF^2},
\end{align}
\normalsize
with some $\ripb \in (0,1)$, to hold  for all $\bfH, \bfH' \in \clH$. As we will  see in  Section \ref{sec:sig_rec}, an RIP condition as in \eqref{RIP_diff} is sufficient to guarantee a stable signal recovery for \namesh. In this section, as in Section \ref{sec:rip_H}, we attempt to construct a matrix $\bfB$ satisfying \eqref{RIP_diff} via sampling the components of $\bfB$ randomly. 
Hence, we need to prove that,  in a suitable regime of parameters $m,n,k$, and $\ripb$, any realization of the matrix  $\bfB$ and as a result $\bbb$ satisfies \eqref{RIP_diff} for all $\bfH,\bfH'\in \clH$ with a very high probability. Unfortunately, such a universal concentration result over $\clH - \clH$ does not immediately hold as can be seen from the following simple example\footnote{This should be contrasted with  classical CS, where $\clH$ is the class of $k$-sparse $n$-dim signals, thus, $\clH-\clH$ is a subset of $2k$-sparse signals, and deriving the RIP over $\clH$ and $\clH-\clH$ requires quite similar techniques. In \namesh, in contrast, although it is easy to derive an RIP over $\clH$, obtaining a suitable notion of  RIP over $\clH-\clH$ is more challenging.}.
\begin{example}\label{example_rip}
Let $\bfB$ be a random matrix with i.i.d. $\sfN(0,1)$ components. 
Consider the signals $\bfH=\bfy^\transp \otimes \bfS$ and $\bfH'={\bfy'}^\transp \otimes \bfS'$, where $\bfy=\bfy'$ and where all the rows of $\bfS$ and $\bfS'$ are similar except the last row. For such a case, we have
\begin{align}
(\bfH-\bfH')\bbb=\bfS \bfB \bfy-\bfS'\bfB \bfy'=(0,0, \dots, 0, g)^\transp,\label{non_conc_1}
\end{align}
where $g=(\bfB_{r_m,:}  - \bfB_{r'_m,:}) \bfy$, where $r_m, r'_m \in [n]$ denote the indices of the last rows selected by $\bfS$ and $\bfS'$ respectively. From \eqref{non_conc_1}, it results that for the given $\bfH,\bfH'$, $\|(\bfH-\bfH')\bbb\|^2=|g|^2$, where ${g\sim \sfN(0, 2\|\bfy\|^2)}$ from the independence of the rows of $\bfB$. 
However, $|g|^2$ does not concentrate very well around its mean $\bE[|g|^2]$. This implies that the universal concentration result over $\clH-\clH$ in \eqref{RIP_diff} can not hold for any $\ripb \in (0,1)$. \hfill $\lozenge$
\end{example}

From Example \ref{example_rip}, it is seen that we can not hope to construct, with a high probability,  a matrix $\bfB$ satisfying RIP in \eqref{RIP_diff} under the random sampling of components of $\bfB$. A direct inspection in the Example \ref{example_rip}, however, reveals that the main obstacle to establishing the concentration \eqref{RIP_diff} are those signals $\bfH, \bfH'$ where $\bfy \approx \bfy'$ and $\bfS\approx \bfS'$, thus, $\bfH\approx \bfH'$. 
Since we use the RIP to guarantee a universal stable recovery for all the signals in $\clH$, intuitively speaking, the troublesome cases $\bfH \approx \bfH'$ for \eqref{RIP_diff} are not problematic at all in terms of signal recovery. 
To take this into account, we develop a modified version of  RIP over $\clH -\clH$ in \eqref{RIP_diff} up to a fixed  precision. 

We need some notation first.
We  define the following metric over the signal set $\clH$  
\begin{align}
d_{\bfH, \bfH'}&={\|\bfH-\bfH'\|_\sfF}={\sqrt{\|\bfH\|_\sfF^2 + \|\bfH'\|_\sfF^2 - 2\inp{\bfH}{\bfH'} }}\nonumber\\
&=\sqrt{m\|\bfy\|^2 + m\|\bfy'\|^2 -2m\nu_{\bfS,\bfS'} \inp{\bfy}{\bfy'}},\label{eq:metric_H}
\end{align} 
where we used $\|\bfH\|_\sfF^2=m\|\bfy\|^2$, $\|\bfH'\|_\sfF^2=m\|\bfy'\|^2$, and that
\begin{align*}
\inp{\bfH}{\bfH'}&=\tr(\bfH^\transp \bfH')=\tr\Big ( (\bfy^\transp \otimes \bfS)(\bfy'\otimes {\bfS'}^\transp) \Big )\\
&=\inp{\bfy}{\bfy'} \inp{\bfS}{\bfS'},
\end{align*}
where we also defined the \textit{similarity metric} between  selection matrices $\bfS$ and $\bfS'$ by
\begin{align}\label{nu_def}
\nu_{\bfS,\bfS'}=\frac{\inp{\bfS}{\bfS'}}{m}=\frac{\tr(\bfS^\transp \bfS')}{m} \in [0,1],
\end{align}
which gives the fraction of similar rows in $\bfS$ and $\bfS'$. 
We  define the \textit{Relaxed} RIP over $\clH-\clH$ up to a precision $\ripc$ as follows.
\begin{definition}[R-RIP over $\clH-\clH$]\label{RIP_diff2}
Let $\bfB$ be a  matrix and let $\bbb=\vec(\bfB)$. The linear map $\bfH\mapsto \bfH\bbb$ induced by $\bbb$ satisfies the \textit{Relaxed RIP} (R-RIP)  over $\clH-\clH$ with a parameter $\ripb$ and a precision $\ripc$ if 
\begin{align*}
(1-\ripb)\|\bfH-\bfH'\|_\sfF^2 \leq \|(\bfH-\bfH')\bbb\|^2 \leq (1+\ripb)\|\bfH-\bfH'\|_\sfF^2,
\end{align*}
holds for all $\bfH, \bfH'$, with $d_{\bfH, \bfH'} \geq \ripc \max\{\|\bfH\|_\sfF, \|\bfH'\|_\sfF\}$. \hfill $\lozenge$
\end{definition}

The next proposition shows that, in a suitable regime of parameters $m,n,k$, and $\ripb, \ripc$, a matrix $\bfB$ with $\sfN(0,1)$ elements satisfies R-RIP over $\clH-\clH$  with a high probability. 
\begin{proposition}\label{rip_2}
Let $\bfB$ be a random matrix with i.i.d. $\sfN(0,1)$ elements and let $\bbb=\vec(\bfB)$. There is a constant $c>0$ such that $\bfB$ satisfies  R-RIP on $\clH-\clH$ with parameters $\ripb, \ripc$ with a probability larger than $1- 2 (1+\frac{2}{\ripb})^{2k} {n \choose m}^2 e^{-c m \ripb^2\ripc^2}$.  \hfill $\square$
\end{proposition}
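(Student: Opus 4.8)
The plan is to follow the net-argument template used in the proof of Proposition~\ref{rip_1}, but now over the product set $\clH\times\clH$ and taking care to avoid the degenerate pairs $\bfH\approx\bfH'$ excluded by the precision $\ripc$ in Definition~\ref{RIP_diff2}. First I would fix a pair of selection matrices $\bfS,\bfS'\in\clS$ and a pair of signal vectors $\bfy,\bfy'\in\bR^k$, and observe that $(\bfH-\bfH')\bbb=\bfS\bfB\bfy-\bfS'\bfB\bfy'$ is a jointly Gaussian vector whose entries are independent (a row of $\bfB$ contributes to at most one of $\bfS\bfB\bfy$ and $\bfS'\bfB\bfy'$ in a given coordinate, or to both with a fixed linear combination), so $\|(\bfH-\bfH')\bbb\|^2$ is a sum of independent scaled $\chi^2_1$ variables with mean $\|\bfH-\bfH'\|_\sfF^2=d_{\bfH,\bfH'}^2$. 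Applying the Gaussian concentration bound from \cite{dubhashi2009concentration} as in \eqref{sy} gives, for the fixed $(\bfy,\bfy',\bfS,\bfS')$,
\begin{align*}
\bP\Big[\big|\|(\bfH-\bfH')\bbb\|^2-d_{\bfH,\bfH'}^2\big|>\tfrac{\ripb}{2}\,d_{\bfH,\bfH'}^2\Big]\leq 2e^{-c\,m\ripb^2},
\end{align*}
where $c>0$ is absolute and we used that $\|(\bfH-\bfH')\bbb\|^2/d_{\bfH,\bfH'}^2$ is a normalized quadratic form in at most $2m$ i.i.d.\ standard Gaussians.

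Next I would discretize. Since the bound must scale with $\max\{\|\bfH\|_\sfF,\|\bfH'\|_\sfF\}=\sqrt m\,\max\{\|\bfy\|,\|\bfy'\|\}$, by homogeneity it suffices to prove the R-RIP for all $(\bfy,\bfy')$ with $\max\{\|\bfy\|,\|\bfy'\|\}\le 1$ and $d_{\bfH,\bfH'}\ge \ripc\sqrt m$. For each of the ${n\choose m}^2$ ordered pairs $(\bfS,\bfS')$ I take the same minimal $\tfrac{\ripb}{2}$-net $\clN_\ripb$ of $\clB_2^k$ used before, of size $|\clN_\ripb|\le(1+\tfrac{2}{\ripb})^k$, and form the product net $\clN_\ripb\times\clN_\ripb$ of size at most $(1+\tfrac{2}{\ripb})^{2k}$. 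Applying the union bound over this product net and over all pairs $(\bfS,\bfS')$, the event that $\big|\|(\bfH-\bfH')\bbb\|^2-d_{\bfH,\bfH'}^2\big|>\tfrac{\ripb}{2}d_{\bfH,\bfH'}^2$ fails at \emph{some} net point has probability at most $2(1+\tfrac{2}{\ripb})^{2k}{n\choose m}^2 e^{-c m\ripb^2}$. On the complementary (good) event one then has to transfer the estimate from a net point $(\bfg,\bfg')$ to an arbitrary admissible $(\bfy,\bfy')$ with the same $(\bfS,\bfS')$: writing $\bfH_\bfg=\bfg^\transp\otimes\bfS$, $\bfH_{\bfg'}={\bfg'}^\transp\otimes\bfS'$, the triangle inequality in Frobenius norm gives $\big|\|(\bfH-\bfH')\bbb\|-\|(\bfH_\bfg-\bfH_{\bfg'})\bbb\|\big|\le\|(\bfH-\bfH_\bfg)\bbb\|+\|(\bfH'-\bfH_{\bfg'})\bbb\|$, and each of those is controlled by the operator-norm estimate \eqref{net_app} applied to $\bfS\bfB$ and $\bfS'\bfB$ respectively, i.e.\ bounded by $O(\ripb)\sqrt m\,\|\bfg-\bfy\|\le O(\ripb\cdot\ripb)\sqrt m$. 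Since $d_{\bfH,\bfH'}\ge\ripc\sqrt m$ on the admissible set, these perturbations are a fraction $O(\ripb/\ripc)$ of $d_{\bfH,\bfH'}$, which after adjusting constants absorbs into the $\ripb$-slack; the appearance of $\ripc$ in the exponent $e^{-c m\ripb^2\ripc^2}$ comes from choosing the net resolution proportional to $\ripb\ripc$ rather than $\ripb$ so that the transfer error is genuinely $O(\ripb)\,d_{\bfH,\bfH'}$ uniformly, and re-running the concentration bound \eqref{sy} with tolerance $\ripb\ripc$ in place of $\ripb$. Combining the two regimes and squaring back the norms yields the R-RIP with parameter $\ripb$ on the stated event, whose complement has probability at most $2(1+\tfrac{2}{\ripb})^{2k}{n\choose m}^2 e^{-c m\ripb^2\ripc^2}$.

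The step I expect to be the main obstacle is the transfer from the net to arbitrary admissible points, specifically making precise that the lower bound $d_{\bfH,\bfH'}\ge\ripc\max\{\|\bfH\|_\sfF,\|\bfH'\|_\sfF\}$ is exactly what is needed to keep the relative net-approximation error bounded. Unlike in Proposition~\ref{rip_1}, here the quantity being controlled, $d_{\bfH,\bfH'}$, can be much smaller than $\|\bfH\|_\sfF$ and $\|\bfH'\|_\sfF$ individually (this is precisely Example~\ref{example_rip}), so a net at resolution $\ripb$ in $\bfy$-space induces an \emph{absolute} error of order $\ripb\sqrt m$ in $\|(\bfH-\bfH')\bbb\|$ that need not be small relative to $d_{\bfH,\bfH'}$. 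The precision parameter $\ripc$ rescues this by restricting attention to pairs with $d_{\bfH,\bfH'}\ge\ripc\sqrt m\max\{\|\bfy\|,\|\bfy'\|\}$, and choosing the net resolution $\asymp\ripb\ripc$ makes the relative error $O(\ripb)$; bookkeeping this dependence carefully through the concentration exponent is what produces the $\ripb^2\ripc^2$ in the bound, and is the only genuinely new ingredient compared with the proof of Proposition~\ref{rip_1}.
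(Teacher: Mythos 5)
Your first, fixed-pair concentration step contains the essential gap. The coordinates of $(\bfH-\bfH')\bbb=\bfS\bfB\bfy-\bfS'\bfB\bfy'$ are \emph{not} independent: a row of $\bfB$ that $\bfS$ selects in coordinate $i$ may be selected by $\bfS'$ in a different coordinate $j\neq i$, which creates the off-diagonal covariance entries $-\inp{\bfy}{\bfy'}$ appearing in \eqref{rip_diff_dum1}. Because of this, $\|(\bfH-\bfH')\bbb\|^2$ is a quadratic form in correlated Gaussians, and your claimed bound $2e^{-cm\ripb^2}$ for relative deviation $\ripb/2$ is false in general: the effective number of degrees of freedom is governed by $d_{\bfH,\bfH'}^2$ rather than by $m$ (in the worst case the covariance mass sits on roughly $d_{\bfH,\bfH'}^2/4$ coordinates, each with singular value $4$, which is exactly the content of Lemmas \ref{max_eigen_lemma} and \ref{lam_boundry} in the paper). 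This is the same phenomenon as Example \ref{example_rip}: when $d_{\bfH,\bfH'}$ is small compared with $\sqrt{m}$, the relative fluctuations simply do not concentrate at rate $e^{-cm\ripb^2}$. Even after restricting to admissible pairs with $d_{\bfH,\bfH'}\geq \ripc\sqrt m$, the best achievable per-pair exponent is of order $m\ripc^2\ripb^2$, so the role of the precision $\ripc$ is already decisive at the single-pair stage, not merely in the net-transfer stage as your write-up suggests. Handling this correlated quadratic form (Gershgorin bound on the spectrum of $\Sigmam$, then a worst-case singular-value configuration and a Chernoff bound with the floored tolerance $\ripb\,(d_{\bfH,\bfH'}^2\vee m\ripc^2)$) is the heart of the paper's proof and is missing from yours.

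Relatedly, your mechanism for producing the factor $\ripc^2$ in the exponent---refining the net to resolution $\asymp\ripb\ripc$ and rerunning \eqref{sy} with tolerance $\ripb\ripc$---does not yield the stated bound: it would replace the prefactor $(1+\frac{2}{\ripb})^{2k}$ by $(1+\frac{2}{\ripb\ripc})^{2k}$, and it attributes the loss to discretization rather than to the weak concentration of nearly coincident pairs. The paper instead keeps the $\tfrac{\ripb}{2}$-net of size $(1+\frac{2}{\ripb})^{k}$ per vector, but formulates the deviation event ($\clE_2$ in Appendix \ref{app:rip_2}) over \emph{all} pairs with the absolute floor $m\ripc^2$ in the tolerance, so that net pairs with tiny $d_{\bfH,\bfH'}$ remain controlled; the worst case of the resulting exponent is attained at $d_{\bfH,\bfH'}^2=m\ripc^2$ and gives $E_0\geq \frac{m\ripc^2\ripb^2}{48}$, which is where $e^{-cm\ripb^2\ripc^2}$ actually comes from. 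Your identification of the transfer-to-the-net step as delicate is correct in spirit, but the fix is the floored tolerance in the concentration event, not a finer net.
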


\begin{proof}
	Proof in Appendix \ref{app:rip_2}. 
\end{proof}

\begin{remark}\label{eps_remark}
By taking the union bound over the concentration results derived in Proposition \ref{rip_1} and \ref{rip_2} and using the fact that the concentration in Proposition \ref{rip_1} is much sharper, we will assume in the rest of the paper that $\bfB$ satisfies both RIP over $\clH$ with a parameter $\ripa$ and R-RIP over $\clH-\clH$ with parameters $\ripb, \ripc$, with a probability approximately given by $1- 2 (1+\frac{2}{\ripb})^{2k} {n \choose m}^2 e^{-c m \ripb^2\ripc^2}$. In particular, $\ripa$ can always be taken much lower than $\ripb$ ($\ripa\ll \ripb$) without affecting this bound.  \hfill $\lozenge$ 
\end{remark}

For  a suitably selected set of parameters $n,m,k,$ and $\ripb, \ripc$, $\bfB$ satisfies R-RIP with a high probability if $2 (1+\frac{2}{\ripb})^{2k} {n \choose m}^2 e^{-c m \ripb^2\ripc^2}$ is small. Asymptotically as $m,n,k \to \infty$, this  is satisfied provided that 
\begin{align}\label{rec_cond}
\frac{k}{n} \log(1+\frac{2}{\ripb}) +\frac{1}{n} \log{n \choose m}- \frac{c m  \ripb^2 \ripc^2 }{2n} <0.
\end{align}
In particular, if ${\frac{m}{n} \to \rho}$ and ${\frac{k}{n} \to \kapp}$, applying the Stirling's approximation for ${n \choose m}$ \cite{dubhashi2009concentration}, the condition \eqref{rec_cond} takes the form 
\begin{align}\label{eq:succ_cond}
\kapp \log(1+\frac{2}{\ripb}) + h(\theta) - \frac{c \ripb^2 \ripc^2}{2} (1-\theta) <0,
\end{align}
where ${\theta:=1-\rho}=\frac{n-m}{n}$ denotes the fractional sampling loss and where ${h(\theta)=-\theta \log(\theta) - (1-\theta) \log(1-\theta)}$ denotes the entropy function (computed with the natural logarithm).
For a given  R-RIP parameters $\ripb, \ripc$, \eqref{eq:succ_cond} is satisfied for a sufficiently small $\theta$ and $\kapp$ at the cost of incurring a large oversampling factor $\frac{n}{k}=\frac{1}{\kapp}$. When only $o(n)$ number of samples are missing, i.e., $m=n-o(n)$, thus, ${\theta \to 0}$,  an oversampling of the order 
\begin{align}\label{ovs_scale}
\frac{n}{k}\approx \frac{2\log(1+\frac{2}{\ripb})}{c\ripb^2 \ripc^2}
\end{align}
is sufficient to compensate for the missing labels.

\subsection{Guarantee for Signal Recovery}\label{sec:sig_rec}
Let $\check{\bfH}=\check{\bfy}^\transp \otimes \check{\bfS}\in \clH$ and let $\check{\bfx}=\check{\bfS} \bfB  \check{\bfy}+\bfw=\check{\bfH} \bbb +\bfw$ be the set of $m$ noisy linear measurements  taken via $\bfB$, where $\bfw$ denotes the measurement noise. 
We define the measurement SNR by $\snr =\frac{\|\check{\bfH}\bbb\|^2}{\|\bfw\|^2}=\frac{\|\check{\bfS}\bfB\check{\bfy}\|^2}{\|\bfw\|^2}$ as before. 
We consider the following recovery problem
\begin{align}\label{algor}
\widehat{\bfH}=\argmin_{\bfH \in \clH} \|\bfH \bbb - \check{\bfx}\|.
\end{align}
\begin{theorem}\label{main_thm}
	Let $\bfB$ be a random  matrix that  satisfies RIP over $\clH$ with a parameter $\ripa$ and R-RIP over $\clH-\clH$ with parameters $\ripb, \ripc$. Let $\check{\bfH}$ and $\check{\bfx}$ be as before and let $\widehat{\bfH}$ be an estimate of $\check{\bfH}$ obtained  from \eqref{algor}. Then, $\|\widehat{\bfH}- \check{\bfH}\|_\sfF \leq \chi\|\check{\bfH}\|_\sfF$, where $\chi=\max \big \{ \ripc\frac{\sqrt{1+\ripa}}{\sqrt{1-\ripa}} (1+\frac{2}{\sqrt{\snr}}) ,  \frac{2}{\sqrt{\snr(1-2\ripb)}} \big \}$. \hfill $\square$
\end{theorem}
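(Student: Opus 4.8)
The plan is to run the usual RIP-based stability argument, with the one wrinkle that the Relaxed RIP of Definition~\ref{RIP_diff2} only applies to pairs $\bfH,\bfH'$ that are not too close, so a case split on the separation $d_{\widehat{\bfH},\check{\bfH}}$ is unavoidable. First I would extract the single consequence of optimality in \eqref{algor}: since $\check{\bfH}\in\clH$ is feasible and $\check{\bfH}\bbb-\check{\bfx}=-\bfw$, the minimizer satisfies $\|\widehat{\bfH}\bbb-\check{\bfx}\|\le\|\bfw\|$, hence $\|(\widehat{\bfH}-\check{\bfH})\bbb\|\le 2\|\bfw\|$ and $\|\widehat{\bfH}\bbb\|\le\|\check{\bfH}\bbb\|+2\|\bfw\|$ by the triangle inequality. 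Feeding in $\|\bfw\|=\|\check{\bfH}\bbb\|/\sqrt{\snr}$ together with the RIP over $\clH$ (Definition~\ref{RIP_H}), which gives $\|\check{\bfH}\bbb\|\le\sqrt{1+\ripa}\,\|\check{\bfH}\|_\sfF$ and $\|\widehat{\bfH}\bbb\|\ge\sqrt{1-\ripa}\,\|\widehat{\bfH}\|_\sfF$, I would derive the a priori bound
\[
\|\widehat{\bfH}\|_\sfF \le \frac{\sqrt{1+\ripa}}{\sqrt{1-\ripa}}\Bigl(1+\tfrac{2}{\sqrt{\snr}}\Bigr)\,\|\check{\bfH}\|_\sfF ,
\]
so that $\max\{\|\widehat{\bfH}\|_\sfF,\|\check{\bfH}\|_\sfF\}$ is bounded by the same right-hand side (its prefactor being at least $1$).

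Next I would split on the dichotomy in Definition~\ref{RIP_diff2}. \emph{If} $d_{\widehat{\bfH},\check{\bfH}}<\ripc\max\{\|\widehat{\bfH}\|_\sfF,\|\check{\bfH}\|_\sfF\}$, then $\|\widehat{\bfH}-\check{\bfH}\|_\sfF=d_{\widehat{\bfH},\check{\bfH}}$ is, by the a priori bound above, at most $\ripc\frac{\sqrt{1+\ripa}}{\sqrt{1-\ripa}}(1+\frac{2}{\sqrt{\snr}})\|\check{\bfH}\|_\sfF$, which is the first entry of the maximum defining $\chi$. \emph{Otherwise} $d_{\widehat{\bfH},\check{\bfH}}\ge\ripc\max\{\|\widehat{\bfH}\|_\sfF,\|\check{\bfH}\|_\sfF\}$, so the R-RIP applies to $\widehat{\bfH}-\check{\bfH}$ and
\[
(1-\ripb)\|\widehat{\bfH}-\check{\bfH}\|_\sfF^2 \le \|(\widehat{\bfH}-\check{\bfH})\bbb\|^2 \le 4\|\bfw\|^2 \le \frac{4(1+\ripa)}{\snr}\|\check{\bfH}\|_\sfF^2 ;
\]
using $\frac{1+\ripa}{1-\ripb}\le\frac{1}{1-2\ripb}$, valid for $\ripa\le\ripb$ (legitimate since Remark~\ref{eps_remark} lets us take $\ripa\ll\ripb$), this yields $\|\widehat{\bfH}-\check{\bfH}\|_\sfF\le\frac{2}{\sqrt{\snr(1-2\ripb)}}\|\check{\bfH}\|_\sfF$, the second entry of $\chi$. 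The worst of the two cases is the claimed bound.

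The part needing the most care will be the a priori control of $\|\widehat{\bfH}\|_\sfF$ in the close case: because the image $\{\bfH\bbb:\bfH\in\clH\}$ is not convex, no projection-type inequality is available, so one is stuck with the lossy estimate $\|(\widehat{\bfH}-\check{\bfH})\bbb\|\le 2\|\bfw\|$, and one must verify that the first case is exactly the negation of the R-RIP hypothesis so that the two cases are genuinely exhaustive. A secondary, purely bookkeeping subtlety is matching the RIP constants to the stated form of $\chi$: this is where the freedom $\ripa\ll\ripb$ from Remark~\ref{eps_remark} is used, to absorb the $1+\ripa$ coming from the RIP on $\clH$ into the $1-2\ripb$ in the denominator of the second term.
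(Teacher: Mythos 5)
Your proposal is correct and follows essentially the same route as the paper's proof: the optimality of $\widehat{\bfH}$ gives $\|(\widehat{\bfH}-\check{\bfH})\bbb\|\le 2\|\bfw\|$, the RIP over $\clH$ yields the a priori bound on $\max\{\|\widehat{\bfH}\|_\sfF,\|\check{\bfH}\|_\sfF\}$, and the same dichotomy on $d_{\widehat{\bfH},\check{\bfH}}$ versus $\ripc\max\{\|\widehat{\bfH}\|_\sfF,\|\check{\bfH}\|_\sfF\}$ produces the two entries of $\chi$, with $\ripa\ll\ripb$ (Remark~\ref{eps_remark}) absorbing the constants into $1-2\ripb$ exactly as in the paper. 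The only differences are immaterial bookkeeping choices (e.g.\ using $\sqrt{1+\ripa}$ rather than $1/\sqrt{1-\ripa}$ in the far case), so nothing further is needed.
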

\begin{proof}
	Since $\widehat{\bfH}$ is the solution of \eqref{algor} and $\check{\bfH}$ itself is feasible, we have that 
	\begin{align}
	\|\widehat{\bfH} \bbb -\check{\bfx}\| \leq \|\check{\bfH} \bbb -\check{\bfx}\|=\|\bfw\|.
	\end{align}
	From triangle inequality, {$\|(\widehat{\bfH} - \check{\bfH}) \bbb\| \leq \|\bfw\| + \|\bfw\|=2\|\bfw\|$}, thus, $\|\widehat{\bfH} \bbb\| \leq \| \check{\bfH} \bbb\| + 2 \|\bfw\|= \| \check{\bfH} \bbb\|(1+ \frac{2}{\sqrt{\snr}})$. This yields 
	\begin{align}
	\|\widehat{\bfH}\|_{\sfF} &\stackrel{(i)}{\leq} \frac{\|\widehat{\bfH} \bbb\|}{\sqrt{1-\ripa}} \leq  \frac{\|\check{\bfH}\bbb\|}{\sqrt{1-\ripa}} (1+\frac{2}{\sqrt{\snr}})\\
	&\stackrel{(ii)}{\leq} \|\check{\bfH}\|_\sfF \frac{\sqrt{1+\ripa}}{\sqrt{1-\ripa}} (1+\frac{2}{\sqrt{\snr}}),\label{rec_dumm_1}
	\end{align}
	where in $(i)$ and $(ii)$ we applied  RIP over $\clH$ to $\widehat{\bfH}$ and $\check{\bfH}$ respectively. From \eqref{rec_dumm_1}, it results that 
	\begin{align}
	\max\{\|\widehat{\bfH}\|_{\sfF}, \|\check{\bfH}\|_{\sfF}\} \leq \|\check{\bfH}\|_\sfF \frac{\sqrt{1+\ripa}}{\sqrt{1-\ripa}} (1+\frac{2}{\sqrt{\snr}}).\label{rec_dumm_2}
	\end{align}
	If $d_{\widehat{\bfH}, \check{\bfH}} \leq \ripc \max\{\|\widehat{\bfH}\|_\sfF, \|\check{\bfH}\|_\sfF\}$,  \eqref{rec_dumm_2} yields
	\begin{align}
	\frac{\|\widehat{\bfH}- \check{\bfH}\|_\sfF}{ \|\check{\bfH}\|_\sfF}=\frac{d_{\widehat{\bfH}, \check{\bfH}}}{ \|\check{\bfH}\|_\sfF} \leq \ripc \frac{\sqrt{1+\ripa}}{\sqrt{1-\ripa}} (1+\frac{2}{\sqrt{\snr}}).\label{rec_dumm_3}
	\end{align}
	Otherwise, if $d_{\widehat{\bfH}, \check{\bfH}} \geq \ripc \max\{\|\widehat{\bfH}\|_\sfF, \|\check{\bfH}\|_\sfF\}$, R-RIP over $\clH-\clH$ holds for $\widehat{\bfH}$ and $\check{\bfH}$, where we obtain 
	\begin{align*}
	\sqrt{1-\ripb}\frac{\|\check{\bfH}-\widehat{\bfH}\|_\sfF}{\|\check{\bfH}\|_\sfF}& \stackrel{(a)}{\leq} \frac{\|(\check{\bfH}-\widehat{\bfH})\bbb\|}{\|\check{\bfH}\|_\sfF} \leq \frac{2\|\bfw\|}{\|\check{\bfH}\|_\sfF}\\
	&\stackrel{(b)}{\leq} \frac{2\|\bfw\|}{\|\check{\bfH} \bbb\| \sqrt{1-\ripa}}=\frac{2}{\sqrt{\snr(1-\ripa)}},
	\end{align*}
	where in $(a)$ we applied R-RIP to $\check{\bfH}-\widehat{\bfH}$, and where in $(b)$ we used the RIP over $\clH$ for $\check{\bfH}$. 
	After simplification, we have  
	\begin{align}
	\frac{\|\check{\bfH}-\widehat{\bfH}\|_\sfF}{\|\check{\bfH}\|_\sfF} \leq  \frac{2}{\sqrt{\snr(1-\ripb)(1-\ripa)}} \stackrel{(i)}{\leq} \frac{2}{\sqrt{\snr(1-2\ripb)}},\label{rec_dumm_4}
	\end{align}
	where in $(i)$ we used $\ripa \ll \ripb$ (see Remark \ref{eps_remark}). Combining \eqref{rec_dumm_3} and \eqref{rec_dumm_4} completes the proof.
\end{proof}

\begin{remark}\label{feas_remark}
Theorem \ref{main_thm} continues to hold for any other estimate $\widehat{\bfH}$ that merely satisfies the feasibility condition 
\begin{align}\label{feas_cond}
\widehat{\bfH} \in \{\bfH: \|\bfH \bbb - \check{\bfx}\| \leq \eta \|\bfw\|\}, 
\end{align}
for any $\eta\geq 1$, where it yields $\|\check{\bfH}-\widehat{\bfH}\|_\sfF \leq  \chi \|\check{\bfH}\|_\sfF$, with a $\chi=\max \big \{ \ripc\frac{\sqrt{1+\ripa}}{\sqrt{1-\ripa}} (1+\frac{\eta+1}{\sqrt{\snr}}) ,  \frac{\eta+1}{\sqrt{\snr(1-2\ripb)}} \big \}$. \hfill $\lozenge$
\end{remark}

Theorem \ref{main_thm} provides a recovery guarantee for any estimate $\widehat{\bfH}$ obtained from \eqref{algor} or \eqref{feas_cond} when $\bfB$ satisfies  R-RIP. An implicit requirement, however, is to have an efficient low-complexity algorithm to find such an estimate $\widehat{\bfH}$. 
In the next section, we design such a low-complexity algorithm via \textit{Alternating Minimization}.

\section{Recovery Algorithm}\label{sec:rec_alg}

Let ${\check{\bfy}}$ be a signal and let ${\check{\bfx}}={\check{\bfS}} \bfB {\check{\bfy}} +\bfw$ be the vector of noisy unlabeled measurements where $\check{\bfS} \in \clS$. We define the following cost function for the recovery of ${\check{\bfy}}$ as in \eqref{algor}:
\begin{align}
f(\bfS, \bfy)=\|{\check{\bfx}}-\bfS \bfB \bfy\|^2, \ \bfS \in \clS, \ \bfy \in \clY,\label{cost_func}
\end{align}
where ${\clY:=\bR^k}$. 
For a fixed $\bfB$ and i.i.d. Gaussian noise $\bfw$, the minimizer of $f(\bfS, \bfy)$ denoted by  $(\bfS^*,\bfy^*)$ gives the maximum likelihood (ML) estimate of $({\check{\bfS}},{\check{\bfy}})$ and consequently the ML estimate $\bfy^*$ of the desired signal ${\check{\bfy}}$ \cite{casella2002statistical}. Finding the ML estimate $\bfy^*$, however, requires a joint optimization over $(\bfS,\bfy)$. This seems to require searching over all ${n \choose m}$ possible $\bfS\in\clS$, which might be unfeasible for large $n$ and $m$. Here, instead of joint search over $\bfS$ and $\bfy$, we use an iterative alternating minimization with respect to $\bfy$ and $\bfS$  to reduce the complexity.  
Our proposed algorithm alternates between estimating a suitable selection matrix in $\bfS$ and a target signal $\bfy$ (\textit{Alternating Minimization}) and resembles the \textit{Iterative Hard  Thresholding} (IHT) algorithm proposed for signal recovery in CS \cite{blumensath2009iterative}, which also alternates between finding a suitable support $\bfL$ and a suitable vector  of nonzero coefficients $\bfx$ with the notation introduced here (see Fig.\,\ref{fig:dual_fig}). This emphasizes further the underlying duality between \namesh\ and CS.
We call our algorithm \algname.
We initialize \algname\  with a random $\bfS^1 \in \clS$ and define $(\bfS^t,\bfy^t)$ as the estimate of $({\check{\bfS}},\check{\bfy})$ obtained by \algname\ at iteration $t=1,2, \dots$, via the following sequential projection operations
\begin{align}
\bfS^t&\mapsto \bfy^t=\argmin_{\bfy \in\clY} f(\bfS^t,\bfy),\label{y_upd}\\
\bfy^t &\mapsto \bfS^{t+1}=\argmin_{\bfS \in \clS} f(\bfS,\bfy^{t}),\label{S_upd}
\end{align}
where we used ${\bfS \mapsto \bfy}$ for $\bfy=\argmin_{\bfy \in \clY} f(\bfS, \bfy)$ and $\bfy \mapsto \bfS$ for $\bfS=\argmin_{\bfS \in \clS} f(\bfS, \bfy)$ for the projection operators onto $\clY$ and $\clS$ respectively. 

\subsection{Projection on $\clY$} 
For a fixed $\bfS^t$, finding $\bfy^t$ given by ${\bfS^t \mapsto \bfy^t}$ in \eqref{y_upd} boils down to obtaining the least-squares solution of an over-complete set of linear equations (via the matrix $\bfS^t \bfB$). The optimal solution is given by  $\bfy^t=(\bfS^t \bfB)^\dagger \check{\bfx}$, where $^\dagger$ denotes the pseudo-inverse operator (for the tall matrix $\bfS^t \bfB$).

\subsection{Projection on $\clS$}
Let $\bfy^t$ be the optimal solution obtained from \eqref{y_upd} and set $\bfz^t=\bfB \bfy^t$. Finding the optimal selection matrix $\bfS^{t+1} \in \clS$ at \eqref{S_upd} given by $\bfy^t \mapsto \bfS^{t+1}$ requires extracting a subvector of $\bfz^t$ of dimension $m$, while keeping the relative order of the components, that is closest to $m$-dim vector  ${\check{\bfx}}$ in $l_2$ distance. We formulate this as a  Dynamic Programming as follows. We define a 2D table of size $m\times n$ whose elements are labeled with $(r,c)\in [m]\times [n]$ and have the value $\bfT_{r,c}\in \bR_+ \cup \{\infty\}$ given by 
\begin{align*}
\bfT_{r,c}=\left \{ \begin{array}{lr} \parbox{5.3cm}{minimum squared-distance between the subvector ${\check{\bfx}}_{\range{1}{r}}$  and a subsequence of $\bfz^t_{\range{1}{c}}$ of length $r$,} & c\geq r, \\ 
\infty, & \text{otherwise.}
\end{array} \right .
\end{align*}
We initialize the diagonal elements of the table with $\bfT_{i,i}=\|{\check{\bfx}}_{\range{1}{i}}-\bfz^t_{\range{1}{i}}\|^2$ since there is only one way to match the first $i$ elements of  $\bfz^t$ with the $i$ elements in ${\check{\bfx}}_{\range{1}{i}}$. We also initialize the elements in the first column of the table, i.e., $\bfT_{1,j}$ for $j\in [n]$, with $\bfT_{1,j}=\min_{j' \in [j]} |{\check{\bfx}}_1-\bfz^t_{j'}|^2$ since the single element ${\check{\bfx}}_{1}$ should be matched with the closest element in the subvector $\bfz^t_{\range{1}{j}}$ consisting of the first $j$ elements of $\bfz^t$.

To find the value of a generic element $\bfT_{r,c}$ in the table, we need to match ${\check{\bfx}}_{\range{1}{r}}$ with a suitable subsequence of $\bfz^t_{\range{1}{c}}$ of length $r$. In the optimal matching, the last component ${\check{\bfx}}_{r}$ is matched either with $\bfz^t_{c}$ or with $\bfz^t_{c'}$ for some $c'<c$. Thus, $\bfT_{r,c}$ can be computed from the already computed elements of the table as follows: 
\begin{align}\label{T_rec}
\bfT_{r,c}=\min\big \{&\bfT_{r-1,c-1}+|{\check{\bfx}}_r-\bfz^t_{c}|^2, \bfT_{r,c-1}\big \}.
\end{align}
With the already mentioned initialization and the recursion \eqref{T_rec}, we can complete the whole table by filling its $d$-th diagonal consisting of elements $\{\bfT_{1,d},\bfT_{2,d+1}, \bfT_{3,d+2}, \dots \}$ for $d=2,\dots, n$ one at a time. Overall, filling the whole table requires $O(mn)$ operations.
\begin{figure}[t]
\centering
\includegraphics{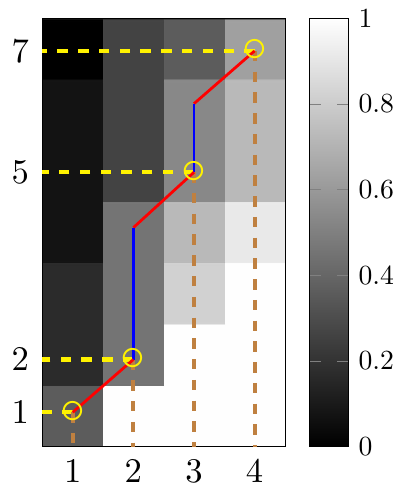}
\caption{Illustration of the Dynamic Programming table for matching two sequences of length $7$ and $4$ respectively. For the illustrated example, the indices of the matched elements in the larger vector are given by $\{1,2,5,7\}$.}
\label{fig:dynamic_prog}
\end{figure}

After filling the whole table, we can find the indices of those $m$ elements of $\bfz^t$ that are optimally matched to the elements of ${\check{\bfx}}$ as follows. We start from the element $\bfT_{m,n}$ located at the up-right corner of the table at location $(m,n)$. Note that by definition, $i \mapsto \bfT_{m,i}$, for $i \in [n]$, is a decreasing sequence of $i$ since by increasing $i$ the subvector $\bfz^t_{\range{1}{i}}$ becomes longer and provides more options to find a better subsequence thereof matching ${\check{\bfx}}$. The index of the last element in $\bfz^t$ that is matched to the last element ${\check{\bfx}}_{m}$ of ${\check{\bfx}}$ in the optimal matching is 
\begin{align}
i_m=\min \big \{i \in [n]: \bfT_{m,i}=\bfT_{m,n}\big \}.
\end{align}
To find the next largest index $i_{m-1}$, we apply the same argument to the sub-table $\bfT_{\range{1}{m-1}, \range{1}{i_m-1}}$ and its up-right corner element located at $(m-1,i_m-1)$, where we obtain the following recursive formula for the remaining indices:
\begin{align*}
i_{m-l}=\min \big\{&i \in [i_{m-l+1}-1]: \bfT_{m-l,i}=\bfT_{m-l,i_{m-l+1}-1}\big \},
\end{align*}
where ${l \in [m-1]}$. 
Fig.\,\ref{fig:dynamic_prog} illustrates this for a vector $\bfz$ of dimension $7$ and a vector $\bfx$ of dimension $4$.

\section{Performance Analysis}
In this section, we analyze the performance of \algname\ under the assumption that the measurement matrix $\bfB$ satisfies  R-RIP over $\clH-\clH$. Intuitively speaking, we show that  \algname\ can be seen as an iterative procedure for finding  fixed points of an appropriate  set-valued map. We use this  to analyze the performance of \algname\ rigorously. 

\subsection{Set-valued Map}\label{sec:set_val}

Let $\clX$ and $\clZ$ be two arbitrary sets and let $\clF\subseteq \clX\times \clZ:=\{(\bfx,\bfz): \bfx \in \clX, \bfz \in \clZ\}$. We define the set-valued or the multi-valued map corresponding to $\clF$ by $\clF: \clX \to 2^\clZ$, where $2^\clZ$ denotes the power set of $\clZ$, and where $\clF$ assigns to each $\bfx\in \clX$ a subset of $\clZ$ given by
\begin{align}
\clF(\bfx)=\{\bfz \in \clZ: (\bfx,\bfz) \in \clF\}.
\end{align}
Note that, for simplicity, we use the same symbol $\clF$ both for a subset of $\clX\times \clZ$ and for the associated set-valued map. In the special case, where $\clF(\bfx)$ is a singleton (has only a single element) for all $\bfx\in \clX$, the set-valued map $\clF$ corresponds to a single-valued function $\clF: \clX \to \clZ$. We will use $\clF: \clX \rightrightarrows \clZ$ for a set-valued map $\clF$. We define the domain of $\clF$ by $\clD_{\clF}:=\{\bfx \in \clX: \clF(\bfx) \neq \emptyset\}$ and the range of $\clF$ by $\clR_{\clF}=\cup_{\bfx \in \clX} \clF(\bfx)$. In particular, $\bfz \in \clR_\clF$ if there is an $\bfx\in \clX$ such that $\bfz \in \clF(\bfx)$ or equivalently $(\bfx,\bfz) \in \clF$. 
We define the inverse of a set-valued map $\clF$ as a mapping $\clF^{-1}: \clZ \rightrightarrows \clX$, where $\bfx \in \clF^{-1}(\bfz)$ if and only if $\bfz \in \clF(\bfx)$. A simple example of a set-valued map is illustrated in Fig.\,\ref{fig:set_valued}.
\begin{figure}[t]
\centering
\includegraphics{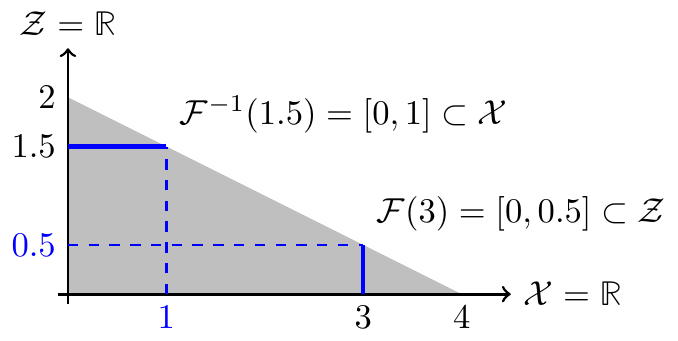}
\caption{Illustration of a simple set-valued map $\clF\subset \clX \times \clZ$ and its inverse for $\clX=\clZ=\bR$.}
\label{fig:set_valued}
\end{figure}

We define the composition of two set-valued maps  $\clF: \clX \rightrightarrows \clZ$ and $\clG: \clZ \rightrightarrows \clW$  by a set-valued map $\clJ=\clG\circ \clF: \clX \rightrightarrows \clW$, where $\bfw \in \clJ(\bfx)$ if and only if there is a $\bfz \in \clZ$ connecting $\bfx$ and $\bfw$, i.e.,  $\bfz \in \clF(\bfx)$ and $\bfw\in \clG(\bfz)$. This can also be written more compactly as $\clJ(\bfx)=\cup_{\bfz \in \clF(\bfx)} \clG(\bfz)$. We refer to \cite{rockafellar2015convex, dontchev2009implicit} for a more comprehensive introduction to the set-valued maps and additional related references.

\subsection{Decoding up to a Radius}\label{sec:dec_radius}

As in  Section \ref{sec:rec_alg}, we denote the target signal by $\check{\bfH}=\check{\bfy}^\transp \otimes \check{\bfS}$, the noisy samples by $\check{\bfx}=\check{\bfH} \bbb+\bfw=\check{\bfS} \bfB  \check{\bfy} +\bfw$, and the sequence generated by \algname\ by $\bfS^1 \mapsto \bfy^1 \mapsto \bfS^2 \mapsto \cdots$. 
Let $\bfH^i={\bfy^i}^\transp \otimes \bfS^i$ be the estimates produced by \algname\ at the end of the $i$-th iteration.
Since \algname\ minimizes the cost function \eqref{cost_func} at each iteration, we have that $\|\bfH^i \bbb - \check{\bfx}\| \leq \|\check{\bfx}\|$. Applying the triangle inequality and using the RIP over $\clH$ for $\check{\bfH}$ and $\bfH^i$ (as in the proof of Theorem \ref{main_thm}) yields 
\begin{align}\label{max_norm_cond}
\|\bfH^i\|_\sfF \leq \frac{2}{\sqrt{1-2\ripa}}(1+\frac{1}{\sqrt{\snr}}) \|\check{\bfH}\|_\sfF \stackrel{(i)}{\approx} 2 \|\check{\bfH}\|_\sfF,
\end{align}
where $(i)$ holds for a large $\snr$ and a small $\ripa$. 
Let us consider the following condition for $(\bfH^i, \check{\bfH})$:
\begin{align}
(1-\ripb)d_{\bfH^i,\check{\bfH}}^2 \leq \|(\bfH^i-\check{\bfH})\bbb\|^2 \leq (1+\ripb)d_{\bfH^i,\check{\bfH}}^2,\label{R_cond}
\end{align}
where $d_{\bfH^i,\check{\bfH}}=\|\bfH^i-\check{\bfH}\|_{\sfF}$ as in \eqref{eq:metric_H}.
Since $\bfB$ satisfies R-RIP over $\clH-\clH$, if \eqref{R_cond} is violated then from R-RIP it results that
\begin{align}
\|\bfH^i- \check{\bfH}\|_\sfF \leq \mu \max \{ \|\bfH^i\|_\sfF, \|\check{\bfH}\|_\sfF\} \stackrel{(a)}{\approx} 2 \mu \|\check{\bfH}\|_\sfF
\end{align}
where  $(a)$ follows from \eqref{max_norm_cond}. In words, the violation of \eqref{R_cond} implies that the estimate $\bfH^i$  lies in a neighborhood of $\check{\bfH}$ of radius $2 \ripc \|\check{\bfH}\|_\sfF$, as illustrated in Fig.\,\ref{fig:R_RIP}.
For the analysis,  we assume that \eqref{R_cond} holds for all $(\bfH^i, \check{\bfH})$ and study \algname\ under this condition. In particular, letting $\|\bfH^i-\check{\bfH}\|^{\text{Analysis}}_\sfF$ be the performance obtained under this additional condition, we have
\begin{align}
\|\bfH^i-\check{\bfH}\|^{\text{\algname}}_\sfF \leq \max\big\{2 \ripc \|\check{\bfH}\|_\sfF , \|\bfH^i-\check{\bfH}\|^{\text{Analysis}}_\sfF\big\},\label{eq:dec_radius}
\end{align} 
where $\|\bfH^i-\check{\bfH}\|^{\text{\algname}}_\sfF$ denotes the true performance achieved by \algname\ without the additional condition \eqref{R_cond}.

\subsection{Alternating Minimization as a Set-valued Map}
We  define the normalized cost function $g(\bfS,\bfy):=\frac{\sqrt{f(\bfS, \bfy)}}{\sqrt{m} \|\check{\bfy}\|}$, where $f(\bfS, \bfy)$ is as in \eqref{cost_func}. We  introduce the normalized variables $\xi=\frac{\|\bfy\|^2}{\|\check{\bfy}\|^2}$ and $\vartheta=\frac{\inp{\bfy}{\check{\bfy}}}{\|\check{\bfy}\|^2}$, and ${\nu:=\nu_{\bfS, \check{\bfS}}}$, where ${\nu_{\bfS, \check{\bfS}}=\frac{\tr(\bfS^\transp \check{\bfS})}{m}}$ as in \eqref{nu_def} denotes the fraction of common rows in $\bfS$ and $\check{\bfS}$.
As $\bfy$ varies in ${\clY=\bR^k}$, the set of feasible $(\xi, \vartheta)$ is given by
\begin{align}\label{eq:eset}
\clE=\{(\xi, \vartheta): \xi\geq \vartheta^2\},
\end{align}
which is the area above the parabola $\xi=\vtheta^2$  (see Fig.\,\ref{fig:feasible_area}).

Applying the triangle and the reverse triangle inequality to $\sqrt{f(\bfS, \bfy)}=\|\bfS \bfB \bfy - \check{\bfS} \bfB  \check{\bfy} -\bfw\|$ and using \eqref{R_cond} for ${\bfH=\bfy^\transp \otimes \bfS}$ and an arbitrary ${\check{\bfH}=\check{\bfy}^\transp \otimes \check{\bfS}}$, we obtain the following upper and lower bounds for  $g(\bfS,\bfy)$:
\begin{align}
g(\bfS,\bfy)& \leq \sqrt{1+\ripb}\, \varphi(\xi,\vartheta,\nu) + \zeta,\label{upper_bb}\\
g(\bfS,\bfy)& \geq \sqrt{1-\ripb}\, \varphi(\xi,\vartheta,\nu) - \zeta, \label{lower_bb}
\end{align}
where {$\zeta:=\frac{\|\bfw\|}{\|\check{\bfy}\|\sqrt{m}}\leq \frac{\|\bfw\|}{\|\check{\bfS}\bfB\check{\bfy}\|\sqrt{1-\ripa}}=\frac{1}{\sqrt{\snr(1-\ripa)}}$}, where $\ripa$ is the parameter of RIP over $\clH$, and where
\begin{align}
\varphi(\xi,\vartheta,\nu)&=\frac{\|\bfH-\check{\bfH}\|_\sfF}{\sqrt{m} \|\check{\bfy}\|}=\frac{d_{\bfH,\check{\bfH}}}{\sqrt{m} \|\check{\bfy}\|}\nonumber\\
&\stackrel{(i)}{=}\sqrt{1+\frac{\|\bfy\|^2}{\|\check{\bfy}\|^2} - 2 \nu \frac{\inp{\bfy}{\check{\bfy}}}{\|\check{\bfy}\|^2}}\label{eta_func2}\\
&=\sqrt{1+\xi -2 \vartheta \nu}, \label{eta_func}
\end{align}
where in $(i)$ we used \eqref{eq:metric_H}, and where $(\xi, \vtheta)$ takes values in $\clE$ as in \eqref{eq:eset}.
With this notation, we investigate the properties of the solution path $\bfS^1 \mapsto \bfy^1 \mapsto \cdots \mapsto {\bfS^t \mapsto \bfy^t} \mapsto \bfS^{t+1} \cdots$ generated by \algname.

\begin{figure}[t]
	\centering
	\includegraphics[scale=0.8]{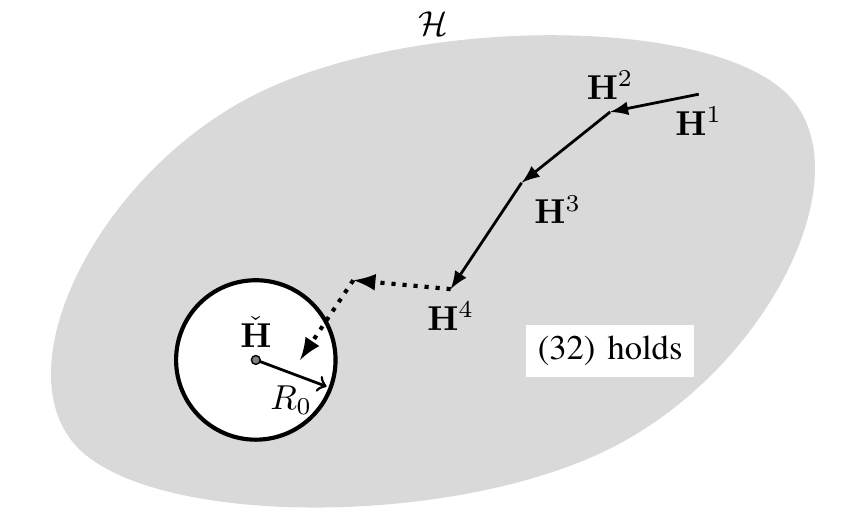}
	\caption{The path of \algname\ over $\clH$ and the region of $\clH$ where \eqref{R_cond} holds for $\check{\bfH}$ ($R_0\approx 2\ripc \|\check{\bfH}\|_\sfF$).}
	\label{fig:R_RIP}
\end{figure}

\subsubsection{Projection on $\clY$}
Consider an iteration $t$, where \algname\  produces  $\bfy^t$ as the projection ${\bfS^t \mapsto \bfy^t}$ of $\bfS^t$ onto $\clY$. From the upper bound in \eqref{upper_bb}, we have that
\begin{align}
g(\bfS^t,\bfy^t)&=\argmin_{\bfy \in \clY} g(\bfS^t,\bfy)\nonumber\\
& \leq \argmin_{(\xi, \vartheta) \in \clE} \sqrt{1+\ripb}\, \varphi(\xi,\vartheta,\nu^t) + \zeta\nonumber\\
&= \sqrt{1+\ripb} \sqrt{1-(\nu^t)^2} +\zeta, \label{nu_der_dumm}
\end{align}
where ${\nu^t=\nu_{\check{\bfS}, \bfS^t}}$. 
To derive \eqref{nu_der_dumm}, we used the fact that $\varphi(\xi, \vartheta, \nu^t)$ in \eqref{eta_func} is a decreasing function of $\xi$ for a fixed $\vtheta$ and $\nu^t \in [0,1]$, thus, the minimum of $\varphi(\xi, \vartheta, \nu^t)$ over $\clE$ is achieved at the boundary ${\{(\xi, \vtheta): \xi=\vtheta^2\}}$ of $\clE$ (see Fig.\,\ref{fig:feasible_area}). After replacing  $\xi=\vtheta^2$ in $\varphi$, we need to minimize $\sqrt{1+\vtheta^2-2\nu^t \vtheta}$ with respect to $\vtheta \in \bR$, where the minimum is achieved  at $\vtheta=\nu^t$ and its value is given by $\sqrt{1-(\nu^t)^2}$ as in \eqref{nu_der_dumm}. We could obtain this result also directly by minimizing \eqref{eta_func2} with respect to $\bfy$ (rather than $(\xi, \vtheta)\in \clE$), where we obtain the optimal solution $\bfy=\nu^t \check{\bfy}$ and \eqref{nu_der_dumm} after replacement.

Let us denote by $\xi^t=\frac{\|\bfy^t\|^2}{\|\check{\bfy}\|^2}$ and $\vtheta^t=\frac{\inp{\bfy^t}{\check{\bfy}}}{\|\check{\bfy}\|^2}$ the  variables corresponding to $\bfy^t$. Using the lower bound \eqref{lower_bb}, we have $g(\bfS^t,\bfy^t) \geq \sqrt{1-\ripb}\, \varphi(\xi^t, \vtheta^t,\nu^t) - \zeta$, which with \eqref{nu_der_dumm} yields  
\begin{align}\label{nu_val}
\varphi(\xi^t,\vartheta^t,\nu^t)&=\sqrt{1+{\xi^t} -2 \vtheta^t \nu^t}\nonumber\\
& \leq \varsigma \sqrt{1-(\nu^t)^2} +\varrho=:\vpi(\nu^t),
\end{align}
where $\varsigma:=\sqrt{\frac{1+\ripb}{1-\ripb}}$, where $\varrho:=\frac{2\zeta}{\sqrt{1-\ripb}}$, and where we defined 
\begin{align}\label{eq:vpi}
\vpi(\nu):=\varsigma \sqrt{1-\nu^2} +\varrho.
\end{align}
We write  \eqref{nu_val} as  ${(\xi^t, \vtheta^t) \in\Fxi(\nu^t)}$, where $\Fxi(\nu)$ is the set-valued map defined by
\begin{align}
\Fxi(\nu):=\big \{(\xi, \vartheta):  \xi \geq \vtheta^2, \sqrt{1+{\xi} -2 \vartheta \nu}\leq \vpi(\nu)\big\},\label{F_xi}
\end{align}
where $\vpi(\nu)$ is as in \eqref{eq:vpi}. Note that $\Fxi(\nu)$ 
assigns to any $\nu \in [0,1]$ a subset of feasible points $\clE$ that can be \textit{potentially} produced by \algname.
Before we proceed we make the following  crucial condition.
\begin{condition}\label{condition1}
	There is a $\bar{\nu} \in (0,1]$ such that for $\nu \in ({\bar{\nu}, 1]}$
	\begin{align}\label{vpi_cond}
	\vpi(\nu,\vsigma,\vrho)=\varsigma \sqrt{1-\nu^2} +\varrho <  1,
	\end{align} 
	where $\vsigma=\sqrt{\frac{1+\ripb}{1-\ripb}}$ as before. Such a $\bar{\nu}$ 
	exists if
	\begin{align}\label{vrho_cond}
	\vrho=&\frac{2\zeta}{\sqrt{1-\ripb}}\stackrel{(i)}{\leq}\frac{2}{\sqrt{\snr(1-\ripb)(1-\ripa)}}< 1,
	\end{align}
	where  $(i)$ follows from {$\zeta\leq\frac{1}{\sqrt{\snr(1-\ripa)}}$}. We define $\nu_0$ as the smallest such $\bar{\nu}$, which from \eqref{vpi_cond} is given by 
	\begin{align}
	\nu_0:=&\sqrt{1-(\frac{1-\vrho}{\vsigma})^2}.\label{nu0_cond}
	\end{align}
\end{condition}
Condition \ref{condition1} is satisfied for a sufficiently large $\snr$ and a sufficiently small $\ripb \in (0,1)$, and is easy to meet in practice. 
\begin{proposition}\label{pos_cor}
Let $\nu_0$ be as in {Condition \ref{condition1}} and let ${\nu \in (\nu_0,1]}$. Then, for any $(\xi, \vtheta)\in\Fxi(\nu)$,  $\vtheta\geq 0$. \hfill $\square$
\end{proposition}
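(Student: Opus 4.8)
The plan is to argue by contradiction. Suppose, for some $\nu \in (\nu_0,1]$, there is a pair $(\xi,\vtheta) \in \Fxi(\nu)$ with $\vtheta < 0$, and derive a violation of the defining inequality of $\Fxi(\nu)$. Recall from \eqref{F_xi} that membership in $\Fxi(\nu)$ requires both $\xi \geq \vtheta^2$ and $\sqrt{1+\xi-2\vartheta\nu} = \varphi(\xi,\vartheta,\nu) \leq \vpi(\nu)$, with $\vpi(\nu) = \vsigma\sqrt{1-\nu^2}+\vrho$ as in \eqref{eq:vpi}.

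First I would record the elementary sign facts we need. Since $\nu > \nu_0 \geq 0$ (by \eqref{nu0_cond}, as $\vrho<1$ and $\vsigma\geq 1$ force $0 < (1-\vrho)/\vsigma \leq 1$, hence $0 \leq \nu_0 < 1$), we have $\nu > 0$ strictly. Combined with the assumed $\vtheta < 0$, this gives $-2\vartheta\nu > 0$. Also $\xi \geq \vtheta^2 > 0$. Therefore $1 + \xi - 2\vartheta\nu > 1$, and so $\varphi(\xi,\vartheta,\nu) = \sqrt{1+\xi-2\vartheta\nu} > 1$.

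On the other hand, Condition \ref{condition1} asserts precisely that $\vpi(\nu) < 1$ for every $\nu \in (\nu_0,1]$. Hence $\varphi(\xi,\vartheta,\nu) > 1 > \vpi(\nu)$, which contradicts the inequality $\varphi(\xi,\vartheta,\nu) \leq \vpi(\nu)$ in the definition \eqref{F_xi} of $\Fxi(\nu)$. This contradiction shows that no such pair exists, i.e., $\vtheta \geq 0$ for all $(\xi,\vtheta)\in\Fxi(\nu)$, completing the proof. There is really no delicate step here; the only point worth flagging is the need to invoke $\nu > 0$ (not merely $\nu \geq 0$) so that the cross term $-2\vartheta\nu$ is \emph{strictly} positive when $\vtheta<0$, which in turn is what lets us conclude $\varphi > 1$ rather than just $\varphi \geq 1$ — and this is guaranteed because $\nu$ lies in the open interval above $\nu_0$.
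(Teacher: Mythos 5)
Your proof is correct and is essentially the same argument as the paper's: both reduce to the facts that $\vpi(\nu)<1$ on $(\nu_0,1]$ (you cite Condition \ref{condition1} directly, the paper derives it from the monotonicity of $\vpi$ and $\vpi(\nu_0)=1$) and that $\sqrt{1+\xi-2\vtheta\nu}\leq \vpi(\nu)<1$ with $\nu>0$, $\xi\geq 0$ forces $\vtheta\geq 0$; your contradiction phrasing is just the contrapositive of the paper's direct statement.
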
 
\begin{proof}
From \eqref{eq:vpi}, it is seen that $\vpi(\nu)$ is a decreasing function of $\nu$, thus, from ${\nu \in (\nu_0,1]}$ it immediately results that ${\vpi(\nu) < \vpi(\nu_0) =1}$. From \eqref{F_xi}, this implies that 
$\sqrt{1+\xi -2 \vtheta \nu} <1$. 
 Since $\nu>0$ and $\xi\geq 0$, this  is satisfied only if $\vtheta\geq 0$. This completes the proof.
\end{proof}

\begin{figure}[t]
	\centering
	\includegraphics{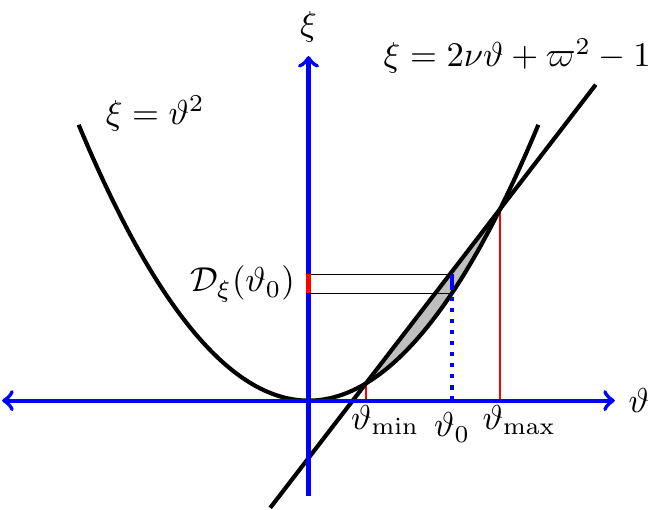}
	\caption{Illustration of  the domain $\Fxi(\nu)$ for a generic $\nu\in  (\nu_0,1]$. It is seen that $\Fxi(\nu)$ consists of the area between the parabola $\xi=\vtheta^2$ and the line $\xi=2\nu \vtheta + \vpi^2-1$ in $\vtheta\in [\vtheta_{\min}, \vtheta_{\max}]$. Also, all $(\xi, \vtheta)\in\Fxi(\nu)$ have positive $\vtheta$. }
	\label{fig:feasible_area}
\end{figure}

\subsubsection{Projection on $\clS$}
Let $(\xi^t, \vtheta^t)$ be the output of  \algname\  corresponding to $\bfy^t$. The next step is the projection $\bfy^t \mapsto \bfS^{t+1}$ onto $\clS$, where we have 
\begin{align}
g(\bfS^{t+1},\bfy^t)&=\argmin_{\bfS\in \clS} g(\bfS,\bfy^t)\\
& \stackrel{(i)}{\leq} \argmin_{\nu \in [0,1]} \sqrt{1+\ripb}\ \varphi(\xi^t,\vartheta^t,\nu) + \zeta\\
&\stackrel{(ii)}{=} \argmin_{\nu\in [0,1]} \sqrt{1+\ripb}\ \sqrt{1+{\xi^t} -2 \nu \vtheta^t} + \zeta\\
&\stackrel{(iii)}{\leq}\sqrt{1+\ripb}\ \sqrt{1+{\xi^t} -2 (\vtheta^t)_+} + \zeta,\label{up_bound_dumm}
\end{align}
where in $(i)$ we used the upper bound \eqref{upper_bb}, where in $(ii)$ we used \eqref{eta_func}, where we defined $(a)_+=\max\{a, 0\}$ as the positive part of $a\in \bR$, and where in $(iii)$ we used the fact that when $\vtheta^t>0$ the minimum is achieved at $\nu=1$ whereas for $\vtheta^t \leq 0$ it is achieved at $\nu=0$.  
Using \eqref{up_bound_dumm} and the lower bound \eqref{lower_bb} at $(\bfS^{t+1}, \bfy^t)$, we obtain 
\begin{align}\label{nu_eq_up}
\sqrt{1+{\xi^t} -2 \vtheta^t \nu^{t+1}} \leq \varsigma\sqrt{1+{\xi^t} -2 (\vtheta^t)_+} +\varrho.
\end{align}
Although \eqref{nu_eq_up} does not give the exact value of $\nu^{t+1}$ generated by \algname, it implies  that ${\nu^{t+1} \in \Fnu(\xi^t, \vtheta^t)}$, where $\Fnu(\xi, \vtheta)$ is the set-valued map defined by 
\begin{align}
\Fnu(\xi, \vtheta):=\big \{ \nu \in [0,1]: &\sqrt{1+{\xi} -2 \vtheta \nu} \nonumber\\
&\leq \varsigma \sqrt{1+{\xi} -2 (\vtheta)_+} +\varrho\big \}.\label{F_nu}
\end{align}

\subsubsection{Full Iteration}\label{sec:full_iter}
Combining the two steps of projection in \algname\  at  iteration $t$, we obtain that starting from $\nu^t \in (\nu_0,1]$, the algorithm returns $\nu^{t+1} \in \clF(\nu^t)$, where $\clF$ is the composition of  the set-valued maps $\Fxi$ and $\Fnu$ given by
\begin{align}\label{f_both}
\clF(\nu):=&\Fnu \circ \Fxi (\nu)\nonumber\\
=&\{\nu' \in\Fxi(\xi', \vtheta'): (\xi', \vtheta')\in \Fnu(\nu)\}.
\end{align}
In words, $\clF(\nu)$ is the set of all $\nu'$ in $[0,1]$ that are \textit{potentially} reachable from $\nu$ at a single iteration of \algname.

\subsection{Structure of $\clF(\nu)$ and its lower envelope}
We define the lower envelope of the set-valued map $\clF$ as the single-valued function
\begin{align}
\sfF_0(\nu):=\min \clF(\nu)=\min \{\nu': \nu' \in \clF(\nu)\},\label{F}
\end{align}
where $\sfF_0(\nu) \in [0,1]$ for all $\nu \in [0,1]$. Let $\nu_0$ be as in Condition \ref{condition1}. We will study the behavior of $\sfF_0$ in two regions  ${\nu \in (\nu_0,1]}$ and $\nu\in [0, \nu_0]$ separately. 

\subsubsection{First case: ${\nu \in (\nu_0,1]}$}
For this range of $\nu$, from Proposition \ref{pos_cor} it results that  for any $(\xi, \vtheta)\in\Fxi(\nu)$, we have $\vtheta \geq 0$, thus, $(\vtheta)_+=\vtheta$. For each such a $(\xi, \vtheta) \in\Fxi(\nu)$ with $\vtheta\geq 0$, \eqref{F_nu} yields a lower bound on the possible values of $\nu$ as follows
\begin{align}\label{F_def}
\nu \geq \max \Big \{ \frac{1+\xi - (\vsigma \sqrt{1+\xi - 2 \vtheta}+ \vrho)^2}{2\vtheta}, 0\Big \}.
\end{align}
However, to find a lower envelope of $\clF(\nu)$, from \eqref{f_both} we need to consider all those $(\xi, \vtheta)\in\Fxi(\nu)$.
Thus, the lower envelope  $\sfF_0(\nu)$ is given by $\sfF_0(\nu)=\max \{ \sfF(\nu), 0\}$, where
\begin{align}\label{F_def}
\sfF(\nu)=\min_{(\xi, \vtheta)\in\Fxi(\nu)}\frac{1+\xi - (\vsigma \sqrt{1+\xi - 2 \vtheta}+ \vrho)^2}{2\vtheta}.
\end{align}
The region $\Fxi(\nu)$ is illustrated in Fig.\,\ref{fig:feasible_area} for a generic value of $\nu\in(\nu_0,1]$.
It is seen that $\Fxi(\nu)$ for any fixed $\nu\in(\nu_0,1]$ consist of all $(\xi, \vtheta)$ lying above the parabola $\xi=\vtheta^2$ and below the line $\xi=2\nu \vtheta + \vpi^2-1$ in the range $[\vtheta_{\min}, \vtheta_{\max}]$, where $\vtheta_{\min}$ and $\vtheta_{\max}$ denote the $\vtheta$ coordinate of the two intersection points of the line and the parabola and correspond to the roots of the  quadratic equation obtained by setting $\xi=\vtheta^2$ in the equation of the line $\xi=2\nu \vtheta + \vpi^2-1$ given by:
\begin{align}
\sfQ(\vtheta):=\vtheta^2 - 2 \vtheta \nu+1-\vpi^2=0.\label{quad_eq}
\end{align} 
Solving \eqref{quad_eq} we have 
\begin{align}\label{roots}
\vtheta_{{\max/\min}}(\nu, \vsigma, \vrho)=\nu \pm \sqrt{\nu^2 -1 +\vpi(\nu)^2 },
\end{align}
where $\vpi(\nu)$ is as in \eqref{eq:vpi}.
Note that the roots are well-defined (real-valued) for all $\nu \in [0,1]$ and in particular for ${\nu \in (\nu_0,1]}$ because the discriminant 
\begin{align}\label{Delt_rel}
\Delta(\nu,&\vsigma, \vrho)=\nu^2-1+\vpi(\nu)^2=\nu^2-1+(\vsigma \sqrt{1-\nu^2}+\rho)^2 \nonumber\\
&= (\vsigma^2-1) (1-\nu^2) + \rho^2 + 2\rho\vsigma \sqrt{1-\nu^2}\geq0
\end{align}
since $\vsigma=\sqrt{\frac{1+\ripb}{1-\ripb}}>1$ and $1-\nu^2 \geq 0$ for $\nu \in [0,1]$. From the root relation for a quadratic function, we also have that  
\begin{align}\label{root_rel}
\vtheta_{\min}(\nu, \vsigma, \vrho) \vtheta_{\max}(\nu, \vsigma, \vrho) = 1-\vpi^2 \in (0,1],
\end{align}
for ${\nu \in (\nu_0,1]}$ as {$\vpi(\nu) \in [0,1)$} for ${\nu \in (\nu_0,1]}$ from Condition \ref{condition1}. As ${{\vtheta_{\max}>0}}$, \eqref{root_rel} also yields ${\vtheta_{\min}>0}$, thus, both roots are positive for ${\nu \in (\nu_0,1]}$. Moreover, as ${\vtheta_{\min} \vtheta_{\max}\leq 1}$, the minimum root $\vtheta_{\min}$ is always less than $1$.
\subsubsection{Second case: $\nu \in[0,\nu_0]$}
In this region, we see from Condition \ref{condition1} and \eqref{eq:vpi} that $\vpi(\nu) \geq 1$, thus, $(\xi, \vtheta)=(0,0) \in\Fxi(\nu)$ from \eqref{F_xi}. We can see from \eqref{F_nu} and the fact that $\vsigma \in (1, \infty)$ and $\vrho \in [0,1)$ from \eqref{vrho_cond} that for $(\xi, \vtheta)=(0,0)$,  $\Fnu(\xi, \vtheta)=[0,1]$. From the composition of set-valued maps  in \eqref{f_both}, it results that $\clF(\nu)=[0,1]$ for  $\nu \in [0,\nu_0]$, and in particular, $\sfF_0(\nu)=0$.

\subsection{Explicit formula for $\sfF_0(\nu)$}
Suppose ${\nu \in (\nu_0,1]}$ and  write the minimization \eqref{F_def} as 
\begin{align}
\sfF(\nu)&=1+ \min_{(\xi, \vtheta)\in\Fxi(\nu)}\frac{1+\xi -2 \vtheta - (\vsigma \sqrt{1+\xi - 2 \vtheta}+ \vrho)^2}{2\vtheta}\nonumber\\
&=1+\min_{(\xi, \vtheta)\in\Fxi(\nu)} \frac{\sfG(1+\xi - 2\vtheta)}{2\vtheta},\label{F_def2}
\end{align}
where $\sfG: a \to a-(\vsigma \sqrt{a} + \vrho)^2$ is a function defined over the region $\clD_\sfG:=\{1+\xi-2\vtheta: (\xi, \vtheta) \in\Fxi (\nu)\} \subset \bR$. Note that as $\xi \geq \vtheta^2$ over the region $\Fxi(\nu)$, we have that
\begin{align}
\min \clD_\sfG\geq \min \{1+\vtheta^2-2\vtheta: \vtheta \in [\vtheta_{\min}, \vtheta_{\max}]\} \geq 0.
\end{align}
Hence, $\clD_\sfG \subseteq \bR_+$ and $\sfG$ is well-defined over the whole  $\clD_\sfG$. Moreover, since $\clD_\sfG$ is the image of the closed connected (and bounded) set $\Fxi(\nu)$ (see Fig.\,\ref{fig:feasible_area}) under the continuous affine map $(\xi, \vtheta) \mapsto 1+\xi - 2\vtheta$, the set $\clD_\sfG$ must be closed and connected \cite{bertsekas2003convex}, thus, a closed interval of $\bR_+$. 
\begin{lemma}\label{G_lem}
 $\sfG$ is a negative decreasing function over $\bR_+$ and in particular $\clD_G$. \hfill $\square$
\end{lemma}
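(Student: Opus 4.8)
The plan is to simply expand $\sfG$ and read off both properties from the signs of the coefficients. Writing
\[
\sfG(a)=a-(\vsigma\sqrt a+\vrho)^2=(1-\vsigma^2)\,a-2\vsigma\vrho\sqrt a-\vrho^2,\qquad a\in\bR_+,
\]
I would first record the two elementary facts about the constants that drive everything: since $\ripb\in(0,1)$ we have $\vsigma=\sqrt{(1+\ripb)/(1-\ripb)}>1$, hence $1-\vsigma^2<0$; and $\vrho=2\zeta/\sqrt{1-\ripb}\geq 0$ because $\zeta=\|\bfw\|/(\sqrt m\,\|\check{\bfy}\|)\geq 0$.

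For monotonicity I would argue that $(1-\vsigma^2)a$ is a strictly decreasing affine function of $a$ (negative slope), $-2\vsigma\vrho\sqrt a$ is non-increasing (a non-positive multiple of the increasing map $a\mapsto\sqrt a$), and $-\vrho^2$ is constant; the sum of a strictly decreasing and finitely many non-increasing functions is strictly decreasing on $\bR_+$. (Equivalently, one may differentiate: $\sfG'(a)=(1-\vsigma^2)-\vsigma\vrho\,a^{-1/2}<0$ for $a>0$, and $\sfG$ is continuous at $0$.) For the sign, I would substitute $s=\sqrt a\geq 0$ and factor
\[
\sfG(a)=s^2-(\vsigma s+\vrho)^2=\big((1-\vsigma)s-\vrho\big)\big((1+\vsigma)s+\vrho\big).
\]
The second factor is $\geq 0$ for all $s\geq 0$ (here $1+\vsigma>0$ and $\vrho\geq 0$), while the first factor is $\leq 0$ for all $s\geq 0$ (since $1-\vsigma<0$ and $\vrho\geq 0$); hence $\sfG(a)\leq 0$ on $\bR_+$, with strict inequality whenever $a>0$ (the product can vanish only if $s=0$ and $\vrho=0$). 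Restricting to $\clD_\sfG\subseteq\bR_+$ gives the claim there as well.

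There is no substantive obstacle here — the statement is essentially the observation ``$\vsigma>1$ and $\vrho\geq 0$''. The only point that deserves a word of care is the degenerate boundary value $a=0$ in the noiseless regime $\vrho=0$, where $\sfG(0)=0$ rather than strictly negative; this is harmless for how the lemma is used, since $\sfG(\cdot)$ enters \eqref{F_def2} divided by $2\vtheta$ with $\vtheta>0$ (Proposition~\ref{pos_cor}), so the envelope still obeys $\sfF(\nu)\leq 1$, and under the standing assumption that the measurement noise is present one has $\vrho>0$ and $\sfG$ is strictly negative on all of $\bR_+$.
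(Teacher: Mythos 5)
Your proof is correct and follows essentially the same route as the paper's: expand $\sfG(a)=(1-\vsigma^2)a-2\vsigma\vrho\sqrt{a}-\vrho^2$, use $\vsigma>1$ and $\vrho\geq 0$ to read off nonpositivity, and check $\sfG'(a)=(1-\vsigma^2)-\vsigma\vrho/\sqrt{a}<0$ for monotonicity. Your extra factorization via $s=\sqrt{a}$ and the remark about the degenerate point $a=0$ when $\vrho=0$ are harmless refinements of the same argument.
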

\begin{proof}
By expanding the expression for $\sfG$, we have that ${\sfG(a)=(1-\vsigma^2) a -2\vsigma \vrho \sqrt{a} - \vrho^2}$. As $\vsigma >1$ and $a\in \bR_+$ (especially when $a \in \clD_G$), it results that $\sfG(a) \leq 0$ for all $a \in \bR_+$ ($a \in \clD_G$). Also, $\frac{d}{d a}\sfG(a)= \sfG'(a)=(1-\vsigma^2) -\frac{\vsigma \vrho}{\sqrt{a}}<0$ for $a \in \bR_+$ ($a\in \clD_G$). Thus, $\sfG$ is a decreasing function.
\end{proof}
\begin{lemma}\label{corner}
Let ${\nu \in (\nu_0,1]}$. Then, the minimum in \eqref{F_def} over $\Fxi(\nu)$ is achieved at  corner point $(\xi, \vtheta)=(\vtheta_{\min}, \vtheta_{\min}^2)$. \hfill $\square$
\end{lemma}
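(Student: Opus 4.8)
The plan is to reduce the two–variable minimization defining $\sfF(\nu)$ --- equivalently, by \eqref{F_def2}, the minimization of
\[
h(\xi,\vtheta)\;:=\;\frac{\sfG(1+\xi-2\vtheta)}{2\vtheta}
\]
over $\Fxi(\nu)$ --- to a pair of nested one–variable problems, using the geometry of $\Fxi(\nu)$ together with the sign and monotonicity of $\sfG$ from Lemma~\ref{G_lem}. I would rely on the following facts, all already established for $\nu\in(\nu_0,1]$: $\Fxi(\nu)$ is the compact region lying above the parabola $\xi=\vtheta^2$ and below the line $\xi=2\nu\vtheta+\vpi(\nu)^2-1$, with $\vtheta$ ranging over $[\vtheta_{\min},\vtheta_{\max}]$ (see \eqref{F_xi}, \eqref{quad_eq}, \eqref{roots}, and Fig.~\ref{fig:feasible_area}); the endpoints satisfy $0<\vtheta_{\min}\le\vtheta_{\max}$ (Proposition~\ref{pos_cor} and the discussion around \eqref{root_rel}), so every point of $\Fxi(\nu)$ has $\vtheta\ge\vtheta_{\min}>0$; the quantity $a:=1+\xi-2\vtheta$ is $\ge 0$ throughout $\Fxi(\nu)$ (checked just before Lemma~\ref{G_lem}); and $\sfG$ is strictly negative and strictly decreasing on $\bR_+$.

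The first step is to fix $\vtheta\in[\vtheta_{\min},\vtheta_{\max}]$ and minimize $h(\cdot,\vtheta)$ over the admissible slice $\xi\in[\vtheta^2,\,2\nu\vtheta+\vpi^2-1]$. Since $\xi\mapsto 1+\xi-2\vtheta$ is increasing, $\sfG$ is decreasing, and $2\vtheta>0$, the map $\xi\mapsto h(\xi,\vtheta)$ is decreasing, so on each slice the minimum is attained at the right endpoint, i.e.\ on the line. Substituting $\xi=2\nu\vtheta+\vpi^2-1$ gives $1+\xi-2\vtheta=\vpi(\nu)^2+2(\nu-1)\vtheta$, hence
\[
\min_{\Fxi(\nu)}h\;=\;\min_{\vtheta\in[\vtheta_{\min},\vtheta_{\max}]}\psi(\vtheta),\qquad
\psi(\vtheta):=\frac{\sfG\!\big(\vpi(\nu)^2+2(\nu-1)\vtheta\big)}{2\vtheta}.
\]

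The second step is to show $\psi$ is non-decreasing on $[\vtheta_{\min},\vtheta_{\max}]$. Put $a(\vtheta)=\vpi(\nu)^2+2(\nu-1)\vtheta$; as $\nu\le 1$, $a(\cdot)$ is non-increasing, hence (by Lemma~\ref{G_lem}) $\sfG(a(\vtheta))$ is non-decreasing and stays $\le 0$, so $p(\vtheta):=-\sfG(a(\vtheta))\ge0$ is non-increasing while $q(\vtheta):=2\vtheta>0$ is strictly increasing. The elementary observation that the ratio of a nonnegative non-increasing function to a positive increasing function is again non-increasing (which I would verify in one line) shows $-\psi=p/q$ is non-increasing, i.e.\ $\psi$ is non-decreasing, so its minimum is attained at $\vtheta=\vtheta_{\min}$. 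Finally, $\vtheta_{\min}$ is a root of $\sfQ$ in \eqref{quad_eq}, so the line and the parabola coincide there, $2\nu\vtheta_{\min}+\vpi^2-1=\vtheta_{\min}^2$; hence the minimizing point is the corner $(\xi,\vtheta)=(\vtheta_{\min}^2,\vtheta_{\min})$ of $\Fxi(\nu)$, which is the assertion of the lemma.

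I do not anticipate a genuine obstacle; the three points needing care are: (i) that $2\vtheta$ is bounded away from $0$ on $\Fxi(\nu)$ for $\nu\in(\nu_0,1]$ --- exactly Proposition~\ref{pos_cor} together with $\vtheta_{\min}>0$ --- so every division above is legitimate; (ii) the ratio–monotonicity fact used in the second step, which is trivial but should be stated explicitly; and (iii) the degenerate value $\nu=1$, for which $a(\vtheta)$ is constant and $\psi(\vtheta)=\sfG(\vpi^2)/(2\vtheta)$ is visibly minimized at $\vtheta_{\min}$, so the argument goes through unchanged. (Note that the corner of $\Fxi(\nu)$ at $\vtheta=\vtheta_{\min}$ lies on the parabola, so it is the point with $\xi=\vtheta_{\min}^2$.)
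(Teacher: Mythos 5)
Your proposal is correct and follows essentially the same route as the paper's proof: first minimize over $\xi$ on each slice (pushing the minimizer to the boundary line using the monotonicity of $\sfG$ from Lemma~\ref{G_lem}), then show the resulting one-variable function of $\vtheta$ is non-decreasing on $[\vtheta_{\min},\vtheta_{\max}]$, so the minimum sits at the corner $(\xi,\vtheta)=(\vtheta_{\min}^2,\vtheta_{\min})$. The only (inessential) difference is that you establish the second monotonicity via the elementary ratio argument, whereas the paper computes the derivative of $\sfG(\vpi^2+2\vtheta(\nu-1))/(2\vtheta)$ directly via the quotient rule.
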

\begin{proof}
We first fix a $\vtheta>0$ and consider $\sfG(1+\xi-2\vtheta)$ as a function of $\xi$ in the range $\clD_{\text{\textxi}}(\vtheta):=\{\xi: (\xi, \vtheta)\in\Fxi(\nu)\}$, which is a closed interval (see Fig.\,\ref{fig:feasible_area}). As $1+\xi-2\vtheta$ for a fixed $\vtheta$ is an affine increasing function of $\xi$, from Lemma \ref{G_lem} it results that $\sfG(1+\xi-2\vtheta)$ is a decreasing function of $\xi$ in $\clD_{\text{\textxi}}(\vtheta)$, thus, its minimum over $\clD_{\text{\textxi}}(\vtheta)$ is achieved at $\xi_{\max}(\vtheta)=\max \clD_{\text{\textxi}}(\vtheta)$. As a result,  $(\xi_{\max}(\vtheta), \vtheta)$ lies on the boundary line $1+\xi-2\nu \vtheta=\vpi^2$, thus, $\xi_{\max}(\vtheta)=\vpi^2-1+2\nu \vtheta$. Replacing $\xi_{\max}(\vtheta)$ for $\xi$, the minimum in \eqref{F_def2} is given by
\begin{align}
\sfF(\nu)=1+ \min _{\vtheta \in [\vtheta_{\min}, \vtheta_{\max}]} \frac{\sfG(\vpi^2+ 2 \vtheta (\nu-1))}{2\vtheta}.
\end{align}
Since $\vpi^2+ 2 \vtheta (\nu-1)$ is an affine decreasing function of $\vtheta$ for ${\nu \in (\nu_0,1]}$, we have that 
\begin{align}\label{pos_G_der}
{\frac{d}{d\vtheta} \sfG(\vpi^2+ 2 \vtheta (\nu-1))=2(\nu-1) \sfG'(\vpi^2+ 2 \vtheta (\nu-1)) \geq 0},
\end{align}
where we used the fact that $\nu-1\leq 0$ and that $\sfG'(.)\leq 0$ from Lemma \ref{G_lem}. This implies that 
\begin{align*}
&\frac{d}{d\vtheta} \frac{\sfG(\vpi^2+ 2 \vtheta (\nu-1))}{2\vtheta}\\
&=\frac{\frac{d}{d\vtheta} \sfG(\vpi^2+ 2 \vtheta (\nu-1)) \vtheta -  \sfG(\vpi^2+ 2 \vtheta (\nu-1))}{2 \vtheta^2}\geq 0
\end{align*}
where we used \eqref{pos_G_der}, the fact that $\vtheta>0$, and that $\sfG(\vpi^2+ 2 \vtheta (\nu-1))\leq 0$ from Lemma \ref{G_lem}.  As a result, $\frac{\sfG(\vpi^2+ 2 \vtheta (\nu-1))}{2\vtheta}$ is an increasing function of $\vtheta$ for $\vtheta \in [\vtheta_{\min}, \vtheta_{\max}]$ and achieves its minimum at $\vtheta_{\min}$. Overall, this implies that for ${\nu \in (\nu_0,1]}$ the minimum in \eqref{F_def2} is achieved at the corner point $(\vtheta_{\min}^2, \vtheta_{\min}) \in\Fxi(\nu)$. This completes the proof.
\end{proof}

Applying Lemma \ref{corner}, we obtain the following explicit formula for $\sfF(\nu)$ for $\nu \in (\nu_0, 1]$:
\begin{align}
\sfF(\nu)=1+ \frac{(1-\vtheta_{\min})^2 -(\vsigma(1-\vtheta_{\min})+\vrho)^2}{2 \vtheta_{\min}},\label{thet_rel}
\end{align}
where  $\vtheta_{\min}$ is as in \eqref{roots} and where we  used the fact that ${\sqrt{1+\vtheta_{\min}^2 -2\vtheta_{\min}}=1-\vtheta_{\min}}$ as ${\vtheta_{\min} \in [0,1]}$ for ${\nu \in (\nu_0,1]}$. 
Thus, we obtain  
\begin{align}\label{F0_exp}
\sfF_0(\nu)=\left \{ \begin{array}{ll} 0, & \nu \in [0, \nu_0], \\ \big(\sfF(\nu))_+, & {\nu \in (\nu_0,1]}.
\end{array} \right.
\end{align}

\subsection{Properties of $\sfF_0(\nu)$}
We need  some  preliminary results first.
\begin{lemma}\label{thet_min_inc}
Let $\vtheta_{\min}(\nu)$ be as in \eqref{roots}. Then, $\vtheta_{\min}(\nu)$ is an increasing function of $\nu$ for  ${\nu \in (\nu_0,1]}$. Moreover, $\vtheta_{\min} (\nu_0)=0$ and $\vtheta_{\min}(1)=1-\vrho$. \hfill $\square$
\end{lemma}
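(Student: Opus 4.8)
The plan is to argue directly from the explicit expression $\vtheta_{\min}(\nu)=\nu-\sqrt{\Delta(\nu)}$ read off from \eqref{roots}, where $\Delta(\nu)=\nu^2-1+\vpi(\nu)^2$ is the discriminant of \eqref{Delt_rel} and $\vpi(\nu)=\vsigma\sqrt{1-\nu^2}+\vrho$ is as in \eqref{eq:vpi}. The excerpt already records that $\Delta(\nu)\ge 0$ on $[0,1]$, that $\vpi$ is decreasing, and that for $\nu\in(\nu_0,1]$ both roots $\vtheta_{\min}\le\vtheta_{\max}$ are positive with $\vtheta_{\min}\le 1$; I will use these facts without re-deriving them.

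First I would settle the two endpoint identities. By the definition \eqref{nu0_cond} of $\nu_0$ one has $\sqrt{1-\nu_0^2}=(1-\vrho)/\vsigma$, whence $\vpi(\nu_0)=1$ and $\Delta(\nu_0)=\nu_0^2$, so $\vtheta_{\min}(\nu_0)=\nu_0-\nu_0=0$ (using $\nu_0\ge0$); and $\vpi(1)=\vrho$ gives $\Delta(1)=\vrho^2$ and hence $\vtheta_{\min}(1)=1-\vrho$ (using $\vrho\ge0$). For the monotonicity I would differentiate on the open interval $(\nu_0,1)$, on which $\Delta(\nu)>0$: from
\[
\frac{d}{d\nu}\vtheta_{\min}(\nu)=1-\frac{\Delta'(\nu)}{2\sqrt{\Delta(\nu)}}
\]
and the short computation $\Delta'(\nu)=2\nu\bigl(1-\vsigma^2-\tfrac{\vsigma\vrho}{\sqrt{1-\nu^2}}\bigr)$, it follows that $\Delta'(\nu)<0$ because $\vsigma=\sqrt{\tfrac{1+\ripb}{1-\ripb}}>1$ and $\nu>\nu_0>0$ (here $\nu_0>0$ since $1-\vrho<\vsigma$). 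Hence $\frac{d}{d\nu}\vtheta_{\min}(\nu)=1+\tfrac{|\Delta'(\nu)|}{2\sqrt{\Delta(\nu)}}>0$ on $(\nu_0,1)$, and strict monotonicity extends to $(\nu_0,1]$ by continuity of $\vtheta_{\min}$ at $\nu=1$.

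The one point that needs a little care — essentially the only non-mechanical step — is the limit $\nu\to1^-$, where $\vpi'(\nu)$ and therefore $\Delta'(\nu)$ diverge. This is harmless: $\Delta(1)=\vrho^2$ stays finite, the sign of the derivative is unchanged on the whole open interval, and $\vtheta_{\min}$ extends continuously to $\nu=1$ with value $1-\vrho$; in the degenerate case $\vrho=0$ one reads $\vtheta_{\min}(\nu)\to1$ off the expansion $\Delta(\nu)\sim2(\vsigma^2-1)(1-\nu)$. Assembling the endpoint values with the sign of the derivative gives the statement.
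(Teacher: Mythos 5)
Your proof is correct, and both the endpoint evaluations and the sign computation coincide with the paper's; the only real difference is in how the monotonicity is packaged. The paper writes $\vtheta_{\min}(\nu)=\nu-\sqrt{-\sfG(1-\nu^2)}$ with $\sfG(a)=a-(\vsigma\sqrt{a}+\vrho)^2$ and concludes by composing monotone maps: Lemma~\ref{G_lem} gives that $-\sfG$ is positive and increasing, hence $-\sqrt{-\sfG(1-\nu^2)}$ is increasing in $\nu$, and $\vtheta_{\min}$ is a sum of increasing functions. You instead differentiate $\vtheta_{\min}(\nu)=\nu-\sqrt{\Delta(\nu)}$ directly; since $\Delta(\nu)=-\sfG(1-\nu^2)$, your computation $\Delta'(\nu)=2\nu\bigl(1-\vsigma^2-\tfrac{\vsigma\vrho}{\sqrt{1-\nu^2}}\bigr)$ is exactly $2\nu\,\sfG'(1-\nu^2)$, i.e.\ the same sign fact in different clothing. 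What the paper's composition argument buys is that one never divides by $\sqrt{\Delta(\nu)}$, so the behavior as $\nu\to 1^-$ (where $\vpi'$ blows up for $\vrho>0$, and where $\Delta(1)=0$ when $\vrho=0$) never arises and no separate continuity/endpoint step is needed; your route needs $\Delta>0$ on the open interval and the continuity extension to $\nu=1$, which you correctly supply (including the degenerate $\vrho=0$ case), at the modest cost of those extra verifications. In exchange your argument yields the slightly stronger quantitative fact $\tfrac{d}{d\nu}\vtheta_{\min}\ge 1$ on $(\nu_0,1)$, which the paper does not state.
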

\begin{proof}
From Condition \ref{condition1}, $\vpi(\nu_0)=1$.  Thus, $\vtheta_{\min}(\nu_0)=\nu_0 - \sqrt{\nu_0^2 -1 +\vpi(\nu_0)^2 }=0$. Also, replacing $\nu=1$ and using the fact that ${\vpi(\nu)=\vsigma \sqrt{1-\nu^2}+\vrho=\vrho}$ at $\nu=1$ yields ${\vtheta_{\min}(1)=1-\vrho}$. Replacing $\vpi(\nu)=\vsigma \sqrt{1-\nu^2}+\vrho$ in $\vtheta_{\min}(\nu)$, we can write 
\begin{align}\label{thmin_dumm}
\vtheta_{\min}(\nu)&=\nu-\sqrt{-(1-\nu^2)+(\vsigma \sqrt{1-\nu^2}+\vrho)^2}\\
&=\nu - \sqrt{-\sfG(1-\nu^2)},
\end{align}
where ${\sfG: a \to a-(\vsigma \sqrt{a} + \vrho)^2}$ is as before. From Lemma \ref{G_lem}, it results that $-\sfG(.)$ is a positive increasing  function over $\bR_+$ and in particular $[0,1]$. Thus, $-\sqrt{-\sfG(.)}$ is a decreasing and $-\sqrt{-\sfG(1-\nu^2)}$ is an increasing function of $\nu$. Hence,  $\vtheta_{\min}(\nu)=\nu - \sqrt{-\sfG(1-\nu^2)}$ is an increasing function. 
\end{proof}

We first consider \eqref{thet_rel} and write $\sfF(\nu)=\sfM\circ \vtheta_{\min}(\nu)$, where ${\sfM: \bR_+ \to \bR}$ is defined by
\begin{align}
\sfM(a)&=1+\frac{(1-a)^2-(\vsigma(1-a) + \vrho)^2}{2a}\\
&=\vsigma \vrho + \vsigma^2 + \frac{1}{2} \Big ( (1-\vsigma^2) a + \frac{1-(\vsigma + \vrho)^2}{a}\Big).
\end{align}
\begin{lemma}\label{M_lem}
{$\sfM$ is an increasing  function over} $(0,1]$. \hfill $\square$
\end{lemma}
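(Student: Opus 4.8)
The statement is a one-variable calculus fact about the explicit rational function
\[
\sfM(a)=\vsigma\vrho+\vsigma^2+\tfrac12\Big((1-\vsigma^2)a+\frac{1-(\vsigma+\vrho)^2}{a}\Big),
\]
so the plan is simply to differentiate and control the sign of $\sfM'$ on $(0,1]$, using only the two structural facts already available, namely $\vsigma=\sqrt{\tfrac{1+\ripb}{1-\ripb}}>1$ and $\vrho=\tfrac{2\zeta}{\sqrt{1-\ripb}}\geq 0$.

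First I would compute
\[
\sfM'(a)=\tfrac12\Big((1-\vsigma^2)+\frac{(\vsigma+\vrho)^2-1}{a^2}\Big).
\]
Since $\vsigma>1$ we have $1-\vsigma^2<0$, and since moreover $(\vsigma+\vrho)^2\geq\vsigma^2>1$ we have $(\vsigma+\vrho)^2-1>0$; hence the second term is a positive, decreasing function of $a$ on $(0,1]$. Therefore its minimum over $a\in(0,1]$ is attained at $a=1$, which gives the lower bound
\[
\sfM'(a)\ \geq\ \tfrac12\big((1-\vsigma^2)+(\vsigma+\vrho)^2-1\big)\ =\ \tfrac12\big((\vsigma+\vrho)^2-\vsigma^2\big)\ =\ \vsigma\vrho+\tfrac{\vrho^2}{2}\ \geq\ 0,
\]
where the final inequality uses $\vsigma>0$ and $\vrho\geq 0$. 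Thus $\sfM'\geq 0$ throughout $(0,1]$ (and in fact $\sfM'>0$ on the open interval $(0,1)$), which proves that $\sfM$ is increasing on $(0,1]$.

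There is essentially no obstacle here; the only thing worth being careful about is the bookkeeping of signs — in particular that both $1-\vsigma^2$ and $1-(\vsigma+\vrho)^2$ are negative, so that bounding $1/a^2$ from below by $1$ on $(0,1]$ goes in the correct direction and yields the monotone lower bound at $a=1$. If a strictly increasing statement were needed, one would note that $\vsigma\vrho+\vrho^2/2>0$ whenever $\vrho>0$, and that when $\vrho=0$ one has $\sfM'(a)=\tfrac{1-\vsigma^2}{2}(1-a^{-2})>0$ for $a\in(0,1)$, so $\sfM$ is strictly increasing on $(0,1)$ in all cases.
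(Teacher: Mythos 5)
Your proof is correct and follows essentially the same route as the paper: compute $\sfM'(a)=\tfrac12\big((1-\vsigma^2)+\tfrac{(\vsigma+\vrho)^2-1}{a^2}\big)$ and verify it is nonnegative on $(0,1]$ using $\vsigma>1$, $\vrho\geq 0$. The only cosmetic difference is that you lower-bound $\sfM'$ by its value at $a=1$ (giving $\vsigma\vrho+\vrho^2/2\geq 0$), whereas the paper locates the root $a_0=\sqrt{\tfrac{(\vsigma+\vrho)^2-1}{\vsigma^2-1}}\geq 1$ and notes $\sfM'\geq 0$ on $(0,a_0]$; both are valid sign checks of the same derivative.
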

\begin{proof}
Taking the derivative of $\sfM(a)$, we have 
\begin{align}
\sfM'(a)&=\frac{1}{2} \Big ( (1-\vsigma^2)  - \frac{1-(\vsigma + \vrho)^2}{a^2}\Big).
\end{align}
Note that $\sfM'(a)$ has a  positive root at $a_0=\sqrt{\frac{(\vsigma+\vrho)^2-1}{\vsigma^2-1}}\geq1$ as $\vsigma>1$ and $\vrho\geq0$. Moreover, $\sfM'(0^+)=+\infty$ and $\sfM'(a)\geq 0$ for $a\in (0, a_0]$ and in particular for all $a\in(0,1]$. This implies that $\sfM(a)$ is an increasing function of $a$ for $a\in(0,1]$. 
\end{proof}
\begin{proposition}\label{F_mon_con}
$\sfF_0(\nu)$ is an increasing function over $(\nu_0,1]$.
\end{proposition}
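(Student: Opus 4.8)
The plan is to read the monotonicity of $\sfF_0$ directly off the explicit representation already established, rather than to differentiate $\sfF_0$ itself. Recall from \eqref{F0_exp} that on $(\nu_0,1]$ we have $\sfF_0(\nu)=\big(\sfF(\nu)\big)_+$, and that \eqref{thet_rel} together with the definition of $\sfM$ gives the factorization $\sfF(\nu)=\sfM\big(\vtheta_{\min}(\nu)\big)$, where $\vtheta_{\min}(\nu)$ is the smaller root in \eqref{roots}. So the whole statement reduces to showing that the composition $\sfM\circ\vtheta_{\min}$ is (weakly) increasing on $(\nu_0,1]$ and then noting that post-composing with the positive-part map preserves this.

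First I would invoke Lemma \ref{thet_min_inc}, which says that $\vtheta_{\min}(\nu)$ is increasing on $(\nu_0,1]$ with $\vtheta_{\min}(\nu_0)=0$ and $\vtheta_{\min}(1)=1-\vrho$. Combined with the remarks following \eqref{root_rel} — namely $\vtheta_{\min}(\nu)>0$ for $\nu\in(\nu_0,1]$ (since $\vtheta_{\min}\vtheta_{\max}=1-\vpi(\nu)^2>0$ and $\vtheta_{\max}>0$) and $\vtheta_{\min}(\nu)\le 1$ (since $\vtheta_{\min}^2\le\vtheta_{\min}\vtheta_{\max}\le 1$) — this shows that $\nu\mapsto\vtheta_{\min}(\nu)$ maps $(\nu_0,1]$ increasingly into $(0,1]$. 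Second, I would invoke Lemma \ref{M_lem}, which gives that $\sfM$ is increasing on $(0,1]$ (indeed on $(0,a_0]\supseteq(0,1]$). Since the range of $\vtheta_{\min}$ lies inside this interval of monotonicity of $\sfM$, the composition $\sfF(\nu)=\sfM(\vtheta_{\min}(\nu))$ is increasing on $(\nu_0,1]$; and because $x\mapsto(x)_+=\max\{x,0\}$ is nondecreasing, $\sfF_0(\nu)=(\sfF(\nu))_+$ inherits the monotonicity, finishing the argument.

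The only genuinely delicate point — and the one I would be most careful about — is the matching of domains: Lemma \ref{M_lem} only delivers monotonicity of $\sfM$ on $(0,1]$, and in fact $\sfM(a)\to-\infty$ as $a\to 0^+$ because the coefficient $1-(\vsigma+\vrho)^2$ is negative (as $\vsigma>1$). Hence one cannot be cavalier near the left endpoint $\nu_0$, where $\vtheta_{\min}(\nu_0)=0$; this is exactly why the statement is phrased on the half-open interval $(\nu_0,1]$, and $\sfF(\nu)\to-\infty$ as $\nu\downarrow\nu_0$, which is harmless after taking the positive part and is consistent with $\sfF_0\equiv 0$ on $[0,\nu_0]$ from the second-case analysis (so $\sfF_0$ is continuous and nondecreasing across $\nu_0$). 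I would also flag that ``increasing'' here must be read weakly: since $\sfF$ is negative for $\nu$ close to $\nu_0$, $\sfF_0$ stays flat at $0$ on an initial sub-interval before it begins to rise. No computation beyond Lemmas \ref{thet_min_inc} and \ref{M_lem} is required.
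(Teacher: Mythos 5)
Your proposal is correct and follows essentially the same route as the paper's own proof: factor $\sfF=\sfM\circ\vtheta_{\min}$, use Lemma~\ref{thet_min_inc} and Lemma~\ref{M_lem} to get monotonicity of the composition on $(\nu_0,1]$ (noting $\vtheta_{\min}$ maps this interval into $(0,1]$), and then pass through the nondecreasing positive-part map to conclude for $\sfF_0$. Your extra care about the left endpoint $\nu_0$ and the weak reading of ``increasing'' is a fair clarification but does not change the argument.
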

\begin{proof}
We first prove that $\sfF(\nu)$ is an increasing function of $\nu$.
Note that $\sfF(\nu)=\sfM\circ \vtheta_{\min}(\nu)$, where for ${\nu \in (\nu_0,1]}$, $\vtheta_{\min}(\nu)\in (0,1]$, over which $\sfM$ is an increasing function from Lemma \ref{M_lem}. Moreover, $\vtheta_{\min}(\nu)$ is also an increasing function of $\nu$ for ${\nu \in (\nu_0,1]}$ from Lemma \ref{thet_min_inc}. This implies the composition function $\sfF(\nu)=\sfM\circ\vtheta_{\min}(\nu)$ is an increasing function of $\nu$ over $\nu\in (\nu_0,1]$. From \eqref{F0_exp} and the fact that $a \to (a)_+$ is an increasing function of $a$, we obtain that $\sfF_0(\nu)$ is an increasing function of $\nu$ over $[0,1]$. This completes the proof.
\end{proof}

Proposition \ref{F_mon_con} implies that
$\sfF_0(\nu)\in [0, (\sfF_{\max})_+]$, where
\begin{align}
\sfF_{\max}(\vsigma, \vrho):=&\sfM(\vtheta_{\min}(1))\stackrel{(a)}{=}\sfM(1-\vrho)\nonumber\\
=&1+ \frac{1-(1+\vsigma)^2}{2} \frac{\vrho^2}{1-\vrho},\label{F_max}
\end{align} 
where in $(a)$ we used the fact that $\vtheta_{\min}(1)=1-\vrho$ from Lemma \ref{thet_min_inc}. 
From \eqref{F_max}, it is seen that $\sfF_{\max}\leq 1$ as $\vsigma>1$ and $\vrho \in [0,1)$. Moreover, $\sfF_{\max}>0$ when $\vrho$ is sufficiently small (large $\snr$) and $\vsigma$ is not far from $1$ (small R-RIP parameter $\ripb$), and  $\lim _{{(\vsigma, \vrho)\to (1,0)}} \sfF_{\max}(\vsigma, \vrho)= 1$. Since $\sfF(\nu)$ is an  increasing function of $\nu$, provided that $\sfF_{\max}(\vsigma, \vrho)>0$ there would exist a ${\nu_1(\vsigma, \vrho) \in (\nu_0,1]}$ such that $\sfF(\nu)>0$, thus, $\sfF_0(\nu)>0$ for $\nu \in (\nu_1,1]$. 
A direct calculation by setting $\sfF(\nu_1)=\sfM\circ \vtheta_{\min}(\nu_1)=0$ yields
\begin{align}
\vtheta_{\min}(\nu_1(\vsigma, \vrho))=\frac{(\vsigma+\vrho)^2-1}{\vsigma^2 + \vsigma \vrho + \sqrt{\vsigma^2+ (\vsigma+\vrho)^2 -1}},
\end{align}
which using $\vtheta_{\min}(\nu)=\nu -\sqrt{\nu^2-1+(\vsigma \sqrt{1-\nu^2}+\vrho)^2}$ can be solved to obtain the value of $\nu_1(\vsigma, \vrho)$ explicitly.

\subsection{Fixed points of $\sf\sfF_0(\nu)$}
The crucial step in our analysis is based on the fixed points of $\sfF_0(\nu)$ defined by $\{\nu\in [\nu_1,1]: \sfF_0(\nu)=\nu\}$. 
Note that the possible fixed points of $\sfF_0(\nu)$  in the range $\nu \in [0,1]$ are also fixed points of $\sfF(\nu)$. Form \eqref{thet_rel}, these fixed points, provided that they exist, should satisfy the following equation 
\begin{align}
\nu=1+ {\frac{(1-\vtheta_{\min}(\nu))^2 -(\vsigma(1-\vtheta_{\min}(\nu))+\vrho)^2}{2 \vtheta_{\min}(\nu)}}.
\end{align}
A straightforward calculation yields
\begin{align}
\vtheta_{\min}^2 +1 -2 \nu \vtheta_{\min} -(\vsigma(1-\vtheta_{\min}(\nu))+\vrho)^2=0.
\end{align}
Using the identity ${\sfQ(\vtheta_{\min})=\vtheta_{\min}^2 - 2 \vtheta_{\min} \nu+1-\vpi^2=0}$ as in \eqref{quad_eq}  and doing some simplification results in  
\begin{align}
\vsigma \sqrt{1-\nu^2} + \vrho=\vpi(\nu)=\vsigma(1-\vtheta_{\min}(\nu)) + \vrho,
\end{align}
which can be  simplified to $\vtheta_{\min}(\nu)=1-\sqrt{1-\nu^2}$ and  written, using \eqref{eq:vpi} and \eqref{roots}, more explicitly in terms of $\nu$ as follows 
\begin{align*}
\nu-\sqrt{\nu^2-1+(\vsigma \sqrt{1-\nu^2}+\vrho)^2}=\vtheta_{\min}(\nu)=1-\sqrt{1-\nu^2}.
\end{align*}
By introducing the auxiliary variable ${\nu=\sin(\alpha)}$ for $\alpha \in [0,\frac{\pi}{2}]$, we can write this equivalently as 
\begin{align}\label{eq:fx_pt}
\big(\vsigma \cos(\alpha) + \vrho\big)^2 - \cos(\alpha)^2=\big(1-\cos(\alpha)-\sin(\alpha) \big)^2.
\end{align}
We will consider the noiseless ($\vrho=0$) and the noisy ($\vrho \in (0,1)$) cases separately.

\subsubsection{Noiseless Case  ($\vrho=0$)}
In the noiseless case, using the fact that ${\cos(\alpha)+\sin(\alpha) \geq 1}$ for $\alpha\in [0, \frac{\pi}{2}]$, \eqref{eq:fx_pt} yields 
\begin{align}\label{eq:fix1}
\sin(\alpha) + (1-\sqrt{\vsigma^2-1}) \cos(\alpha)=1.
\end{align}
By introducing  $\tan(\beta)=1-\sqrt{\vsigma^2-1}$ for $\beta \in (-\frac{\pi}{2}, \frac{\pi}{2})$, we can write \eqref{eq:fix1} as $\sin(\alpha+\beta)=\cos(\beta)$. One of the solutions is $\alpha_{\max}=\frac{\pi}{2}$ and corresponds to the largest fixed point $\nu_{\max}=\sin(\frac{\pi}{2})=1$. The second solution is given by $\alpha+\beta=\frac{\pi}{2} - \beta$, or $\alpha_{\min}=\frac{\pi}{2} - 2\beta$ and lies in the allowed range $[0, \frac{\pi}{2}]$ provided that $\beta \in [0, \frac{\pi}{4}]$ or equivalently $\tan(\beta) \in [0,1]$. This restricts the range of  permitted $\vsigma$ to $\vsigma \in (1, \sqrt{2}]$. From $\vsigma=\sqrt{\frac{1+\ripb}{1-\ripb}}$, this yields the bound $\ripb \in (0, \frac{1}{3})$ on  R-RIP parameter $\ripb$. Moreover,
\begin{align}
\nu_{\min}(\vsigma)&=\sin(\alpha_{\min})=\cos(2\beta)=\frac{1-\tan(\beta)^2}{1+\tan(\beta)^2}\nonumber\\
&=\frac{1-\vsigma^2 + 2\sqrt{\vsigma^2-1}}{1+\vsigma^2+2\sqrt{\vsigma^2-1}}.\label{nu_min_noiseless}
\end{align}
It is seen that $\nu_{\min}(\vsigma) \in (0,1)$ and $\nu_{\min}(\vsigma) \to 0$ as $\vsigma \to 1$.

\subsubsection{Noisy Case ($\vrho\in (0,1)$)}
A full analysis of the possible roots of \eqref{eq:fx_pt} is more involved in the noisy case. 
We first write \eqref{eq:fx_pt} after factorizing the first term as
$\big ((\vsigma-1)\cos(\alpha)+ \vrho\big ) \big ((\vsigma-1)\cos(\alpha)+ \vrho\big ) =\big (\cos(\alpha)+\sin(\alpha)-1\big) ^2$. Since $\vsigma \in (1, \infty)$ and $\cos(\alpha) \in [0,1]$ for $\alpha \in [0, \frac{\pi}{2}]$, we have $((\vsigma\pm 1)\cos(\alpha)+ \vrho\big ) >0$. Thus, we can write \eqref{eq:fx_pt} equivalently as $\upsilon(\alpha)=0$ where 
\begin{align}
\upsilon(\alpha)={\frac{(\vsigma-1)\cos(\alpha)+ \vrho}{\cos(\alpha)+\sin(\alpha)-1}-\frac{\cos(\alpha)+\sin(\alpha)-1}{(\vsigma+1)\cos(\alpha)+ \vrho}}. \label{upsi_fx}
\end{align}
\begin{proposition}\label{two_roots}
Let $\upsilon(\alpha)$ be as in \eqref{upsi_fx}.  For any $\vsigma \in (1, \infty)$ and $\vrho \in (0, 1)$,   $\upsilon(\alpha)$ is a convex function of $\alpha$ for $\alpha \in (0, \frac{\pi}{2})$, with $\upsilon(0^+)=\upsilon(\frac{\pi}{2}^-)=+\infty$. Moreover, $\upsilon(\alpha)$ has at most two roots in $(0, \frac{\pi}{2})$. \hfill $\square$
\end{proposition}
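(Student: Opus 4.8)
The plan is to treat the three assertions in order: the boundary blow-up, then convexity, then the bound on the number of roots (which will follow from the first two).

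Throughout, set $p(\alpha):=(\vsigma-1)\cos\alpha+\vrho$, $q(\alpha):=(\vsigma+1)\cos\alpha+\vrho$ and $u(\alpha):=\cos\alpha+\sin\alpha-1=\sqrt2\sin(\alpha+\tfrac\pi4)-1$, so that $\upsilon=\tfrac pu-\tfrac uq$. On $(0,\tfrac\pi2)$ one has $p>0$, $q>0$ (since $\vsigma>1$, $\vrho>0$, $\cos\alpha\ge0$) and $u>0$ strictly (since $\sqrt2\sin(\alpha+\tfrac\pi4)>1$ there), so $\upsilon$ is well defined and real-analytic on $(0,\tfrac\pi2)$. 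For the boundary behaviour I would note that $u(\alpha)\to0^+$ both as $\alpha\to0^+$ and as $\alpha\to\tfrac\pi2^-$, whereas $p,q$ stay bounded away from $0$ (limits $(\vsigma-1)+\vrho,(\vsigma+1)+\vrho$ at $0$, and $\vrho,\vrho$ at $\tfrac\pi2$); hence $p/u\to+\infty$ and $u/q\to0$ at both endpoints, giving $\upsilon(0^+)=\upsilon(\tfrac\pi2^-)=+\infty$.

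For convexity I would show $p/u$ is convex and $u/q$ is concave on $(0,\tfrac\pi2)$ and add the two. The computation is made tractable by the second-order identities $u''=-(1+u)$, $p''=\vrho-p$, $q''=\vrho-q$, obtained by direct differentiation. Substituting these into $\big(\tfrac fg\big)''=\dfrac{f''g^2-fg''g-2f'g'g+2fg'^2}{g^3}$, the $u^2$-terms cancel and the numerator of $(p/u)''$ collapses to $\vrho u^2+p\,(u+2u'^2)-2p'u'u$ (over $u^3>0$), while that of $(u/q)''$ collapses to $-q\,(q+\vrho u+2u'q')+2uq'^2$ (over $q^3>0$). It then remains to pin the sign of these two explicit trigonometric expressions, using $p'=-(\vsigma-1)\sin\alpha<0$, $q'=-(\vsigma+1)\sin\alpha<0$, $u'=\cos\alpha-\sin\alpha$, $u'^2=1-\sin2\alpha$. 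I would split at $\alpha=\tfrac\pi4$ (where $u'$ changes sign): for $(p/u)''$ on $\alpha\le\tfrac\pi4$ nonnegativity is immediate since then $-2p'u'u\ge0$ and every term is nonnegative, while in the remaining cases one absorbs the cross-terms $-2p'u'u$ and $2uq'^2$ into $p\,(u+2u'^2)$ resp.\ $q\,(q+\vrho u)$ using $p\ge(\vsigma-1)\cos\alpha$, $q\ge(\vsigma+1)\cos\alpha$ together with an elementary trigonometric estimate. I expect this sign verification on $(\tfrac\pi4,\tfrac\pi2)$ to be the only genuinely delicate step.

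Finally, a convex function on an interval that has three distinct zeros must vanish identically on the subinterval they span (a short convexity argument: each interior point is squeezed between a value obtained as a convex combination of two zeros and a value forced to be nonnegative because a zero is itself a convex combination of that point and a third zero). But $\upsilon(\alpha)=0$ is equivalent to $p(\alpha)q(\alpha)=u(\alpha)^2$, i.e.\ to the vanishing of a fixed degree-$2$ trigonometric polynomial whose $\sin2\alpha$-coefficient equals $-1$; being not identically zero, its zero set is discrete. Hence $\upsilon$ cannot vanish on a nondegenerate subinterval, so by convexity it has at most two zeros in $(0,\tfrac\pi2)$; and since $\upsilon\to+\infty$ at both ends, the set $\{\upsilon\le0\}$ is a (possibly empty) compact subinterval whose endpoints are precisely those zeros.
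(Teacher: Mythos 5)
Your overall architecture (boundary blow-up, convexity, then a root count) matches the statement, your boundary argument is fine, and your closing argument — that a merely convex $\upsilon$ with three zeros would vanish on an interval, which is impossible since $\upsilon=0$ is equivalent to $pq=u^2$, a nonzero trigonometric polynomial with discrete zero set — is correct as far as it goes. But the heart of the proposition, convexity of $\upsilon$ on $(0,\tfrac\pi2)$, is not actually established: you reduce it to the two sign claims "$\vrho u^2+p(u+2u'^2)-2p'u'u\ge0$" and "$q(q+\vrho u+2u'q')\ge 2uq'^2$" and then explicitly defer the verification on $(\tfrac\pi4,\tfrac\pi2)$ as "the only genuinely delicate step," sketching an absorption via $p\ge(\vsigma-1)\cos\alpha$, $q\ge(\vsigma+1)\cos\alpha$. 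That deferred step is precisely where the content lies, and it is not a routine estimate: at $\vrho=0$ the second claim reduces, after simplification, to $(1-\sin\alpha)^2\ge0$ (equality in the limit $\alpha\to\tfrac\pi2$), and in the first claim the leading terms cancel through second order as $\alpha\to\tfrac\pi2$, so there is essentially no slack and a crude term-by-term absorption of the cross terms need not close. Also note that the cross term $2u'q'$ in the $u/q$ expression is negative on $(0,\tfrac\pi4)$ as well, so your "split at $\pi/4$" only disposes of the $p/u$ half there; the concavity of $u/q$ needs an argument on the whole interval. As written, the proof has a genuine gap at its central inequality.

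For comparison, the paper avoids this delicacy by proving one lemma by direct computation — $\phi_b(\alpha)=\frac{\cos\alpha+\sin\alpha-1}{\cos\alpha+b}$ is (strictly) concave on $[0,\tfrac\pi2]$ for every $b\ge0$ — and then obtaining convexity of the reciprocal-type term $\frac{\cos\alpha+b}{\cos\alpha+\sin\alpha-1}$ by writing it as $-\iota\circ\phi_b$ with $\iota(x)=-1/x$ increasing and concave on $(0,\infty)$. Since $\upsilon$ is the first such (strictly convex) term minus a concave one, $\upsilon$ is strictly convex, and "at most two roots" follows in one line from Jensen applied to three hypothetical roots; your analyticity/discrete-zero-set workaround then becomes unnecessary. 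If you want to salvage your route, either carry out the two sign verifications in full (they are true, but tight), or prove only the concavity of $u/(\cos\alpha+b)$ by your second-derivative identities and borrow the composition trick for the other term.
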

\begin{proof}
Proof in Appendix \ref{app:two_roots}.
\end{proof}
Fig.\,\ref{fig:upsi_fx} illustrates $\upsilon(\alpha)$  for ${(\vsigma, \vrho)=(1.03, 0.06)}$, where it is seen that it has two roots. 
We will consider only those $(\vsigma, \vrho)$ in
\begin{align}\label{D_xi_rho}
\Dxirho:=\big \{(\vsigma, \vrho): \text{$\upsilon(\alpha)$ has two  roots}\big\},
\end{align}
and will denote the roots by $\alpha_{\min}$ and $\alpha_{\max}$ or by $\alpha_{\min}(\vsigma, \vrho)$ and $\alpha_{\max}(\vsigma, \vrho)$ to emphasize the implicit dependence on $(\vsigma, \vrho)$.
We also denote the corresponding fixed points of $\sfF_0$ by $\nu_{\min}=\sin(\alpha_{\min})$ and $\nu_{\max}=\sin(\alpha_{\max})$.
In view of Proposition \ref{two_roots}, a simple sufficient condition for $(\vsigma, \vrho) \in \Dxirho$ is obtained by setting $\upsilon(\frac{\pi}{4})< 0$, which can be simplified to  
\begin{align}
\vsigma + \vrho\sqrt{2} < \sqrt{7-4 \sqrt{2}}\approx 1.16,
\end{align}
which is satisfied for  $\vsigma \to 1$ (small R-RIP parameter $\ripb$) and $\vrho \to 0$ (large $\snr$).
We have the following useful result.

\begin{figure}[t]
	\centering
	\includegraphics[scale=0.9]{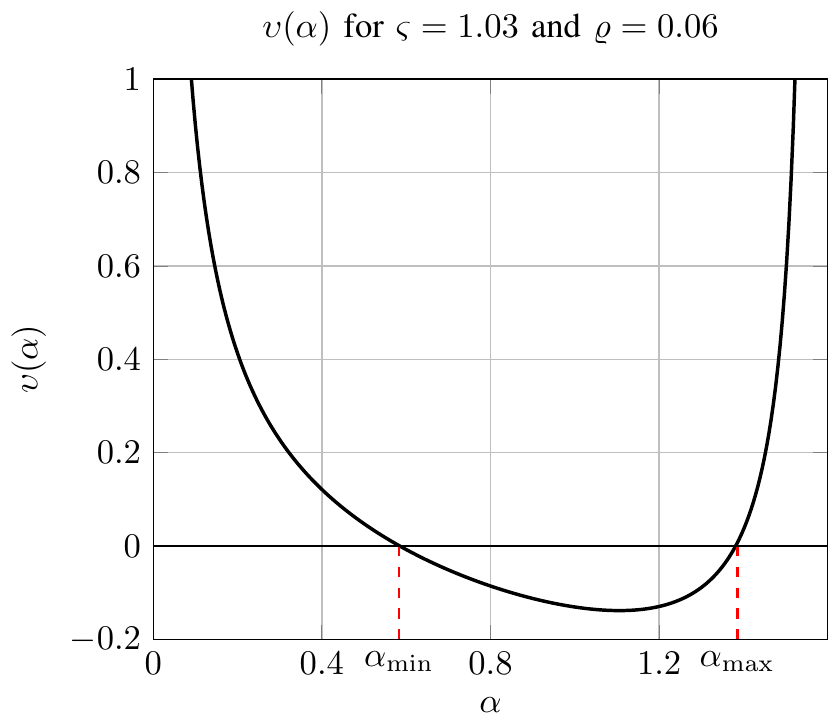}%
	\caption{Function $\upsilon(\alpha)$ for $\alpha \in (0, \frac{\pi}{2})$ and $\vsigma=1.03$, $\vrho=0.06$.}
	\label{fig:upsi_fx}
\end{figure}

\begin{proposition}\label{root_prop}
	Let $\upsilon(\alpha)$ be as in \eqref{upsi_fx} and let $\alpha_{\min}(\vsigma, \vrho)$ and $\alpha_{\max}(\vsigma, \vrho)$ be the roots of $\upsilon(\alpha)$ for $(\vsigma, \vrho) \in \Dxirho$. Then, 
	\noindent $(i)$ $\Dxirho$ is an open subset of $(1, \infty) \times (0,1)$.
	
	\noindent $(ii)$ $\alpha_{\min}$ and $\alpha_{\max}$ are differentiable functions in $\Dxirho$.
	
	\noindent $(iii)$ $\alpha_{\min}$ ($\alpha_{\max}$) is an increasing (decreasing) function of $(\vsigma, \vrho)$, i.e., $\alpha_{\min}(\vsigma', \vrho') \geq \alpha_{\min}(\vsigma, \vrho)$ for $\vsigma'\geq \vsigma$ and $\vrho'\geq \vrho$.
	
	\noindent $(iv)$ $\alpha_{\min}(\vsigma, \vrho) \to 0$ and $\alpha_{\max}(\vsigma, \vrho) \to \frac{\pi}{2}$ as ${(\vsigma, \vrho)\to (1,0)}$.
\end{proposition}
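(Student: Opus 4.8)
The plan is to work throughout with the factored form $\upsilon(\alpha;\vsigma,\vrho)=\frac{A}{D}-\frac{D}{B}$, where $A=(\vsigma-1)\cos\alpha+\vrho$, $B=(\vsigma+1)\cos\alpha+\vrho$, and $D=\cos\alpha+\sin\alpha-1$; on $\alpha\in(0,\frac{\pi}{2})$ one has $D>0$ and $B>0$, so $\upsilon$ is jointly smooth in $(\alpha,\vsigma,\vrho)$ on $(0,\frac{\pi}{2})\times(1,\infty)\times(0,1)$. Two elementary facts will drive everything. First, differentiating in the parameters gives $\partial_\vsigma\upsilon=\cos\alpha\big(\frac1D+\frac{D}{B^2}\big)>0$ and $\partial_\vrho\upsilon=\frac1D+\frac{D}{B^2}>0$ on $(0,\frac{\pi}{2})$, so $\upsilon$ increases pointwise when either parameter increases. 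Second, by Proposition~\ref{two_roots} the section $\alpha\mapsto\upsilon(\alpha;\vsigma,\vrho)$ is convex with $\upsilon(0^+)=\upsilon(\frac{\pi}{2}^-)=+\infty$; combined with ``at most two roots'', this makes $\upsilon(\cdot;\vsigma,\vrho)$ have two (automatically simple) roots exactly when its minimum value $m(\vsigma,\vrho):=\min_{\alpha}\upsilon(\alpha;\vsigma,\vrho)$ is negative, and in that case $\upsilon>0$ on $(0,\alpha_{\min})$, $\upsilon<0$ on $(\alpha_{\min},\alpha_{\max})$, and $\upsilon>0$ on $(\alpha_{\max},\frac{\pi}{2})$.

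For $(i)$ and $(ii)$ I would first record $\Dxirho=\{(\vsigma,\vrho):m(\vsigma,\vrho)<0\}$. The blow-up $\upsilon\to+\infty$ at the two endpoints is locally uniform in $(\vsigma,\vrho)$ (the singular term $A/D$ is continuous in the parameters and $A$ is bounded away from $0$ near each endpoint because $\vsigma>1$ and $\vrho>0$), so the infimum defining $m$ is attained on a compact subinterval of $(0,\frac{\pi}{2})$ that is stable under small parameter perturbations; hence $m$ is continuous and $\Dxirho=m^{-1}\big((-\infty,0)\big)$ is open in $(1,\infty)\times(0,1)$, which proves $(i)$. Since the pointwise sign facts also make $m$ nondecreasing in each of $\vsigma,\vrho$, the set $\Dxirho$ is a down-set (if $(\vsigma',\vrho')\in\Dxirho$ and $1<\vsigma\le\vsigma'$, $0<\vrho\le\vrho'$, then $(\vsigma,\vrho)\in\Dxirho$), a fact I will reuse in $(iii)$. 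For $(ii)$: at $\alpha_{\min}$ the section changes sign from $+$ to $-$, so $\partial_\alpha\upsilon(\alpha_{\min})\le 0$, and it cannot vanish, for then convexity would force $\upsilon\ge\upsilon(\alpha_{\min})=0$ everywhere, contradicting $\upsilon<0$ on $(\alpha_{\min},\alpha_{\max})$; thus $\partial_\alpha\upsilon(\alpha_{\min})<0$ and, symmetrically, $\partial_\alpha\upsilon(\alpha_{\max})>0$. The implicit function theorem applied to $\upsilon(\alpha;\vsigma,\vrho)=0$ then gives that $\alpha_{\min}$ and $\alpha_{\max}$ are differentiable (indeed smooth) on $\Dxirho$, with $\partial_{(\vsigma,\vrho)}\alpha_{\min}=-\partial_{(\vsigma,\vrho)}\upsilon/\partial_\alpha\upsilon$ evaluated at $\alpha_{\min}$, and similarly for $\alpha_{\max}$.

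Part $(iii)$ then follows by combining the IFT formula with the sign facts: $\partial_\vsigma\alpha_{\min}=-\partial_\vsigma\upsilon/\partial_\alpha\upsilon|_{\alpha_{\min}}>0$ (positive numerator over negative denominator) and likewise $\partial_\vrho\alpha_{\min}>0$, while $\partial_\vsigma\alpha_{\max}<0$ and $\partial_\vrho\alpha_{\max}<0$. To upgrade these infinitesimal statements to the stated monotonicity for arbitrary comparable pairs, given $(\vsigma,\vrho)\le(\vsigma',\vrho')$ both in $\Dxirho$ I would use the down-set property to conclude that the segment $(\vsigma_t,\vrho_t)=(1-t)(\vsigma,\vrho)+t(\vsigma',\vrho')$ stays in $\Dxirho$ for $t\in[0,1]$, and then integrate $\frac{d}{dt}\alpha_{\min}(\vsigma_t,\vrho_t)\ge 0$ (resp. $\le 0$ for $\alpha_{\max}$) over $[0,1]$. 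For $(iv)$, fix any $\alpha_0\in(0,\frac{\pi}{2})$: as $(\vsigma,\vrho)\to(1,0)$ we get $A\to 0$, $D\to D(\alpha_0)>0$, $B\to 2\cos\alpha_0>0$, so $\upsilon(\alpha_0;\vsigma,\vrho)\to -D(\alpha_0)/(2\cos\alpha_0)<0$; hence for $(\vsigma,\vrho)$ near $(1,0)$ one has $\upsilon(\alpha_0;\vsigma,\vrho)<0$, which places $(\vsigma,\vrho)$ in $\Dxirho$ and, since $\upsilon(0^+)=+\infty$, forces the smaller root to satisfy $\alpha_{\min}(\vsigma,\vrho)<\alpha_0$; as $\alpha_0$ was arbitrary, $\alpha_{\min}\to 0$. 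The same computation at an $\alpha_1$ near $\frac{\pi}{2}$ gives $\upsilon(\alpha_1;\vsigma,\vrho)\to -D(\alpha_1)/(2\cos\alpha_1)<0$, and with $\upsilon(\frac{\pi}{2}^-)=+\infty$ this forces $\alpha_{\max}(\vsigma,\vrho)>\alpha_1$ near $(1,0)$, so $\alpha_{\max}\to\frac{\pi}{2}$.

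I expect the main obstacle to be $(iii)$: the derivative signs drop out of the implicit function theorem immediately, but converting them into the global monotonicity as stated requires the nonobvious structural fact that $\Dxirho$ is a down-set, so that a monotone path joining two comparable parameter points can be kept inside $\Dxirho$. A secondary technical nuisance is the locally-uniform endpoint blow-up of $\upsilon$ needed to make $m$ continuous in $(i)$; everything else reduces to routine trigonometric estimates and the convexity supplied by Proposition~\ref{two_roots}.
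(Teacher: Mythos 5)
Your proposal is correct, and its skeleton for parts $(i)$--$(iii)$ matches the paper's: openness of $\Dxirho$ from the stability of a point where $\upsilon<0$ under parameter perturbation (the paper uses the midpoint of the two roots, you use the minimum value $m(\vsigma,\vrho)$, which amounts to the same thing and only needs upper semicontinuity of $m$, so your worry about locally uniform endpoint blow-up is dispensable); nonvanishing of $\partial_\alpha\upsilon$ at the roots via strict convexity from Proposition~\ref{two_roots}; and the implicit-function-theorem sign computation combined with $\partial_\vsigma\upsilon>0$, $\partial_\vrho\upsilon>0$ (you compute these partials explicitly, the paper argues them qualitatively from the numerator/denominator structure of \eqref{upsi_fx}). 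You genuinely deviate in two places, both to your advantage. First, for $(iii)$ the paper stops at the derivative signs, whereas the stated inequality $\alpha_{\min}(\vsigma',\vrho')\geq\alpha_{\min}(\vsigma,\vrho)$ for comparable pairs requires knowing that the connecting segment stays inside $\Dxirho$; your observation that $\Dxirho$ is a down-set (because $\upsilon$ is pointwise increasing in the parameters, so is $m$) and the subsequent integration along the segment closes a step the paper leaves implicit. Second, for $(iv)$ the paper works with the fixed-point equation \eqref{eq:fx_pt}, bounding $(\cos\alpha+\sin\alpha-1)^2\leq(\vsigma+\vrho)^2-1$ and invoking $\cos\alpha+\sin\alpha=\sqrt{2}\cos(\alpha-\tfrac{\pi}{4})$ to trap the roots near the endpoints; your argument instead fixes an interior $\alpha_0$, notes $\upsilon(\alpha_0;\vsigma,\vrho)\to-D(\alpha_0)/(2\cos\alpha_0)<0$ as $(\vsigma,\vrho)\to(1,0)$, and concludes $\alpha_{\min}<\alpha_0<\alpha_{\max}$ eventually. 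This is shorter, avoids the auxiliary region $\clD$ and the trigonometric identity, and simultaneously shows that a full neighborhood of $(1,0)$ in the parameter domain lies in $\Dxirho$, which the paper's construction of $\clD$ also needs; the paper's version, on the other hand, gives explicit quantitative bounds on how fast the roots approach $0$ and $\tfrac{\pi}{2}$ in terms of $(\vsigma+\vrho)^2-1$, which your limit argument does not.
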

\begin{proof}
	Proof in Appendix \ref{app:root_prop}.
\end{proof}

\subsection{Evolution Equation for Alternating Minimization}
Let $\nu^t=\nu_{\check{\bfS}, \bfS^t}$ be the similarity factor of the solution $\bfS^t$ to the target $\check{\bfS}$ obtained at iteration $t$ of \algname. Although we cannot control the value of $\nu^t$, we can guarantee that $\nu^{t+1} \in \clF(\nu^t)$. In particular, this implies that $\nu^{t+1} \geq \sfF_0(\nu^t)$, where $\sfF_0(\nu)$ denotes the lower envelope of $\clF(\nu)$ as before.  
Since $\sfF_0$ is an increasing function, by repeated application of $\sfF_0$, it results that $\nu^{t+1} \geq \sfF_0^{(t)}(\nu_{\check{\bfS},\bfS^1})$, where 
\begin{align}
\sfF_0^{(t)}=\underbrace{\sfF_0 \circ \dots \circ \sfF_0}_{\text{$t$ times }}
\end{align} 
denotes the $t$-th order composition of $\sfF_0$.
\begin{proposition}[Evolution Equation]\label{EE_prop}
Let $(\vsigma, \vrho) \in \Dxirho$ and let  $\nu_{\min}(\vsigma, \vrho)$ and $\nu_{\max}(\vsigma, \vrho)$ be the two fixed points of $\sfF_0$. Let $\bfS^1$ be an initialization with $\nu^1=\nu_{\bfS^1, \check{\bfS}} > \nu_{\min}$. Then, $\liminf_{t\to \infty} \nu^t \geq \nu_{\max}$. \hfill $\square$
\end{proposition}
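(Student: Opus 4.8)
\emph{Proof strategy.} The plan is to control \algname\ by a scalar \emph{monotone comparison iteration} built from the lower envelope $\sfF_0$. Recall from Section~\ref{sec:full_iter} that $\nu^{t+1}\in\clF(\nu^t)$, hence $\nu^{t+1}\geq\sfF_0(\nu^t)$, and from Proposition~\ref{F_mon_con} (together with $\sfF_0\equiv 0$ on $[0,\nu_0]$, see \eqref{F0_exp}) that $\sfF_0$ is nondecreasing on $[0,1]$. First I would introduce the auxiliary recursion $a^1:=\nu^1$, $a^{t+1}:=\sfF_0(a^t)$. Since $\sfF_0$ is nondecreasing and $\nu^1\geq a^1$, induction gives $\nu^t\geq a^t$ for all $t$; since also $\nu^1>\nu_{\min}$ and $\sfF_0$ fixes $\nu_{\min}$, the same monotonicity keeps $a^t>\nu_{\min}$. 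Thus it suffices to prove $\liminf_t a^t\geq\nu_{\max}$.

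The key step is to pin down the sign of $\sfF_0-\mathrm{id}$ on $(\nu_{\min},\nu_{\max})$. Since $\sfF(\nu_{\min})=\nu_{\min}>0$ forces $\nu_{\min}>\nu_1$, on $(\nu_{\min},\nu_{\max})$ we have $\sfF_0=\sfF$. Using the root identity $\sfQ(\vtheta_{\min})=\vtheta_{\min}^2-2\vtheta_{\min}\nu+1-\vpi(\nu)^2=0$ from \eqref{quad_eq}, a one-line computation gives $2\vtheta_{\min}(1-\nu)+(1-\vtheta_{\min})^2=\vpi(\nu)^2$; substituting this into \eqref{thet_rel} shows that $\sfF(\nu)-\nu$ has the sign of $\vpi(\nu)-\big(\vsigma(1-\vtheta_{\min})+\vrho\big)$, i.e.\ (as $\vtheta_{\min}\in[0,1]$ for $\nu\in(\nu_0,1]$) the sign of $\vtheta_{\min}(\nu)-\big(1-\sqrt{1-\nu^2}\big)$. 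Putting $\nu=\sin\alpha$ and inserting the explicit $\vtheta_{\min}$ from \eqref{roots}, this quantity has, up to a positive factor, the sign of $-\upsilon(\alpha)$ with $\upsilon$ as in \eqref{upsi_fx}. By Proposition~\ref{two_roots}, $\upsilon$ is convex on $(0,\frac{\pi}{2})$ with $\upsilon(0^+)=\upsilon(\frac{\pi}{2}^-)=+\infty$ and exactly the two roots $\alpha_{\min}<\alpha_{\max}$, so $\upsilon<0$ precisely on $(\alpha_{\min},\alpha_{\max})$. Translating back: $\sfF_0(\nu)=\sfF(\nu)>\nu$ exactly for $\nu\in(\nu_{\min},\nu_{\max})$; in particular $\nu_{\max}$ is the only fixed point of $\sfF_0$ in $(\nu_{\min},\nu_{\max}]$. (The noiseless case $\vrho=0$ is identical with $\nu_{\max}=1$, using \eqref{eq:fix1}–\eqref{nu_min_noiseless} in place of Proposition~\ref{two_roots}.)

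Finally I would run the monotone-convergence argument on $(a^t)$. If $\nu^1\geq\nu_{\max}$, then inductively $a^{t+1}=\sfF_0(a^t)\geq\sfF_0(\nu_{\max})=\nu_{\max}$, so $\nu^t\geq a^t\geq\nu_{\max}$ for every $t$. If instead $\nu_{\min}<\nu^1<\nu_{\max}$, then, combining the two facts above ($\sfF_0$ maps $[\nu_{\min},\nu_{\max}]$ into itself because it is nondecreasing and fixes both endpoints, and $\sfF_0>\mathrm{id}$ strictly on the open interval), one gets by induction that $a^t\in(\nu_{\min},\nu_{\max}]$ and $a^{t+1}=\sfF_0(a^t)\geq a^t$. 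Hence $(a^t)$ is nondecreasing and bounded, so it converges to some $\ell\in(\nu_{\min},\nu_{\max}]$; by continuity of $\sfF_0$ on $(\nu_1,1]$, passing to the limit in $a^{t+1}=\sfF_0(a^t)$ yields $\ell=\sfF_0(\ell)$, and since $\nu_{\max}$ is the unique fixed point in $(\nu_{\min},\nu_{\max}]$, we conclude $\ell=\nu_{\max}$. In both regimes $\liminf_t\nu^t\geq\liminf_t a^t\geq\nu_{\max}$, as claimed.

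The main obstacle is the sign computation in the middle paragraph: showing that the lower envelope $\sfF_0$ lies \emph{strictly above} the diagonal on all of $(\nu_{\min},\nu_{\max})$, which is what forces the worst-case iteration upward toward $\nu_{\max}$ rather than letting it stall at an interior point. This is precisely where the convexity of $\upsilon$ and the ``at most two roots'' conclusion of Proposition~\ref{two_roots} are essential; everything else — the comparison $\nu^t\geq a^t$, the monotonicity/continuity of $\sfF_0$, and the invariance of $[\nu_{\min},\nu_{\max}]$ under $\sfF_0$ — is routine. A minor point worth stating explicitly is that $(a^t)$ cannot overshoot $\nu_{\max}$ in the second regime, which holds exactly because $\sfF_0$ maps $[\nu_{\min},\nu_{\max}]$ into itself.
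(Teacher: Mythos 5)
Your proof is correct, and it reaches the conclusion by a somewhat different mechanism than the paper's own argument. The paper works directly with $\nu^\infty=\liminf_{t\to\infty}\nu^t$: using $\nu^{t+1}\geq\sfF_0(\nu^t)$ and the monotonicity (and continuity) of $\sfF_0$ it obtains $\nu^\infty\geq\sfF_0(\nu^\infty)$ together with $\nu^\infty>\nu_{\min}$, and then concludes $\nu^\infty\in[\nu_{\max},1]$ by invoking — without explicit proof — the fact that $\nu\geq\sfF_0(\nu)$ fails everywhere on $(\nu_{\min},\nu_{\max})$. You instead set up the comparison iteration $a^{t+1}=\sfF_0(a^t)$, $a^1=\nu^1$, show $\nu^t\geq a^t$, and drive $a^t$ monotonically up to $\nu_{\max}$; the crux is your explicit verification that $\sfF_0>\mathrm{id}$ on $(\nu_{\min},\nu_{\max})$, which you get from the identity $2\vtheta_{\min}(1-\nu)+(1-\vtheta_{\min})^2=\vpi(\nu)^2$ (a consequence of \eqref{quad_eq}) and the sign correspondence between $\sfF(\nu)-\nu$ and $-\upsilon(\alpha)$, together with the strict convexity and two-root structure of $\upsilon$ from Proposition \ref{two_roots}; I checked this computation and it is right, and it is exactly the step the paper's final sentence takes for granted. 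Both routes are sound: the paper's liminf argument is shorter, while yours is more self-contained and makes transparent why the worst-case iterate cannot stall at an interior point of $(\nu_{\min},\nu_{\max})$. One small looseness: in your first paragraph, monotonicity of $\sfF_0$ together with $\sfF_0(\nu_{\min})=\nu_{\min}$ only gives $a^t\geq\nu_{\min}$, not the claimed strict inequality; this is harmless because your later case analysis supplies it ($a^t\geq a^1>\nu_{\min}$ when $(a^t)$ is nondecreasing, and $a^t\geq\nu_{\max}>\nu_{\min}$ in the other regime).
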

\begin{proof}
Applying induction and using $\nu^1 > \nu_{\min}$ and $\sfF_0(\nu_{\min})=\nu_{\min}$, we can show that $\nu^t \geq \nu_1> \nu_{\min}$. Taking the limit and denoting by $\nu_\infty=\liminf_{t \to \infty} \nu^t$, we obtain $\nu_\infty\geq \nu^1 > \nu_{\min}$, thus, $\nu_\infty > \nu_{\min}$. Moreover, we have
\begin{align}
\nu^\infty&=\liminf_{t\to \infty} \nu^{t+1}\geq \liminf_{t\to \infty} \sfF_0(\nu^{t})\nonumber\\
& \stackrel{(i)}{=} \sfF_0(\liminf_{t\to \infty} \nu^t)=\sfF_0(\nu^\infty),
\end{align}
where in $(i)$ we used the fact that $\sfF_0$ is an increasing function.
Since $\nu^\infty >\nu_{\min}$, the only region where $\nu^\infty \geq \sfF_0(\nu^\infty)$ is satisfied is when $\nu^\infty \in [\nu_{\max},1]$. This completes the proof.
\end{proof}

\begin{figure}[t]
	\centering
	\includegraphics[scale=1]{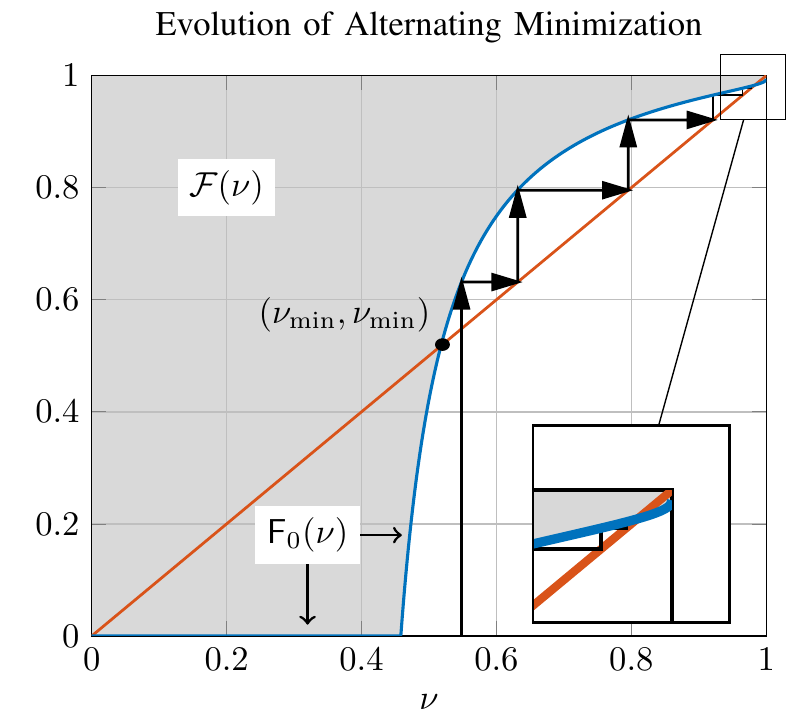}
	\caption{Illustration of the set-valued map $\clF(\nu)$, its lower envelope $\sfF_0(\nu)$, and the Evolution Equation for \algname\ starting from a ${\nu_{\check{S},S^1}> \nu_{\min}}$ for ${\vsigma=1.03}$ and ${\vrho=0.06}$.  }
	\label{fig:DE}
\end{figure}

 Fig.\,\ref{fig:DE} illustrates the set-valued map $\clF$ and its lower envelope $\sfF_0$ for $\vsigma=1.03$ and $\vrho=0.06$. It also illustrates the  evolution of $\nu^t$ produced by \algname\ across consecutive iterations. 
Using Proposition \ref{EE_prop}, we are finally in a position to analyze the performance of \algname.

\begin{theorem}[Noiseless case]\label{thm_NL}
	Let $\bfB$ be a  matrix with an R-RIP parameter $\ripb \in (0, \frac{1}{3})$, let $\vsigma=\sqrt{\frac{1+\ripb}{1-\ripb}}$, and let $\nu_{\min}(\vsigma)$ be as in \eqref{nu_min_noiseless}, where $\nu_{\min}(\vsigma) \to 0$ as ${\ripb \to 0}$ (${\vsigma \to 1}$). Let $\check{\bfx}=\check{\bfS} \bfB  \check{\bfy}$ be the unlabeled samples  taken from the signal $\check{\bfy}$ and let ${\bfS^1 \mapsto \bfy^1 \mapsto \cdots}$ be the sequence generated by \algname\ for the input $\check{\bfx}$ starting from an initialization $\bfS^1$ with ${\nu_{\bfS^1,\check{\bfS}} > \nu_{\min}(\vsigma)}$. Then, $\lim_{t \to \infty} \|\check{\bfy} - \bfy^t\|=0$. \hfill $\square$
\end{theorem}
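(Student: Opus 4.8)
The plan is to feed the Evolution Equation of Proposition~\ref{EE_prop}, specialized to the noiseless regime, into the elementary fact that the similarity factor $\nu^t=\nu_{\check{\bfS},\bfS^t}$ produced by \algname\ can take only finitely many values. First I would record the parameter reduction: noiselessness means $\bfw=0$, hence $\zeta=0$ and $\vrho=2\zeta/\sqrt{1-\ripb}=0$, while $\ripb\in(0,\tfrac13)$ forces $\vsigma=\sqrt{(1+\ripb)/(1-\ripb)}\in(1,\sqrt2\,]$ --- exactly the range in which the noiseless fixed-point analysis applies, so that $\sfF_0$ has the two fixed points $\nu_{\min}(\vsigma)$ given by \eqref{nu_min_noiseless} and $\nu_{\max}=1$. (We may assume $\check{\bfy}\neq0$, since otherwise $\check{\bfx}=0$ and every iterate $\bfy^t=0=\check{\bfy}$ trivially.)

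Next I would run the dynamics. A full iteration of \algname\ gives $\nu^{t+1}\in\clF(\nu^t)$, hence $\nu^{t+1}\geq\sfF_0(\nu^t)$, where $\sfF_0$ is continuous and nondecreasing on $[0,1]$ --- identically $0$ on $[0,\nu_0]$ and increasing on $(\nu_0,1]$ by Proposition~\ref{F_mon_con}, whose proof, and those of the lemmas it rests on, remains valid at $\vrho=0$. The induction in the proof of Proposition~\ref{EE_prop} uses nothing beyond this monotonicity, the fact that $\nu_{\min}$ and $\nu_{\max}=1$ are fixed points of $\sfF_0$, and that $\sfF_0(\nu)>\nu$ on $(\nu_{\min},1)$, all of which persist at $\vrho=0$. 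Hence from the hypothesis $\nu^1=\nu_{\bfS^1,\check{\bfS}}>\nu_{\min}(\vsigma)$ I obtain $\liminf_{t\to\infty}\nu^t\geq\nu_{\max}=1$, and since $\nu^t\leq1$ always, $\lim_{t\to\infty}\nu^t=1$.

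The last step --- upgrading $\nu^t\to1$ to $\bfy^t=\check{\bfy}$ --- is where the noiseless case becomes exact rather than merely approximate, and it is the genuinely new ingredient. Since $\nu^t=\tr({\bfS^t}^\transp\check{\bfS})/m$ lives in the finite grid $\{j/m:\ 0\le j\le m\}$, the convergence $\nu^t\to1$ forces $\nu^t=1$ for every $t$ beyond some finite $T$; and because $\bfS^t$ and $\check{\bfS}$ are ordered selection matrices with exactly $m$ ones each, $\tr({\bfS^t}^\transp\check{\bfS})=m$ means every ``$1$'' of $\bfS^t$ coincides with one of $\check{\bfS}$, i.e.\ $\bfS^t=\check{\bfS}$. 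For such $t$ the $\clY$-projection step returns $\bfy^t=(\check{\bfS}\bfB)^\dagger\check{\bfx}=(\check{\bfS}\bfB)^\dagger\check{\bfS}\bfB\check{\bfy}$; and applying the RIP over $\clH$ (Remark~\ref{eps_remark}) to $\bfy^\transp\otimes\check{\bfS}\in\clH$ gives $\|\check{\bfS}\bfB\bfy\|^2\geq(1-\ripa)m\|\bfy\|^2$ for every $\bfy\in\bR^k$, so $\check{\bfS}\bfB$ has full column rank and $(\check{\bfS}\bfB)^\dagger\check{\bfS}\bfB=\bfI_k$. Therefore $\bfy^t=\check{\bfy}$ for all $t\geq T$, which in particular yields $\lim_{t\to\infty}\|\check{\bfy}-\bfy^t\|=0$. (Note that the R-RIP precision $\ripc$ plays no role once $\bfS^t$ lands exactly on $\check{\bfS}$, consistent with its absence from the statement.)

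The step needing the most care --- though it is not deep --- is checking that the Evolution-Equation argument transfers to $\vrho=0$: Proposition~\ref{EE_prop} and the set $\Dxirho$ are stated for the strictly noisy case, so I would spell out explicitly that its induction only invokes the continuity and monotonicity of $\sfF_0$ and the location and ordering of its fixed points $\nu_{\min}$ and $1$, all established in the noiseless subsection. As a quantitative alternative to the grid argument, one can instead observe from \eqref{nu_val} that $\varphi(\xi^t,\vtheta^t,\nu^t)\leq\vsigma\sqrt{1-(\nu^t)^2}\to0$ while $\vtheta^t\geq0$ once $\nu^t>\nu_0$ (Proposition~\ref{pos_cor}), whence $0\leq\|\bfy^t-\check{\bfy}\|^2/\|\check{\bfy}\|^2=\varphi(\xi^t,\vtheta^t,\nu^t)^2-2\vtheta^t(1-\nu^t)\leq\varphi(\xi^t,\vtheta^t,\nu^t)^2\to0$; but the discreteness route is cleaner and additionally delivers exact recovery after finitely many iterations.
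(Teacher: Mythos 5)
Your proposal is correct, and its first half (noiseless reduction $\vrho=0$, $\vsigma\in(1,\sqrt2\,]$, then the evolution equation $\nu^{t+1}\ge\sfF_0(\nu^t)$ with fixed points $\nu_{\min}(\vsigma)$ and $\nu_{\max}=1$, giving $\nu^t\to1$) coincides with the paper's proof; your remark that Proposition~\ref{EE_prop} is stated for $(\vsigma,\vrho)\in\Dxirho$ and must be re-read at $\vrho=0$ is a legitimate point of care that the paper glosses over, and its induction indeed transfers verbatim. Where you genuinely diverge is the concluding step. The paper stays metric and asymptotic: from \eqref{nu_val} with $\vrho=0$ it gets $\varphi(\xi^t,\vtheta^t,\nu^t)\le\vsigma\sqrt{1-(\nu^t)^2}\to0$, deduces $\xi^t\to1$, $\vtheta^t\to1$, and hence $\|\bfy^t-\check{\bfy}\|\to0$; your ``quantitative alternative'' via the identity $\|\bfy^t-\check{\bfy}\|^2/\|\check{\bfy}\|^2=\varphi^2-2\vtheta^t(1-\nu^t)\le\varphi^2$ (using $\vtheta^t\ge0$ from Proposition~\ref{pos_cor}) is essentially this route, in a slightly cleaner algebraic form. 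Your main route instead exploits that $\nu^t$ lives on the grid $\{j/m\}$, so $\nu^t\to1$ forces $\bfS^t=\check{\bfS}$ after finitely many iterations, whence the $\clY$-projection returns $\bfy^t=\check{\bfy}$ exactly once $\check{\bfS}\bfB$ has full column rank. This buys a strictly stronger conclusion (exact recovery in finite time, not just convergence) at the price of one extra ingredient the paper's argument does not need: injectivity of $\check{\bfS}\bfB$. You justify it via the RIP over $\clH$ with parameter $\ripa$, which is not literally among the theorem's hypotheses (only $\ripb$ is), but is covered by the standing assumption of Remark~\ref{eps_remark}; alternatively, applying the R-RIP to the pair $\bfH=\bfy^\transp\otimes\check{\bfS}$ and $\bfH'=0\in\clH$ (take $\bfy'=0$) gives $\|\check{\bfS}\bfB\bfy\|^2\ge(1-\ripb)m\|\bfy\|^2$ and removes even that dependence. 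Both your proof and the paper's share the same standing caveat that the evolution-equation bounds \eqref{upper_bb}--\eqref{lower_bb} presuppose condition \eqref{R_cond} along the iterates, as discussed in Section~\ref{sec:dec_radius}.
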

\begin{proof}
	For the noiseless case, we have $\nu_{\max}=1$. Thus, under the stated conditions, starting from $\nu^1=\nu_{\bfS^1, \check{\bfS}}$, we have that $\liminf_{t \to \infty} \nu^t\geq\nu_{\max}=1$ from Proposition \ref{EE_prop}, which implies that $\lim_{t \to \infty}\nu^t =1$. From \eqref{nu_val} ($\vrho=0$), this yields 
	\begin{align}
	\limsup_{t \to \infty} \varphi(\xi^t, \vtheta^t, \nu^t)\leq \limsup_{t \to \infty} \vsigma \sqrt{1 - (\nu^t)^2}=0.\label{final_eq_nl}
	\end{align}
	Using {$\varphi(\xi^t, \vtheta^t, \nu^t)=\sqrt{1+{\xi^t} -2 \vtheta^t \nu^t}$} from \eqref{eta_func} and the fact that $\xi^t \geq (\vtheta^t)^2$,  \eqref{final_eq_nl} implies that $\vtheta^t \to 1$ and $\xi^t \to 1$ as $\nu^t \to 1$. This yields $\frac{\|\bfy^t\|^2}{\|\check{\bfy}\|^2} \to 1$ and $\frac{\inp{\bfy^t}{\check{\bfy}}}{\|\check{\bfy}\|^2}\to 1$, and gives the desired result $\lim_{t \to \infty} \|\bfy^t - \check{\bfy}\|=0$. 
\end{proof}
We have also the following results for the noisy case.
\begin{theorem}[Decoding up to the Noise Radius]\label{thm_NY}
Let $\bfB$ be a matrix as in Theorem \ref{thm_NL}. Suppose that $(\vsigma, \vrho)$ are in the region $\Dxirho$ where $\sfF_0(\nu)$ has two fixed points. 
Also, let $\check{\bfx}=\check{\bfS} \bfB  \check{\bfy}+\bfw$ be the noisy unlabeled samples  taken from the signal $\check{\bfy}$. Assume that ${\bfS^1 \mapsto \bfy^1 \mapsto \cdots}$ is the sequence generated by \algname\ for the input $\check{\bfx}$ starting from an initialization $\bfS^1$ with $\nu_{\check{\bfS}, \bfS^1}\geq \nu'>0$. Then, 
$ \limsup_{{(\vsigma, \vrho)\to (1,0)}} \limsup_{t \to \infty} 
\frac{\|\bfS^t \bfB \bfy^t - \check{\bfx}\|}{\|\bfw\|} \leq 3$. \hfill $\square$
\end{theorem}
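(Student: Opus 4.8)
The plan is to combine the Evolution Equation (Proposition~\ref{EE_prop}) with the $\clY$-projection estimate and an asymptotic analysis of the larger fixed point $\nu_{\max}$. Throughout I would work under the standing assumption of Section~\ref{sec:dec_radius}, namely that the R-RIP inequality \eqref{R_cond} holds for every pair $(\bfH^t,\check\bfH)$ met along the run of \algname; the complementary regime has already been absorbed into the decoding-radius bound \eqref{eq:dec_radius} and does not enlarge the ratio studied here.

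First I would reduce to the normalized cost. Since $\bfy^t$ is the exact least-squares solution for $\bfS^t$, one has $\|\bfS^t\bfB\bfy^t-\check\bfx\|^2=f(\bfS^t,\bfy^t)$, hence $\|\bfS^t\bfB\bfy^t-\check\bfx\|=\sqrt m\,\|\check\bfy\|\,g(\bfS^t,\bfy^t)$, while $\|\bfw\|=\sqrt m\,\|\check\bfy\|\,\zeta$, so the quantity in the statement is exactly $g(\bfS^t,\bfy^t)/\zeta$. I would then insert the $\clY$-projection bound \eqref{nu_der_dumm}, $g(\bfS^t,\bfy^t)\le\sqrt{1+\ripb}\sqrt{1-(\nu^t)^2}+\zeta$, and eliminate $\zeta$ using $\zeta=\tfrac12\vrho\sqrt{1-\ripb}$ together with $\vsigma=\sqrt{(1+\ripb)/(1-\ripb)}$, which gives
\[
\frac{\|\bfS^t\bfB\bfy^t-\check\bfx\|}{\|\bfw\|}\ \le\ \frac{2\vsigma}{\vrho}\,\sqrt{1-(\nu^t)^2}\ +\ 1 .
\]

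Next I would pass to the limit $t\to\infty$. Since $(\vsigma,\vrho)\in\Dxirho$, the lower envelope $\sfF_0$ has exactly the two fixed points $\nu_{\min}(\vsigma,\vrho)\le\nu_{\max}(\vsigma,\vrho)$, and by Proposition~\ref{root_prop}(iv) one has $\nu_{\min}(\vsigma,\vrho)=\sin\alpha_{\min}(\vsigma,\vrho)\to0$ as $(\vsigma,\vrho)\to(1,0)$; hence for all $(\vsigma,\vrho)$ close enough to $(1,0)$ the initialization satisfies $\nu^1=\nu_{\check\bfS,\bfS^1}\ge\nu'>\nu_{\min}(\vsigma,\vrho)$, so Proposition~\ref{EE_prop} applies and yields $\liminf_{t\to\infty}\nu^t\ge\nu_{\max}(\vsigma,\vrho)$. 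Consequently $\limsup_{t\to\infty}\sqrt{1-(\nu^t)^2}\le\sqrt{1-\nu_{\max}(\vsigma,\vrho)^2}$, and the previous display passes to
\[
\limsup_{t\to\infty}\frac{\|\bfS^t\bfB\bfy^t-\check\bfx\|}{\|\bfw\|}\ \le\ \frac{2\vsigma}{\vrho}\,\sqrt{1-\nu_{\max}(\vsigma,\vrho)^2}\ +\ 1 .
\]

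The final — and hardest — step is to bound $\tfrac{\vsigma}{\vrho}\sqrt{1-\nu_{\max}(\vsigma,\vrho)^2}$ in the limit $(\vsigma,\vrho)\to(1,0)$. Writing $\nu_{\max}=\sin\alpha_{\max}$ and $c:=\cos\alpha_{\max}=\sqrt{1-\nu_{\max}^2}$, Proposition~\ref{root_prop}(iv) first gives $\alpha_{\max}\to\tfrac\pi2$, i.e. $c\to0$ — the prerequisite that makes the subsequent expansion legitimate. I would substitute $\alpha_{\max}$ into the fixed-point identity \eqref{eq:fx_pt}, rewrite its right-hand side via $(\sqrt{1-c^2}+c-1)^2=2\bigl(1-\sqrt{1-c^2}\bigr)(1-c)=c^2\cdot\tfrac{2(1-c)}{1+\sqrt{1-c^2}}=c^2\bigl(1+O(c)\bigr)$, and cancel $c^2$; the identity then reduces to $\vrho^2+2\vsigma\vrho\,c=(2-\vsigma^2)\,c^2+O(c^3)$, a quadratic in $c$ whose relevant positive root obeys $c=O(\vrho)$ with an explicit constant (obtained by setting $\vsigma=1$). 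Feeding this, together with $\vsigma\to1$ and $\ripb\to0$, into the display above produces the asserted bound $\limsup_{(\vsigma,\vrho)\to(1,0)}\limsup_{t\to\infty}\|\bfS^t\bfB\bfy^t-\check\bfx\|/\|\bfw\|\le 3$. The crux is exactly this estimate: one must (a) secure $\alpha_{\max}\to\pi/2$ before expanding, since otherwise $\sqrt{1-c^2}+c-1$ is not well approximated by $c$, and (b) handle the joint limit $(\vsigma,\vrho)\to(1,0)$ uniformly over the relative rates at which $\vsigma\to1$ and $\vrho\to0$, which comes down to checking that the $(\vsigma^2-1)c^2$ term is always of strictly lower order than the terms kept; the rest is routine substitution of the definitions of $\vsigma,\vrho,\zeta$.
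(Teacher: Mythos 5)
Your first two steps coincide with the paper's own argument: the identity $\|\bfS^t\bfB\bfy^t-\check{\bfx}\|/\|\bfw\|=g(\bfS^t,\bfy^t)/\zeta$, the projection bound, and the use of Proposition~\ref{root_prop}$(iv)$ to get $\nu_{\min}(\vsigma,\vrho)\to 0$ so that Proposition~\ref{EE_prop} applies to any initialization with $\nu'>0$ and gives $\liminf_t\nu^t\geq\nu_{\max}$. The only cosmetic difference is that the paper bounds $g$ through $\varphi$ and \eqref{upper_bb}, arriving at $\sqrt{1+\ripb}\,(\vsigma\cos\alpha_{\max}+\vrho)/\zeta+1$, while you use \eqref{nu_der_dumm} directly and get the slightly tighter $\tfrac{2\vsigma}{\vrho}\cos\alpha_{\max}+1$. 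The divergence is in the last step: the paper simply invokes $\alpha_{\max}\to\tfrac{\pi}{2}$ (i.e., it treats $\vsigma\cos\alpha_{\max}$ as negligible against $\vrho$), whereas you correctly identify the ratio $\cos\alpha_{\max}/\vrho$ as the crux and try to pin it down from the fixed-point identity \eqref{eq:fx_pt}.

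This is exactly where your proposal has a genuine gap: the conclusion you assert is inconsistent with the expansion you derive. Writing $c=\cos\alpha_{\max}$, your quadratic $\vrho^2+2\vsigma\vrho c=(2-\vsigma^2)c^2+O(c^3)$ has positive root $c=\tfrac{(\vsigma+\sqrt{2})\vrho}{2-\vsigma^2}\,(1+o(1))$, so $c/\vrho\to 1+\sqrt{2}$ as $(\vsigma,\vrho)\to(1,0)$ (uniformly in the path, since $c\to 0$ forces $t=c/\vrho$ to solve $t^2-2t-1=0$ in the limit); it does not tend to $0$, nor is it $\leq 1$. Substituting into your display gives $\limsup\limsup\leq 2(1+\sqrt{2})+1=3+2\sqrt{2}\approx 5.8$, not $3$ (and with the paper's looser chain, $5+2\sqrt{2}$). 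To reach $3$ along your route you would need $\limsup c/\vrho\leq 1$, and along the paper's route $c/\vrho\to 0$; your own leading-order analysis of \eqref{eq:fx_pt} shows both fail, so the final claim cannot be repaired by merely sharpening the expansion or tracking the $(\vsigma^2-1)c^2$ term more carefully. In other words, your attempt to make rigorous the step the paper disposes of by citing $\alpha_{\max}\to\tfrac{\pi}{2}$ actually shows that this chain of inequalities only delivers a larger absolute constant; as written, the assertion that ``routine substitution'' yields the bound $3$ is unjustified, and you would either have to find slack elsewhere (a sharper bound than \eqref{nu_der_dumm}/\eqref{nu_val} near the fixed point) or state the result with a weaker constant.
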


\begin{figure*}[t]
\centering
\subfloat[Random initialization. \label{fig:rand_init}]{%
\includegraphics[scale=0.76]{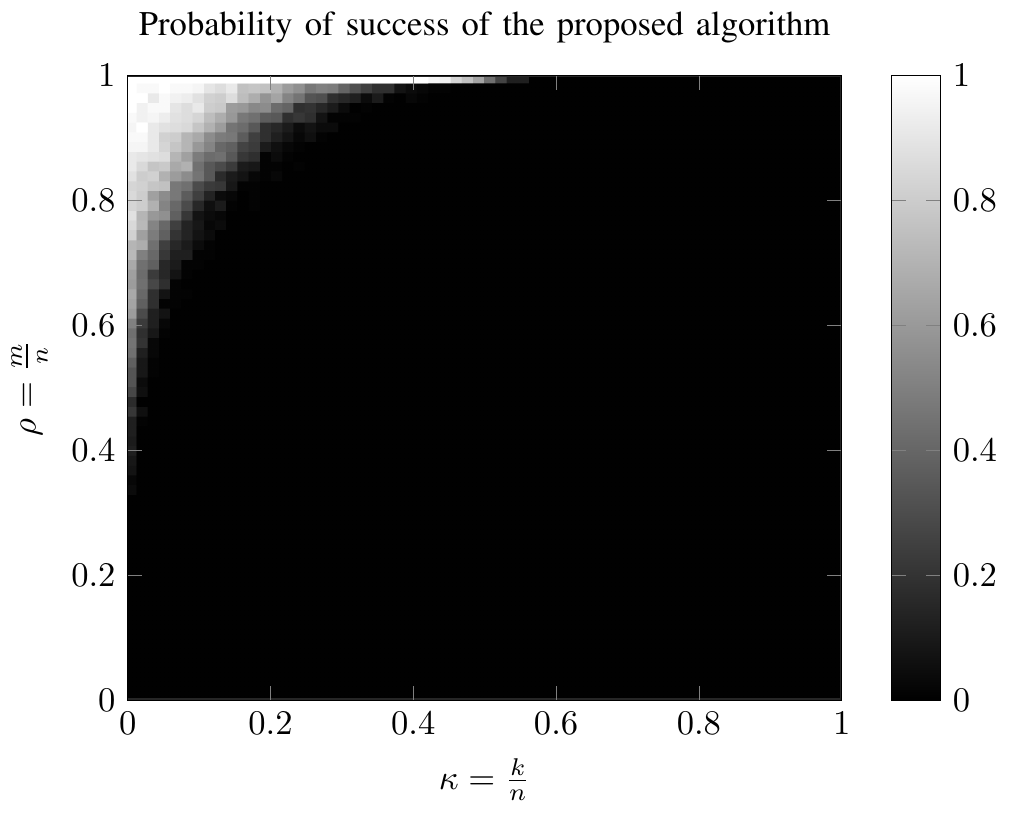}
}
\hfill
\subfloat[Genie-aided initialization. \label{fig:genie_init}]{%
\includegraphics[scale=0.76]{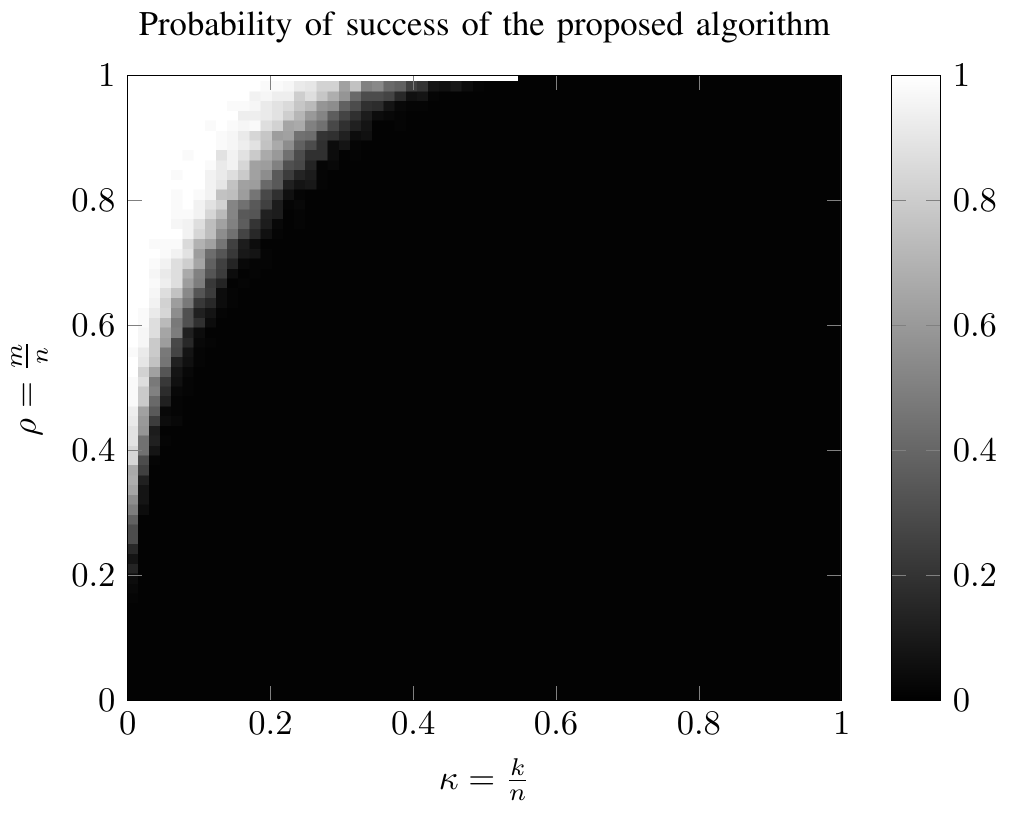}
}
\caption{Probability of  success of \algname\ as a function of {$\kapp=\frac{k}{n}$} and $\rho=\frac{m}{n}$ with a random initialization $\bfS^1$ (a) and a genie-aided initialization $\bfS^1$ with {$\nu_{\bfS^1,\check{\bfS}}=0.2$} (b), where in the latter the remaining $80\%$ of rows of $\bfS^1$ are selected randomly.}
\label{fig:prob_success}
\end{figure*}

\begin{proof}
	From proposition \ref{root_prop}, we have that $\alpha_{\min}(\vsigma, \vrho)\to 0$, thus, ${\nu_{\min}(\vsigma, \vrho) \to 0}$, as ${(\vsigma, \vrho)\to (1,0)}$. Hence, there is a sufficiently small neighborhood of $(1,0)$ such that $\nu_{\min}<\nu'$. Fixing a $(\vsigma, \vrho)$ in this neighborhood, \eqref{nu_val} yields
	\begin{align}
	\limsup_{t\to \infty} \varphi(\xi^t, \vtheta^t, \nu^t)&  \leq \limsup_{t\to \infty} \vsigma \sqrt{1-(\nu^t)^2} + \vrho\\
	&\stackrel{(i)}{\leq} \vsigma \sqrt{1-\nu_{\max}^2} + \vrho\\
	& \stackrel{(ii)}{=}\vsigma \cos(\alpha_{\max}) + \vrho,\label{phi_last_step}
	\end{align}
	where in $(i)$ we used $\liminf_{t \to \infty} \nu^t \geq \nu_{\max}$ from Proposition \ref{EE_prop}, and where in $(ii)$ we replaced ${\nu_{\max}=\sin(\alpha_{\max})}$. Using $g(\bfS^t,\bfy^t)=\frac{\sqrt{f(\bfS^t,\bfy^t)}}{\sqrt{m} \|\check{\bfy}\|}$ and  $f(\bfS^t,\bfy^t)=\|\bfS^t \bfB \bfy^t - \check{\bfx}\|^2$, we have
	\begin{align*}
	\limsup_{t \to \infty} \frac{\|\bfS^t \bfB \bfy^t -\check{\bfx}\|}{\|\bfw\|} &\stackrel{(a)}{\leq} \frac{1}{\zeta} \big ( \sqrt{1+\ripb}  \limsup_{t \to \infty} \varphi(\xi^t, \vtheta^t, \nu^t) + \zeta \big )\nonumber \\
	&\stackrel{(b)}{\leq} \sqrt{1+\ripb}\, \frac{\vsigma \cos(\alpha_{\max}) + \vrho }{\zeta} +1.
	\end{align*}
	where in $(a)$ we used  \eqref{upper_bb} and replaced $\zeta= \frac{\|\bfw\|}{\|\check{\bfy}\|\sqrt{m}}$, and where in $(b)$ we used \eqref{phi_last_step}.
	Using $\vrho=\frac{2\zeta}{\sqrt{1-\ripb}}$, taking the limit as ${(\vsigma, \vrho)\to (1,0)}$ (thus, $\ripb \to 0$), and using the fact that $\lim_{{(\vsigma, \vrho)\to (1,0)}} \alpha_{\max}(\vsigma, \vrho) = \frac{\pi}{2}$ from Proposition \ref{root_prop}, we obtain 
	\begin{align}
	\limsup_{{(\vsigma, \vrho)\to (1,0)}} \limsup_{t \to \infty} 
	\frac{\|\bfS^t \bfB \bfy^t - \check{\bfx}\|}{\|\bfw\|}\leq 3,
	\end{align}
	which is the desired result. This completes the proof.
\end{proof}

\subsection{Summary of the Analysis of \algname}
Theorem \ref{thm_NY}  proves that \algname\ is essentially able to decode the target signal up to  \textit{thrice} the noise radius  when $\snr$ is sufficiently large ($\vrho \to 0$) and the R-RIP parameter $\ripb$ is sufficiently small ($\vsigma \to 1$). 
In practice, we can afford only a finite $\snr$. Moreover, obtaining smaller $\ripb$ requires taking much more measurements scaling like $O(\frac{1}{\ripb^2})$ as in \eqref{ovs_scale}. However, in view of Remark \ref{feas_remark}, a reasonable recovery is still possible  by decoding the signal up to a multiple of noise radius $\eta \|\bfw\|$, where $\eta$ can be selected sufficiently large such that the recovery is still possible for a reasonable $\snr$ and $\ripb$. From Theorem \ref{thm_NL}, the situation is much better in the noiseless case, where $\ripb \in (0,\frac{1}{3})$ along with a \textit{good} initialization $\bfS^1$ will guarantee a suitable recovery of $\check{\bfy}$.  

Recall that for analyzing \algname, we made the assumption that  condition \eqref{R_cond} holds for all the solutions  {$\bfH^i={\bfy^i}^\transp \otimes \bfS^i$} produced by \algname. In particular, from our discussion in Section \ref{sec:dec_radius} (see also Fig.\,\ref{fig:R_RIP} and especially \eqref{eq:dec_radius}), it results that under a good initialization $\bfS^1$ \algname\ fulfills the conditions of Theorem \ref{main_thm}. In brief, \algname\ is able to recover the desired signal $\check{\bfH}$ up to a relative precision $2\ripc$ (see also Fig.\,\ref{fig:R_RIP}).

\begin{remark}\label{genie_remark}
	A key assumption in our analysis is that  $\bfB$ satisfies R-RIP over $\clH-\clH$. Our numerical simulations in Section \ref{sec:sim}, however, illustrate that \algname\ still performs quite well even in a regime of parameters  where R-RIP does not hold. Therefore,  as in CS \cite{candes2005decoding},  R-RIP seems to be sufficient but not necessary for  signal recovery. In contrast, our simulations evidently confirm that a \textit{good} initialization of $\bfS^1$ plays a crucial role on the performance of \algname, which partly validates the results we obtained in this section via a fixed-point analysis of \algname\ (even though our results were derived under R-RIP).
	\hfill $\lozenge$
\end{remark}

%

\section{Simulation Results}\label{sec:sim}
We run numerical simulation to assess the performance of \algname. For each simulation, we generate an $n\times k$ Gaussian  matrix $\bfB$ and a $k$-dim signal $\check{\bfy}$, and take noisy unlabeled samples from $\check{\bfy}$ given by ${\check{\bfx}}=\check{\bfS}\bfB\check{\bfy} +\bfw$, where  $\check{\bfS}$ selects $m$ out of $n$ elements in  $\bfB\check{\bfy}$ randomly and where $\bfw$ is the additive Gaussian noise. We denote the SNR by $\snr=\frac{\|\check{\bfS} \bfB \check{\bfy}\|^2}{\|\bfw\|^2}$ as before. For simulations, we assume an SNR of $20$\,dB.

\subsection{Probability of  Success of the Algorithm}
We run \algname\ with  the noisy input ${\check{\bfx}}$ and a random initialization $\bfS^1 \in \clS$. 
To see the effect of the initialization, we repeat the simulation with a genie-aided initialization of $\bfS^1$ with  $\nu_{\bfS^1,\check{\bfS}}=0.2$, where $20\%$ of the rows of $\bfS^1$ are set equal to the corresponding rows of $\check{\bfS}$ while the remaining rows are selected completely randomly among the remaining possible rows. In both cases, we define the output of \algname\ by ${\bfS^1 \mapsto \bfy^1 \mapsto \bfS^2 \mapsto \bfy^2 \cdots}$ and denote the final output produced by the algorithm by $\bfy^\infty$. We call the recovery successful if the relative error satisfies $\frac{\|\check{\bfy}-\bfy^\infty\|^2}{ \|\check{\bfy}\|^2} \approx O(\frac{1}{\snr})$. 
For simulations, we set $n=1000$ and define parameters $\kapp =\frac{k}{n}$ as the measurement ratio and $\rho=\frac{m}{n}$ as the sampling ratio as before. For each $\kapp$ and $\rho$, we run simulations for $1000$ independent realizations of  $\bfB$ and $\check{\bfS}$ to obtain an estimate of the success probability of \algname. Fig.\,\ref{fig:prob_success} illustrates the success probability as a function of $\kapp, \rho \in (0,1)$ for the fully random initialization in (\ref{fig:rand_init}) and for the genie-aided initialization in (\ref{fig:genie_init}).
The results clearly indicate that a good initialization is crucial for the recovery performance of \algname, as also mentioned in Remark \ref{genie_remark}, where the genie-aided case undergoes a much sharper phase-transition in $\kapp-\rho$ plane.

\subsection{Application to System Identification} 
In this section, as a practical signal processing problem, we study the estimation of the impulse response of a linear time-invariant system, classically known as \textit{System Identification} \cite{ljung1998system}. 
This is illustrated in Fig.\,\ref{fig:deletion_channel}, where a known pre-designed training sequence $\bfb=(b_0, \dots, b_{\tau-1})^\transp$  of length $\tau$ is applied to the input of a linear system with an impulse response of length at most $k$ given by $\bfy=(y_0, \dots, y_{k-1})^\transp$. We assume that an estimate of the delay spread of the channel $k$ is a priori known. The output of the linear system is given by  $\bfz=\bfb\star \bfy$, where $\star$ denotes the convolution operation, where the output $\bfz=(z_0, \dots, z_{n-1})^\transp$ is given by $z_l=\sum_{t=0}^{k-1} y_t b_{l-t}$ for $l=0, 1, \dots, n-1$, where $n=k+\tau-1$ denotes the length of the output $\bfz$  and where $b_r=0$ for $r<0$. 
We consider a scenario in which the output $\bfz$  is observed only through a noisy deletion channel, which deletes some of the output samples $\bfz$ but  preserves their underlying order. Denoting by $\bfy=(y_0, \dots, y_{k-1})^\transp$, we can write $\bfz=\bfB \bfy$ with a measurement matrix $\bfB$ given by

\begin{align}
\bfB=\left ( 
\begin{matrix}
b_0 & 0 & \cdots &0\\
b_1 & b_0 & \ddots & 0\\
b_2 & b_1 & \ddots & \vdots\\
\vdots & \vdots & \ddots & \vdots\\
b_{\tau-1} & b_{\tau-2} & \cdots & b_{\tau-k}\\
0 & b_{\tau-1} & \ddots& \vdots\\
\vdots & \vdots & \ddots & \vdots\\
0 & 0 & \cdots & b_{\tau-1}
\end{matrix}
\right ),\label{B_hankel}
\end{align}
where it is seen that $\bfB$ is an $n\times k$  matrix that depends on the training sequence $\bfb$. We denote the final set of $m$ samples, for some $m\leq n$,  available for system identification by $\bfx=(x_0,\dots, x_{m-1})^\transp$, where $\bfx=\check{\bfS}\bfz+\bfw=\check{\bfS}\bfB\bfy+\bfw$, where $\check{\bfS}$ is a selection matrix representing the location of those samples in $\bfz$ that are not deleted by the deletion channel, and where $\bfw$ is the additive  noise. It is seen that the system identification in the scenario illustrated in Fig.\,\ref{fig:deletion_channel} boils down to the \namesh\ problem \eqref{ul_samp} with a  matrix $\bfB$ given by \eqref{B_hankel}.

\begin{figure}[t]
\centering
\includegraphics[scale=0.9]{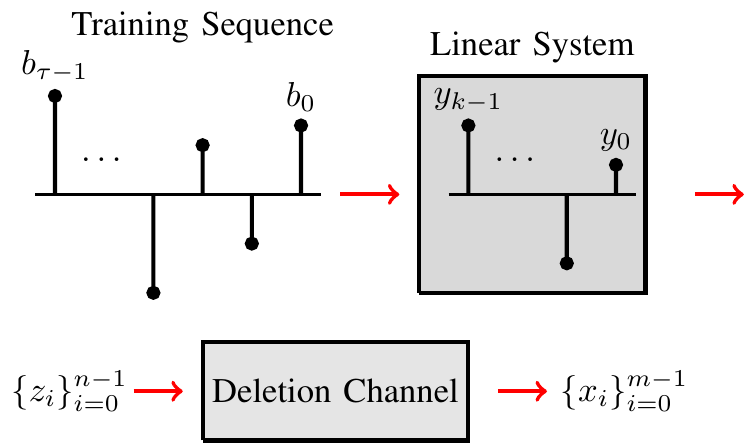}
\caption{Identifying a linear system with an impulse response $\{y_l\}_{l=0}^{k-1}$ of length $k$ via a training sequence $\{b_l\}_{l=0}^{\tau-1}$ of length $\tau$. We assume that some of the output samples $\{z_i\}_{i=0}^{n-1}$ (at unknown positions) are deleted, thus, only a limited number of samples are available for system identification.}
\label{fig:deletion_channel}
\end{figure}

For simulation, we assume that the training sequence $\bfb=(b_0, \dots, b_{\tau-1})^\transp$ has i.i.d. $\sfN(0,1)$ samples and is known to the system identification algorithm (i.e., $\bfB$ in \eqref{B_hankel} is known).  Note that  although the rows of $\bfB$  in \eqref{B_hankel} still consist of Gaussian variables, due to the special structure of the convolution operation, they are highly correlated. Nevertheless, we can still run \algname\ with $\bfB$ as in \eqref{B_hankel}.  Fig.\,\ref{fig:prob_succ_dispersive} illustrates the simulation results for $n=1000$ and for an SNR of $20$\,dB. We  assume that \algname\ is initialized with an $\bfS^1$  with $\nu_{\bfS^1,\check{\bfS}}=0.2$. It is seen that, as expected, for a given delay spread $k$, the performance improves by increasing the length of the training sequence $\tau$ (equivalently $n$)  and the number of unlabeled samples $m$. We also observe that, in comparison with Fig.\,\ref{fig:genie_init} where $\bfB$ has  i.i.d. components across different rows, the correlation among the rows of $\bfB$ degrades the performance of  \algname\ only slightly.

\section{Discussion and further Remarks}\label{sec:discuss}
Fig.\,\ref{fig:prob_success} illustrates the success probability of a single round of \algname. When the success probability is quite small,  \algname\ hits a local minimum with a high probability and fails to find a suitable estimate $\widehat{\bfH}$.
To improve the performance, we can run \algname\ several times each time with a different random initialization $\bfS^1$ and terminate when a \textit{good} estimate is found. However, this requires a procedure to certify whether \algname\ succeeds to find a good estimate.  When the measurement matrix satisfies the R-RIP over $\clH-\clH$, we can develop such a procedure by simply checking whether \algname\ has been able to decode the target signal up to the noise radius, e.g., $\|{\check{\bfx} - \widehat{\bfH}\bbb}\| \leq \eta \|\bfw\|$ for some $\eta=O(1)$ as  in Remark \ref{feas_remark}.

\begin{figure}[t]
	\centering
	\includegraphics[width=0.45\textwidth]{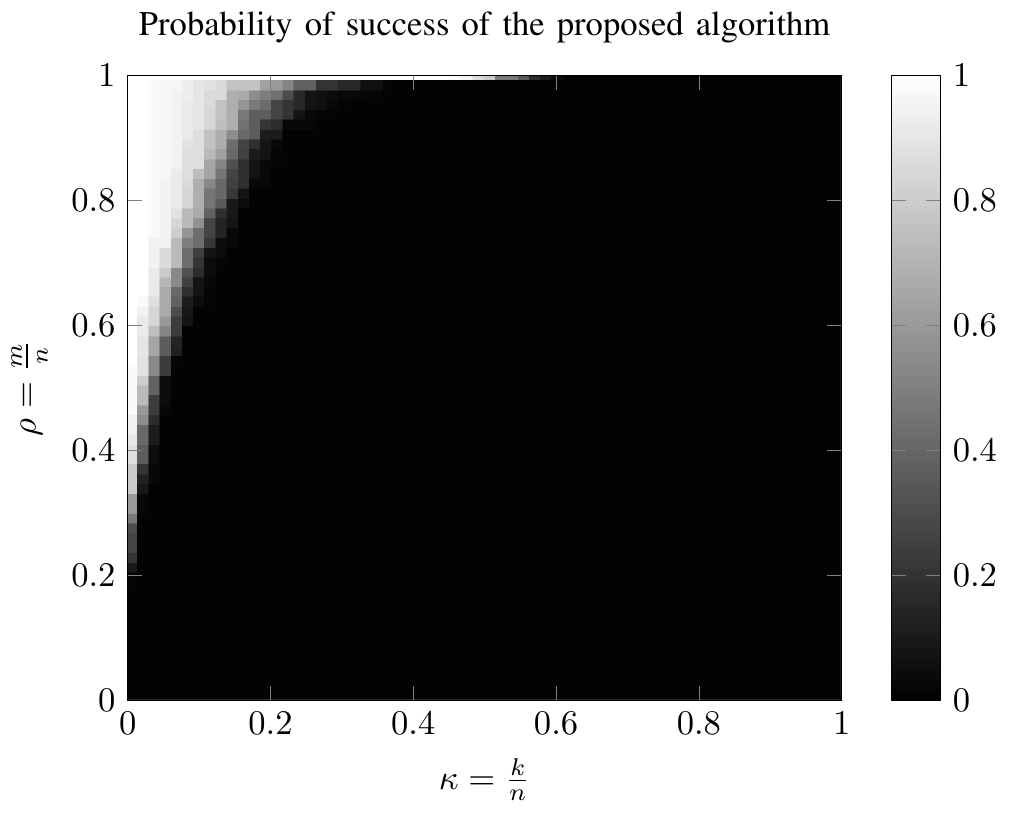}
	\caption{Probability of  success of \algname\ for the estimation of a dispersive channel  as a function of $\kapp=\frac{k}{n}$ and $\rho=\frac{m}{n}$ with a genie-aided  initialization $\bfS^1$ with $\nu_{\bfS^1,\check{\bfS}}=0.2$.}
	\label{fig:prob_succ_dispersive}
\end{figure}

Another direction to improve the performance of \algname\ is to find a \textit{good} initialization for $\bfS^1$. The hope is that such an $\bfS^1$ lies in the basin of attraction of the desired signal $(\check{\bfS},\check{\bfy})$ under \algname. In this paper, we use a random initialization for $\bfS^1$. Assuming that $\bfB$ satisfies the R-RIP over $\clH-\clH$ and ${\nu_{\bfS^1,\check{\bfS}} \geq \nu'>0}$ for a sufficiently large $\nu'$, Theorem \ref{main_thm} together with Theorem \ref{thm_NL} and \ref{thm_NY}   guarantee a suitable recovery of the target signal $(\check{\bfS}, \check{\bfy})$. The condition ${\nu_{\bfS^1,\check{\bfS}} \geq \nu'>0}$ is, however, quite difficult to meet for some $\check{\bfS}$ under the naive random initialization of $\bfS^1$. 
For example, if $\check{\bfS}$ is the selection matrix that samples the first $m$ measurements in $\bfB\check{\bfy}$, then $\bP[\nu_{\bfS^1,\check{\bfS}}\geq \gamma]=\frac{{n-m\gamma \choose m-m\gamma}}{{n \choose m}}$ for a  uniformly randomly sampled $\bfS^1\in \clS$. In the practically interesting regime where ${n,m \to \infty}$ and ${\frac{m}{n} \to \rho \in (0,1)}$, this scales like $e^{-n\varepsilon(\rho, \gamma)}$ with an exponent $\varepsilon(\rho, \gamma)=h(\rho)-(1-\rho\gamma) h(\frac{\rho - \rho \gamma}{1-\rho \gamma})$, where $h(\rho)$ is the entropy function introduced before. A direct calculation reveals that ${\varepsilon(\rho,\gamma)>0}$ for all $\gamma \in (0,1)$, thus, it is almost surely impossible to achieve any  $\gamma\in(0,1)$ asymptotically.
This implies that a \textit{good} initialization method is necessary even when the RIP holds.
Running a nonconvex optimization problem with a good initialization has recently been of interest in other problems in Compressed Sensing such as phase retrieval \cite{candes2015phase}, blind deconvolution \cite{li2016rapid}, and blind calibration \cite{cambareri2016through}. 
We leave developing a good initialization scheme for our algorithm as a future work.

The unlabeled sensing problem $x=\bfS \bfB \bfy +\bfw$ studied in this paper can be extended to cases in which the signal $\bfy$ belongs to a structured class of signals $\clY \subseteq \bR^k$ and $\bfS$ resides in a family of selection matrices $\clS$ other than the ordered sampling matrices studied in this paper.
As in the \namesh, in terms of signal recovery, we need to check two main requirements. 
The first is to develop an R-RIP over $\clH-\clH$ under a suitable metric (e.g., $l_2$ distance as in this paper). 
Following Proposition \ref{rip_2} and assuming that $\bfB$ has i.i.d. Gaussian components, such an R-RIP with a constant ${\ripb \in (0,1)}$ can be generally derived with a probability larger than ${1-2 |\clS|^2 (1+\frac{2}{\ripb}) ^2 e^{-cm\ripb^2\ripc^2}}$, where $|\clS|$ denotes the cardinality of $\clS$.
This provides a theoretical lower bound on the the number of measurements $m$ for a given  $\ripb$.
The second requirement is to develop an algorithm to recover the signal ${\bfH}={\bfy}^\transp \otimes {\bfS} \in \clH$ from the noisy unlabeled measurements ${\bfx}={\bfH} \bbb +\bfw={\bfS} \bfB {\bfy} +\bfw$. From Remark \ref{feas_remark}, under the R-RIP, such an algorithm needs to recover an estimate $\widehat{\bfH} \in \clH$ satisfying  $\|\widehat{\bfH} \bbb - {\bfx}\| \leq \eta \|\bfw\|$ for some $\eta=O(1)$, where $\|.\|$ here denotes the metric with respect to which the RIP is derived. 
The performance guarantee we derived for \namesh\  in Theorem \ref{main_thm} immediately applies to such an estimate.
For the \namesh\ studied in this paper, we used the alternating minimization over $\bfy$ and $\bfS$, where the latter minimization was done with a feasible complexity by using the ordered structure of the matrices in $\clS$ and applying the Dynamic Programming. Deriving such an algorithm for a general signal set $\clH$ (for a general signal  in $\clY$ and unlabeled sampling structure in $\clS$) requires exploiting the algebraic as well as the geometric structure of $\clH$.

\section{Conclusion}
In this paper, we studied the \name\  (\namesh) problem, where the goal was to recover a signal from a set of linear measurements taken via a fully known measurement matrix when the labels of the measurements are missing but their order is  preserved. 
We identified a  duality between \namesh\ and the traditional {Compressed Sensing} (CS), where the unknown support (location of nonzero elements) of a sparse signal in CS corresponds in a natural way to the unknown indices of the measurements kept in \namesh.  Motivated by this duality, we developed a {Restricted Isometry Property} (RIP) similar to that in CS. We also designed a low-complexity Alternating Minimization algorithm to recover the target signal from the set of its noisy unlabeled samples. We analyzed the performance of our proposed algorithm for different signal dimensions and number of measurements theoretically  under the established RIP. We also provided numerical simulations to validate the theoretical results. 

\appendices

\section{Proofs}

\subsection{Proof of Proposition \ref{rip_2}}\label{app:rip_2}

Since R-RIP  is scale-invariant, we first define the following normalized sets
\begin{align*}
\clD_1&:=\big\{(\bfH, \bfH'):  \|\bfH'\|_\sfF^2 \leq \|\bfH\|_\sfF^2=m, d_{\bfH, \bfH'}^2 \geq m \ripc^2\big\},\\
\clD_2&:=\big\{(\bfH, \bfH'):   \|\bfH'\|_\sfF^2 \leq \|\bfH\|_\sfF^2=m\big\},
\end{align*}
where $\clD_1 \subset \clD_2$. 
We also define the following probability events (on the random realization of $\bbb$):
\begin{align*}
\clE_1&:= \bigcup_{\mathclap{(\bfH, \bfH')\in \clD_1}} \big\{\bbb:\big | \|(\bfH-\bfH') \bbb\|^2 - d_{\bfH, \bfH'}^2 \big | > \ripb d_{\bfH, \bfH'}^2 \big\},\\
\clE_2&:=\bigcup_{\mathclap{(\bfH, \bfH')\in \clD_2}} \big \{\bbb: \big | \|(\bfH-\bfH') \bbb\|^2 - d_{\bfH, \bfH'}^2 \big | > \ripb (d_{\bfH, \bfH'}^2\vee m \ripc^2) \big \},
\end{align*}
where for $a,b \in \bR$ we denoted by $a\vee b=\max\{a,b\}$. 
Note that $\clE_1 \subset \clE_2$ since  the subevents corresponding to $(\bfH, \bfH') \in \clD_1$  with $d_{\bfH, \bfH'}^2 \geq m\ripc^2$ (thus, $d_{\bfH, \bfH'}^2 \vee m\ripc^2=d_{\bfH, \bfH'}^2$) in $\clE_1$ are also included in $\clE_2$, where $\clE_2$ contains in addition those subevents corresponding to $d_{\bfH, \bfH'}^2 \leq m \ripc^2$. 
To prove R-RIP result, we need to find an upper bound on $\bP[\clE_1]$. Since $\clE_1 \subset \clE_2$, thus, $\bP[\clE_1] \leq \bP[\clE_2]$, we will do this by deriving an upper bound  for $\bP[\clE_2]$.
So, we focus on the event $\clE_2$ in the sequel. It is seen that $\clE_2$ consists of the union of a continuum of events labeled with $\bfH, \bfH'\in \clD_2$.
As in the proof of Proposition \ref{rip_1}, we will first derive a concentration bound for a fixed $\bfH, \bfH' \in \clD_2$ and then extend it to the whole set $\clD_2$ via a net argument and applying the union bound.

Consider a fixed  $(\bfH, \bfH') \in \clD_2$ where $\bfH=\bfy^\transp\otimes \bfS$ and $\bfH'={\bfy'}^\transp \otimes \bfS'$ with $\|\bfy\|=1$ and $\|\bfy'\|\leq 1$. Let us define $\scrE_{\bfH, \bfH'}=(\bfH-\bfH')\bbb$. Note that $\scrE_{\bfH, \bfH'}$ is an $m$-dim Gaussian vector with a zero mean and a covariance  matrix  
\begin{align}\label{rip_diff_dum1}
\Sigmam&=\bE[\scrE_{\bfH, \bfH'}\scrE_{\bfH, \bfH'}^\transp]=(\bfH-\bfH')(\bfH-\bfH')^\transp\nonumber\\
&\stackrel{(i)}{=}\|\bfy\|^2 \bfI_{m} + \|\bfy'\|^2 \bfI_m -\inp{\bfy}{\bfy'} (\bfS{\bfS'}^\transp + \bfS' \bfS^\transp),
\end{align}
where $\bfI_m$ denotes the identity matrix of order $m$ and where in $(i)$ we used the fact that 
\begin{align}
\bfH\bfH^\transp&=(\bfy^\transp \otimes \bfS)(\bfy^\transp \otimes \bfS)^\transp=(\bfy^\transp \otimes \bfS)(\bfy \otimes \bfS^\transp)\\
&\stackrel{(ii)}{=}(\bfy^\transp \bfy) \otimes (\bfS^\transp \bfS) \stackrel{(iii)}{=}\|\bfy\|^2 \bfS\bfS^\transp \stackrel{(iv)}{=}\|\bfy\|^2 \bfI_m,
\end{align}
where in $(ii)$ we used the property of the Kronecker product, where in $(iii)$ we used the fact that $\bfy^\transp \bfy=\|\bfy\|^2$ is just a number and dropped the Kronecker product, and where in $(iv)$ we used the fact that the each row of $\bfS$ has only one $1$ at a specific column, thus, different rows are orthogonal to each other. A similar derivation gives the second and the third term in \eqref{rip_diff_dum1}. 
It is seen that $\Sigmam$ is not in general a diagonal matrix, thus, $\scrE_{\bfH, \bfH'}$ consists of correlated Gaussian variables and the conventional concentration result for the i.i.d. Gaussian variables does not immediately apply. We first prove that although $\Sigmam$ is not a diagonal matrix, its singular values are bounded and in particular do not grow with the dimension $m$. In words, this implies that the components of $\scrE_{\bfH, \bfH'}$ are not that correlated. 
We denote by $\Sigmam=\bfU \Lambdam \bfU^\transp$ the \textit{Singular Value Decomposition} (SVD) of $\Sigmam$ where $\Lambdam=\diag(\lambda_1, \dots, \lambda_m)$ is the diagonal matrix consisting of the singular values $\lambdam=(\lambda_1, \dots, \lambda_m)^\transp$. We use the convention that the singular values are ordered with $\lambda_m \leq \dots \leq \lambda_1$.
We have the following result.
\begin{lemma}\label{max_eigen_lemma}
	Let $\Sigmam$ and $\lambdam=(\lambda_1, \dots, \lambda_m)^\transp$ be as before. Then,  all the singular values satisfy 
	\begin{align*}
	\|\bfy\|^2 + \|\bfy'\|^2 -2|\inp{\bfy}{\bfy'}| \leq \lambda_i \leq  \|\bfy\|^2 + \|\bfy'\|^2+2|\inp{\bfy}{\bfy'}|.
	\end{align*}
\end{lemma}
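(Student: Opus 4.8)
The plan is to reduce $\Sigmam$ to a fixed additive perturbation of a multiple of the identity and then bound the perturbation spectrally. Set $P:=\bfS{\bfS'}^\transp$; then \eqref{rip_diff_dum1} reads $\Sigmam=(\|\bfy\|^2+\|\bfy'\|^2)\,\bfI_m-\inp{\bfy}{\bfy'}\,(P+P^\transp)$. Since $\Sigmam$ is a covariance matrix it is symmetric and positive semidefinite, so its singular values coincide with its eigenvalues, and these are exactly $\lambda_i=(\|\bfy\|^2+\|\bfy'\|^2)-\inp{\bfy}{\bfy'}\,\mu_i$, where $\mu_1\ge\cdots\ge\mu_m$ are the eigenvalues of the real symmetric matrix $M:=P+P^\transp$. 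Hence it suffices to show $|\mu_i|\le 2$ for every $i$, i.e. $\|M\|_{\mathrm{op}}\le 2$.

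First I would observe that $P=\bfS{\bfS'}^\transp$ is a \emph{partial permutation matrix}. Indeed, its $(p,q)$ entry is $\inp{\bfS_{p,:}}{\bfS'_{q,:}}$, and since each row of any selection matrix in $\clS$ is a standard basis (row) vector of $\bR^n$ — the $p$-th row of $\bfS$ being $\mathbf e_{i_p}^\transp$ for the increasing indices $i_1<\cdots<i_m$ that $\bfS$ selects — this inner product equals $1$ when rows $p$ of $\bfS$ and $q$ of $\bfS'$ select the same coordinate and $0$ otherwise. Because the coordinates selected by $\bfS$ are distinct and likewise for $\bfS'$, a given coordinate can be selected by at most one row of $\bfS$ and at most one row of $\bfS'$; consequently $P$ has at most one $1$ in each row and at most one $1$ in each column, with all remaining entries equal to $0$. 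Therefore $P^\transp P$ is diagonal with entries in $\{0,1\}$, so $\|P\|_{\mathrm{op}}=\sqrt{\|P^\transp P\|_{\mathrm{op}}}\le 1$.

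The bound on $M$ is then immediate: by the triangle inequality for the operator norm, $\|M\|_{\mathrm{op}}=\|P+P^\transp\|_{\mathrm{op}}\le\|P\|_{\mathrm{op}}+\|P^\transp\|_{\mathrm{op}}=2\|P\|_{\mathrm{op}}\le 2$, so $|\mu_i|\le 2$ for all $i$. Substituting this into $\lambda_i=(\|\bfy\|^2+\|\bfy'\|^2)-\inp{\bfy}{\bfy'}\,\mu_i$ and using $|\inp{\bfy}{\bfy'}\,\mu_i|\le 2|\inp{\bfy}{\bfy'}|$ gives
\[
\|\bfy\|^2+\|\bfy'\|^2-2|\inp{\bfy}{\bfy'}|\ \le\ \lambda_i\ \le\ \|\bfy\|^2+\|\bfy'\|^2+2|\inp{\bfy}{\bfy'}|,
\]
which is the claimed two-sided bound (note that the lower bound equals $(\|\bfy\|-\|\bfy'\|)^2\ge 0$ by Cauchy--Schwarz, consistent with $\Sigmam$ being positive semidefinite).

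The only genuinely non-mechanical step is the combinatorial fact that $\bfS{\bfS'}^\transp$ is a partial permutation matrix; everything after that is a one-line triangle-inequality estimate followed by a substitution. I do not expect any real obstacle here, but I would be careful not to assume at this stage that $\bfy$ or $\bfy'$ are unit vectors — the normalization $\|\bfy\|=1$ used elsewhere in the proof of Proposition \ref{rip_2} plays no role in this lemma, which is precisely why the bound is phrased in terms of $\|\bfy\|^2+\|\bfy'\|^2$ and $|\inp{\bfy}{\bfy'}|$.
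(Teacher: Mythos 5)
Your proof is correct, and it reaches the bound by a somewhat different spectral route than the paper. Both arguments hinge on the same combinatorial fact — that $\bfS{\bfS'}^\transp$ is a partial permutation matrix (at most one $1$ per row and per column) — but the paper exploits it row by row: it splits the rows of $\Gammam=\bfS{\bfS'}^\transp+\bfS'\bfS^\transp$ according to whether the index lies in the set $\clC$ of commonly selected rows, observes that $\Sigmam$ then has diagonal entries $\|\bfy\|^2+\|\bfy'\|^2-2\inp{\bfy}{\bfy'}$ (isolated) or $\|\bfy\|^2+\|\bfy'\|^2$ with at most two off-diagonal entries $-\inp{\bfy}{\bfy'}$, and invokes the Gershgorin disk theorem. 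You instead write $\Sigmam=(\|\bfy\|^2+\|\bfy'\|^2)\bfI_m-\inp{\bfy}{\bfy'}(P+P^\transp)$ with $P=\bfS{\bfS'}^\transp$, bound $\|P\|_{\mathrm{op}}\leq 1$ via $P^\transp P$ being $0/1$-diagonal, and conclude $\|P+P^\transp\|_{\mathrm{op}}\leq 2$ by the triangle inequality, so the eigenvalues of $\Sigmam$ (which coincide with its singular values by positive semidefiniteness) are $\|\bfy\|^2+\|\bfy'\|^2-\inp{\bfy}{\bfy'}\mu_i$ with $|\mu_i|\leq 2$. Your version avoids the case analysis over $\clC$ and Gershgorin altogether, and is arguably more robust (it would survive any $P$ with operator norm at most one), while the paper's argument gives a slightly more explicit picture of where the extreme disks sit. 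One trivial slip in your closing aside: the lower bound $\|\bfy\|^2+\|\bfy'\|^2-2|\inp{\bfy}{\bfy'}|$ is bounded \emph{below} by $(\|\bfy\|-\|\bfy'\|)^2$ via Cauchy--Schwarz, not equal to it in general; this does not affect the proof.
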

\begin{proof}
	Let us denote by ${\sfR:=r_1< \cdots <r_m}$ and $\sfR':=r'_1< \cdots < r'_m$ the ordered sequences consisting of indices of those rows of $\bfB$ selected by $\bfS$ and $\bfS'$, where $r_i, r'_i \in [n]$. 
	Also, let  $\clC:=\{i \in [m]: r_i=r'_i\}$ be the index set of similar elements in $\sfR$ and $\sfR'$.
	Since $\bfS$ and $\bfS'$ have only one $1$ in each row at column set $\sfR$ and $\sfR'$, we can simply check that $\bfS{\bfS'}^\transp$ has at most one $1$ at each row, where $(\bfS{\bfS'}^\transp)_{ij}=1$ if and only if $r_i=r'_j$.
	In particular, the only nonzero diagonal elements of $\bfS{\bfS'}^\transp$ lie on the rows belonging to $\clC$. 
	This implies that the symmetric matrix $\Gammam:=\bfS{\bfS'}^\transp+\bfS'{\bfS}^\transp$ has the diagonal element $2$ and zero off-diagonal terms  at the rows belonging to $\clC$. Moreover, $\Gammam$ has  at most two $1$'s in the other rows (not belonging to $\clC$), where those $1$'s do not lie on the diagonal of $\Gammam$. Therefore, from  \eqref{rip_diff_dum1}, we have the following two cases. On the rows belonging to $\clC$, $\Sigmam$ has only a diagonal element $\|\bfy\|^2 + \|\bfy'\|^2 - 2\inp{\bfy}{\bfy'}$,  which is also a singular values of $\Sigmam$. On the  rows not belonging to $\clC$, $\Sigmam$ has a diagonal element $\|\bfy\|^2 + \|\bfy'\|^2$ plus at most two off-diagonal terms given by $-\inp{\bfy}{\bfy'}$. Hence, form the Gershgorin disk theorem \cite{gershgorin1931uber}, all the singular values of $\Sigmam$ should lie in the range $\|\bfy\|^2+\|\bfy'\|^2 \pm  2|\inp{\bfy}{\bfy'}|$. This completes the proof.
\end{proof}
Let $\Sigmam=\bfU \Lambdam \bfU^\transp$ be the SVD of $\Sigmam$ as before and let $\bfe=\Lambdam^{-\frac{1}{2}} \bfU^\transp \scrE_{\bfH, \bfH'}$. We can check that $\bfe=(e_1, \dots, e_m)^\transp$ consists of i.i.d. $\sfN(0,1)$ variables. Since $\bfU$ is an orthogonal matrix, i.e., $\bfU\bfU^\transp=\bfU^\transp \bfU=\bfI_m$, we have $\|\scrE_{\bfH, \bfH'}\|^2 =\|\bfU^\transp\scrE_{\bfH, \bfH'}\|^2= \sum_{i=1}^m \lambda_i e_i^2$. We also have $\bE[\|\scrE_{\bfH, \bfH'}\|^2]=\|\bfH- \bfH'\|_\sfF^2=d_{\bfH, \bfH'}^2=\sum_{i=1}^m \lambda_i$. Defining $T:=d_{\bfH, \bfH'}^2 + \ripb (d_{\bfH, \bfH'}^2 \vee m \ripc^2)$ and setting $\theta$ to be a positive variable, we obtain the following concentration bound \cite{dubhashi2009concentration}
\begin{align}
\bP\Big[\sum_{i=1}^m& \lambda_i e_i^2 - d_{\bfH, \bfH'}^2> \ripb (d_{\bfH, \bfH'}^2\vee m \ripc^2) \Big ]\nonumber\\
&\leq  e^{- \theta T} \bE[e^{\theta \sum_{i=1}^m \lambda_i  e_i^2}]= \frac{ e^{- \theta T}}{\prod_{i=1}^m \sqrt{1-2\theta \lambda_i}}\nonumber\\
&\leq  e^{- \theta T -\frac{1}{2}\sum_{i=1}^{m} \log(1-2\theta \lambda_i) }=:e^{-E(\theta, \lambdam)},\label{E_thet_def}
\end{align}
where $E(\theta, \lambdam):=\theta T + \frac{1}{2}\sum_{i=1}^{m} \log(1-2\theta \lambda_i)$. We also used the fact that for a $\sfN(0,1)$ variables $e_i$,  $\bE[e^{se_i^2}]=\frac{1}{\sqrt{1-2s}}$ for any $s\in (0,\frac{1}{2})$, thus, the feasible range of $\theta$ in \eqref{E_thet_def} is given by $(0, \frac{1}{2\lambda_{1}})$, where $\lambda_{1}=\max\{\lambda_i: i \in [m]\}$ is the the maximum singular value. From Lemma \ref{max_eigen_lemma}, it results that 
\begin{align}\label{4_upperbound}
\lambda_{1} \leq \|\bfy\|^2 + \|\bfy'\|^2 + 2 |\inp{\bfy}{\bfy'}| \stackrel{(i)}{\leq} 4,
\end{align}
where in $(i)$, we used the fact that $\|\bfy\|=1$ and $\|\bfy'\| \leq 1$ for any $(\bfH, \bfH') \in \clD_2$. This implies that,  for all $(\bfH, \bfH') \in \clD_2$,  the set of permitted values of  $\theta$ at least contains $(0, \frac{1}{8})$. Now let us consider a fixed $\theta \in (0, \frac{1}{8})$. To derive a concentration bound for  \eqref{E_thet_def}, we need to find a strictly positive lower bound on  $E(\theta, \lambdam)$ that is independent of the configuration of the singular values $\lambdam$. We have the following lemma.
\begin{lemma}\label{lam_boundry}
	Let $E(\theta, \lambdam)$ be as before. Then, for any $\theta \in (0, \frac{1}{8})$, the vector $\lambdam$ minimizing $E(\theta, \lambdam)$ (i.e., the worst case singular value configuration) is given by the vector $\lambdam^*:=(4,\dots, 4, 0, \dots, 0)^\transp$ that has $4$ at its $\frac{d_{\bfH, \bfH'}^2}{4}$ components and is $0$ elsewhere.  \hfill $\lozenge$
\end{lemma}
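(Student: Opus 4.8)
The plan is to exploit the fact that, once the total mass $\sum_{i=1}^m \lambda_i = \tr(\Sigmam) = d_{\bfH, \bfH'}^2$ is fixed (by the trace), the quantity $T$ and hence the term $\theta T$ in $E(\theta,\lambdam)$ is a constant. So minimizing $E$ over admissible spectra reduces to minimizing $\Phi(\lambdam) := \sum_{i=1}^m \log(1-2\theta\lambda_i)$. By Lemma \ref{max_eigen_lemma} together with \eqref{4_upperbound}, and using that the $\lambda_i$ are nonnegative (being the eigenvalues of the positive semidefinite matrix $\Sigmam$), every admissible $\lambdam$ lies in the compact convex polytope $\clP := \{\lambdam \in \bR^m : \sum_i \lambda_i = d_{\bfH, \bfH'}^2,\ 0 \le \lambda_i \le 4\}$. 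Since for the concentration bound \eqref{E_thet_def} we only need a lower bound on $E$ (to upper bound $e^{-E(\theta,\lambdam)}$), it is enough to minimize $\Phi$ over all of $\clP$; this only relaxes the bound, and we will see the minimizer is in fact a genuine spectrum.

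Next I would record the concavity. For $\theta \in (0,\tfrac18)$ and $\lambda \in [0,4]$ one has $1 - 2\theta\lambda \ge 1 - 8\theta > 0$, so $h(\lambda):=\log(1-2\theta\lambda)$ is well defined, with $h''(\lambda) = -(2\theta)^2/(1-2\theta\lambda)^2 < 0$; thus $h$ is strictly concave on $[0,4]$, and $\Phi(\lambdam)=\sum_i h(\lambda_i)$ is a concave function on $\clP$. A concave function on a nonempty compact convex polytope attains its minimum at an extreme point \cite{bertsekas2003convex}, so it suffices to enumerate the vertices of $\clP$ and compare.

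Then I would characterize the vertices: $\clP$ is cut out by one linear equality and $2m$ box inequalities, so at any vertex at least $m-1$ of the box constraints are active, i.e. at least $m-1$ coordinates equal $0$ or $4$, leaving at most one ``fractional'' coordinate. When $d_{\bfH, \bfH'}^2/4$ is an integer $k$ (the regime used in the paper; note $k \le m$ since $d_{\bfH, \bfH'}^2 = \sum\lambda_i \le 4m$), the constraint $\sum_i\lambda_i = 4k$ forces that remaining coordinate also to equal $0$ or $4$, so every vertex is a permutation of $\lambdam^* = (4,\dots,4,0,\dots,0)$ with exactly $k$ entries equal to $4$. Since $\Phi$ is invariant under permutations of coordinates, all of these yield the same value of $E$, so $\min_{\lambdam \in \clP} E(\theta,\lambdam) = E(\theta,\lambdam^*)$, which is the assertion. (For non-integer $d_{\bfH, \bfH'}^2/4$ the same reasoning identifies the minimizing vertex as $\lfloor d_{\bfH, \bfH'}^2/4\rfloor$ coordinates at $4$, one coordinate at the remainder $d_{\bfH, \bfH'}^2 - 4\lfloor d_{\bfH, \bfH'}^2/4\rfloor$, and the rest zero, with a negligible change to the resulting exponent.)

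The calculation is essentially routine; the two points that need care are (i) checking that $\theta < \tfrac18$ keeps $1 - 2\theta\lambda_i$ strictly positive for every admissible $\lambda_i$, which is exactly the content of \eqref{4_upperbound}, so that the concavity step is legitimate, and (ii) justifying the relaxation from the (awkward to describe) set of realizable spectra to the full polytope $\clP$ — valid since it only weakens the lower bound on $E$, and in any case $\lambdam^*$ is itself realizable (e.g.\ by $\bfy' = \pm\bfy$ with $\bfS,\bfS'$ sharing $m-k$ rows). The conceptual heart, and the only nontrivial ingredient, is the extreme-point principle for minimizing a concave function over a polytope together with the vertex enumeration of $\clP$.
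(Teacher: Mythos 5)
Your proposal is correct and follows essentially the same route as the paper: with the trace fixed, $\theta T$ is constant, the map $\lambdam \mapsto \sum_i \log(1-2\theta\lambda_i)$ is concave (legitimate because $\theta<\tfrac{1}{8}$ together with \eqref{4_upperbound} keeps $1-2\theta\lambda_i>0$), and the minimum of a concave function over the polytope is attained at an extreme point, giving $\lambdam^*$. Your explicit vertex enumeration, the handling of non-integer $d_{\bfH,\bfH'}^2/4$, and the justification of relaxing to the full polytope are somewhat more careful than the paper's terse appeal to the boundary of the constraint set, but the underlying argument is the same.
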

\begin{proof}
	Note that the only constraint we put on $\lambdam$ is that it should belong to the set  
	\begin{align}\label{con_set_lam}
	\Big \{\lambdam: \sum_{i=1}^m \lambda_i = d_{\bfH, \bfH'}^2, 4 \geq \lambda_1 \geq \dots \geq \lambda_m \geq 0 \Big\},
	\end{align}
	where the upper bound $4$ results from \eqref{4_upperbound}.
	Since $T=d_{\bfH, \bfH'}^2 + \ripb(d_{\bfH, \bfH'}^2 \vee m\ripc^2)$ is independent of $\lambdam$,  from $E(\theta, \lambdam)=\theta T + \frac{1}{2}\sum_{i=1}^{m} \log(1-2\theta \lambda_i) $ and the concavity of the Logarithm, it results that $E(\theta, \lambdam)$ is a concave function of $\lambdam$ over the constraint set \eqref{con_set_lam}.
	Therefore, it achieves its minimum at the boundary \cite{bertsekas2003convex} of the constraint set \eqref{con_set_lam}, which corresponds to $\lambdam^*={(4,\dots, 4, 0, \dots, 0)^\transp}$  in the statement of the lemma. This completes the proof.
\end{proof}

From Lemma \ref{lam_boundry}, it results that $E(\theta, \lambdam)$ for $\theta \in (0, \frac{1}{8})$ and for all valid configurations of the singular values  $\lambdam$ is lower bounded by the following function
\begin{align}
E(\theta):=E(\theta, \lambdam^*)=\theta T + \frac{d_{\bfH, \bfH'}^2}{8} \log(1-8 \theta),
\end{align}
where {$\lambdam^*=(4,\dots, 4, 0, \dots, 0)^\transp$} is as in Lemma \ref{lam_boundry}, and where ${T=d_{\bfH, \bfH'}^2 + \ripb(d_{\bfH, \bfH'}^2 \vee m\ripc^2)}$ is as before. The optimal $\theta^* \in (0, \frac{1}{8})$ minimizing $E(\theta)$ is given by $\theta^*=\frac{1}{8} (1-\frac{d_{\bfH, \bfH'}^2}{T})$, where after replacing in $E(\theta)$, yields the following exponent
\begin{align*}
E_{\min}&:=E(\theta^*)=\frac{1}{8}\big(T-{d_{\bfH, \bfH'}^2} - {d_{\bfH, \bfH'}^2} \log(\frac{T}{d_{\bfH, \bfH'}^2})\big)\\
&=\frac{1}{8} \big ( \ripb(d_{\bfH, \bfH'}^2 \vee m\ripc^2) - d_{\bfH, \bfH'}^2 \log(1+ \ripb(1 \vee \frac{m\ripc^2}{d_{\bfH, \bfH'}^2}))\big).
\end{align*}
For $d_{\bfH, \bfH'}^2 \in[ m \ripc^2, \infty)$, $E_{\min}=\frac{1}{8} (\ripb - \log(1+\ripb)) d_{\bfH, \bfH'}^2$, which is an increasing function of $d_{\bfH, \bfH'}^2$ as $\ripb - \log(1+\ripb) \geq 0$ for all $\ripb \in \bR_+$ and in particular $\ripb \in (0,1)$, thus, the minimum of $E_{\min}$ is achieved at $d_{\bfH, \bfH'}^2=m \ripc^2$. 
Similarly, for $d_{\bfH, \bfH'}^2 \in[0, m \ripc^2]$, $E_{\min}=\frac{m\ripc^2 \ripb}{8} (1 - \frac{d_{\bfH, \bfH'}^2}{m\ripc^2 \ripb} \log(1+\frac{m  \ripc^2 \ripb}{d_{\bfH, \bfH'}^2}))$. 
We can check that  $x \mapsto1- x\log(1+\frac{1}{x})$ is a decreasing function of $x$ for $x\in \bR_+$. Thus, setting $x$ equal to $\frac{d_{\bfH, \bfH'}^2}{m  \ripc^2 \ripb}$, it results that the minimum of $E_{\min}$ for {$d_{\bfH, \bfH'}^2 \in [0, m \ripc^2]$} is also achieved at $d_{\bfH, \bfH'}^2= m \ripc^2$. 
Overall, by setting $d_{\bfH, \bfH'}^2= m \ripc^2$, we obtain that $E_{\min}$ is lower bounded by  
\begin{align}
E_0(\ripb, \ripc)&:=\frac{m \ripc^2}{8} (\ripb - \log(1+\ripb)) \nonumber\\
&\stackrel{(i)}{\geq}\frac{m \ripc^2}{8} (\frac{\ripb^2}{2} - \frac{\ripb^3}{3}) \stackrel{(ii)}{\geq} \frac{m\ripc^2 \ripb^2}{48},\label{E_0_bnd}
\end{align}
where in $(i)$ we used the inequality $\log(1+\ripb) \leq \ripb - \frac{\ripb^2}{2} + \frac{\ripb^3}{3}$ for $\ripb \in \bR_+$, and where in $(ii)$ we used $\frac{\ripb^3}{3} \leq \frac{\ripb^2}{3}$ for $\ripb \in (0,1)$. This establishes a strictly positive exponent $E_0(\mu, \ripb)=\frac{m\ripc^2 \ripb^2}{48}$ for \eqref{E_thet_def}, which holds for all $(\bfH, \bfH') \in \clD_2$. 
By following similar steps, we can extend the concentration bound in \eqref{E_thet_def} to the reverse inequality, where overall we obtain that for any fixed $(\bfH, \bfH') \in \clD_2$
\begin{align}
\bP\Big [&\big | \|(\bfH-\bfH') \bbb\|^2 - d_{\bfH, \bfH'}^2 \big | > \ripb (d_{\bfH, \bfH'}^2\vee m \ripc^2) \Big]\nonumber \\
&\leq 2 \bP\Big [ \|(\bfH-\bfH') \bbb\|^2 > d_{\bfH, \bfH'}^2  + \ripb (d_{\bfH, \bfH'}^2\vee m \ripc^2) \Big]\nonumber\\
&\leq 2 e^{-E_0(\ripb, \ripc)}.\label{final_exponent}
\end{align}

The final step is to generalize \eqref{final_exponent}  to derive a  concentration bound for all $(\bfH, \bfH') \in \clD_2$. As in the proof of Proposition \ref{rip_1}, we do this by quantizing $\clD_2$ into a $\frac{\ripb}{2}$-net with a minimal size and applying the union bound. 
We can build such an $\frac{\ripb}{2}$-net by finding a joint net for $\bfy\in \clB_2^k$ and $\bfy' \in \clB_2^k$, each consisting of at most $N=(1+\frac{2}{\ripb})^k$ points (as in the proof of Proposition \ref{rip_1}). Taking the union bound over this joint net and also all possible selection matrices $\bfS$ and $\bfS'$, we obtain that 
\begin{align}
\bP[\clE_2]&\leq |N|^2 |\clS|^2 e^{-E_0(\ripb, \ripc)} \\
&\leq 2 (1+\frac{2}{\ripb})^{2k} {n \choose m}^2 e^{-c m \ripb^2\ripc^2},
\end{align}
where $c>0$ is a constant independent of $m,n,k$ and $\ripb,\ripc$. This completes the proof.

\subsection{Proof of Proposition \ref{two_roots}}\label{app:two_roots}
\noindent We first  need the following two lemmas. We refer to \cite{boyd2004convex} for an introduction to (strict) convexity and (strictly) convex functions needed in this section.
\begin{lemma}\label{lemm_concave}
Let ${\phi(\alpha)=\frac{\cos(\alpha) + \sin(\alpha)-1}{\cos(\alpha) + b}}$, where $b\in \bR_+$ is a constant. Then, $\phi(\alpha)$ is \textit{strictly} concave over $\alpha \in [0,\frac{\pi}{2}]$. \hfill $\square$
\end{lemma}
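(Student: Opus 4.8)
The plan is to prove strict concavity in the most direct way, namely by showing $\phi''(\alpha)<0$ on $[0,\tfrac{\pi}{2}]$. Abbreviate $s=\sin\alpha$, $c=\cos\alpha$. First I would compute $\phi'$: differentiating $\phi=\frac{c+s-1}{c+b}$ and expanding the numerator, the $\pm sc$ cross terms cancel and (using nothing but the definitions) one gets
\[
\phi'(\alpha)=\frac{1-s+b(c-s)}{(c+b)^2}.
\]

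Next I would differentiate once more. Writing $N:=1-s+b(c-s)$, so $N'=-c-b(s+c)$, a short computation gives $\phi''(\alpha)=\dfrac{N'(c+b)+2sN}{(c+b)^3}$. The key step is to reorganize the numerator into a manifestly nonpositive expression: expanding $N'(c+b)+2sN$ and substituting $c^2=1-s^2$ wherever it appears, I expect the numerator to collapse to
\[
-(1-s)^2\;-\;b\bigl[(1+s^2)+c(1-s)\bigr]\;-\;b^2(s+c).
\]
On $[0,\tfrac{\pi}{2}]$ we have $s,c\ge 0$ and $b\ge 0$, so each of the three summands is $\le 0$; moreover the bracket is $\ge 1$, hence for $b>0$ the numerator is strictly negative, while the denominator $(c+b)^3$ is strictly positive (on all of $[0,\tfrac{\pi}{2}]$ when $b>0$, and on $[0,\tfrac{\pi}{2})$ when $b=0$, which already covers the domain on which $\phi$ is defined). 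Therefore $\phi''<0$ throughout, which gives strict concavity.

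The only real obstacle is the bookkeeping in the second step: carrying out the expansion of $N'(c+b)+2sN$ without sign errors and recognizing the regroupings $-c^2+2s-2s^2=-(1-s)^2$, $-bc^2-2bs^2=-b(1+s^2)$, $-bc+bsc=-bc(1-s)$, and $-b^2s-b^2c=-b^2(s+c)$. I would sanity-check the closed form by evaluating it at $\alpha=0$ (numerator $=-(1+b)^2$) and at $\alpha=\tfrac{\pi}{2}$ (numerator $=-2b-b^2$), both visibly negative. As a cosmetic alternative one may note $\phi(\alpha)=1-\frac{(1+b)-s}{c+b}$ and instead prove strict convexity of $\psi(\alpha):=\frac{(1+b)-s}{c+b}$, but the derivative computation is of essentially the same length, so I would simply keep whichever form is tidier in the final write-up.
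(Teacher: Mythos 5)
Your proposal is correct and follows essentially the same route as the paper: compute $\phi''$ explicitly and show its numerator is negative on $[0,\tfrac{\pi}{2}]$ while the denominator $(\cos\alpha+b)^3$ is positive. Your regrouping of the numerator as $-(1-\sin\alpha)^2-b\bigl[(1+\sin^2\alpha)+\cos\alpha(1-\sin\alpha)\bigr]-b^2(\sin\alpha+\cos\alpha)$ is algebraically equivalent to the paper's two-fraction splitting and in fact yields strict negativity for $b>0$ a bit more cleanly than the paper's final inequality.
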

\begin{proof}
Taking the derivative and simplifying, we obtain
\begin{align}
\phi'(\alpha)=\frac{1+b \cos(\alpha) -(1+b) \sin(\alpha)}{(\cos(\alpha)+b)^2}.
\end{align}
Also, taking the second derivative and simplifying yields
\begin{align}
\phi''(\alpha)&=\frac{b \cos(\alpha) \sin(\alpha) -b(1+b) \cos(\alpha)}{(\cos(\alpha)+b)^3}\\
&+ \frac{(2-b^2)\sin(\alpha) -(1+b) (1+\sin(\alpha)^2)}{(\cos(\alpha)+b)^3}.
\end{align}
Note that for $\alpha \in [0,\frac{\pi}{2}]$ the denominator $(\cos(\alpha)+b)^3$ of both terms is always positive. The numerator of the first term can be written as $b\cos(\alpha)(\sin(\alpha)-1-b)$ which is negative since $\cos(\alpha) \geq 0$, $b\in \bR_+$, and $\sin(\alpha)-1-b \leq 0$. Now let us consider the numerator of the second term. If $\alpha=0$, $\sin(\alpha)=0$ and the numerator is given by $-1-b$, which is negative. So, we consider $\alpha\in (0, \frac{\pi}{2}]$, where $\sin (\alpha) \in (0,1]$. For this case, the numerator of the second term  simplifies to 
\begin{align}
\sin(\alpha) &\Big ( 2-b^2 -(1+b) \big(\frac{1}{\sin(\alpha)} + \sin(\alpha)\big) \Big)\\
&\stackrel{(i)}{\leq} \sin(\alpha) (2-b^2 -(1+b)2)\\
&=\sin(\alpha)(-b^2-2b) \leq 0
\end{align}
where in $(i)$ we used the fact that $b\in \bR_+$, thus, $1+b>0$, and the fact that $\sin(\alpha)> 0$, and applied the inequality $l+ \frac{1}{l} \geq 2$ for  $l\in \bR_+$ by replacing $l=\sin(\alpha)$. This yields  $\phi''(\alpha)\leq 0$ for $a\in [0,\frac{\pi}{2}]$. Moreover, we can check that $\phi''(\alpha)<0$ except at some single points $a$, thus, $\phi(\alpha)$ is \textit{strictly} concave. 
\end{proof}

\begin{lemma}\label{lemm_convex}
Let $\psi(\alpha)=\frac{\cos(\alpha) + b}{\cos(\alpha) + \sin(\alpha)-1}$, where $b\in \bR_+$ is a constant. Then, $\psi(\alpha)$ is convex over $\alpha \in (0,\frac{\pi}{2})$. \hfill $\square$
\end{lemma}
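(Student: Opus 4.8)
The plan is to exploit the fact that $\psi$ is, up to the obvious identification, the reciprocal of the function $\phi$ already analyzed in Lemma~\ref{lemm_concave}. Writing
\[
\phi(\alpha)=\frac{\cos(\alpha)+\sin(\alpha)-1}{\cos(\alpha)+b},\qquad \psi(\alpha)=\frac{1}{\phi(\alpha)},
\]
I would first invoke the elementary calculus fact (see \cite{boyd2004convex}) that the reciprocal of a strictly positive concave function is convex: if $\phi>0$ and $\phi''\le 0$ on an interval, then on that interval
\[
\psi''=\Big(\tfrac{1}{\phi}\Big)''=\frac{2(\phi')^2-\phi\,\phi''}{\phi^3}\ge 0,
\]
since both terms in the numerator are nonnegative and the denominator is positive.

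The only additional ingredient needed is that $\phi$ is strictly positive on the \emph{open} interval $(0,\tfrac{\pi}{2})$ — which is exactly why the statement restricts to the open interval, as the numerator of $\phi$ vanishes at the endpoints and $\psi$ blows up there. For the denominator, $\cos(\alpha)+b>0$ on $(0,\tfrac{\pi}{2})$ because $\cos(\alpha)>0$ and $b\in\bR_+$. For the numerator, I would use the identity $\cos(\alpha)+\sin(\alpha)=\sqrt{2}\sin\!\big(\alpha+\tfrac{\pi}{4}\big)$; for $\alpha\in(0,\tfrac{\pi}{2})$ the argument $\alpha+\tfrac{\pi}{4}$ lies in $(\tfrac{\pi}{4},\tfrac{3\pi}{4})$, where $\sin$ is strictly larger than $\tfrac{1}{\sqrt2}$, so $\cos(\alpha)+\sin(\alpha)-1>0$.

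Combining these two observations with Lemma~\ref{lemm_concave} (which gives $\phi''\le 0$ on $[0,\tfrac{\pi}{2}]$, hence a fortiori on $(0,\tfrac{\pi}{2})$) yields $\psi''\ge 0$ on $(0,\tfrac{\pi}{2})$, i.e.\ $\psi$ is convex there. I do not expect a genuine obstacle: the one point to be careful about is keeping to the open interval so that the reciprocal rule is legitimately applicable, and invoking the already-established concavity of $\phi$ rather than recomputing $\psi''$ by brute force — the latter is possible but considerably messier, and I would avoid it unless a closed-form expression for $\psi''$ turns out to be needed elsewhere.
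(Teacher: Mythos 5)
Your proposal is correct and follows essentially the same route as the paper: both reduce the claim to Lemma~\ref{lemm_concave} together with the positivity of $\phi(\alpha)=\frac{\cos(\alpha)+\sin(\alpha)-1}{\cos(\alpha)+b}$ on the open interval, and then invoke the fact that the reciprocal of a positive concave function is convex. The only difference is cosmetic: you justify that fact via the second-derivative formula $\psi''=\frac{2(\phi')^2-\phi\,\phi''}{\phi^3}\ge 0$, whereas the paper writes $-\psi=\iota\circ\phi$ with $\iota(x)=-\frac{1}{x}$ concave and increasing and runs the composition inequality — both are valid.
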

\begin{proof}
First note that $\cos(\alpha) + \sin(\alpha)-1\geq 0$ and $\cos(\alpha)+b\geq 0$ for $\alpha \in (0, \frac{\pi}{2})$, thus, $\psi(\alpha)\geq0$. We can write ${-\psi(\alpha)=\iota \circ \phi(\alpha)}$ where ${\iota(x)=-\frac{1}{x}}$ and  is a concave increasing function over $(0, \infty)$, which contains $\phi(\alpha)$ for $\alpha\in (0, \frac{\pi}{2})$ (since $\phi(\alpha)>0$ in this domain). Let $\alpha,\alpha' \in (0,\frac{\pi}{2})$ and  $\lambda \in [0,1]$, and set $\bar{\lambda}=1-\lambda$. We have
\begin{align}
-\psi(\lambda \alpha+ \bar{\lambda} \alpha')&=\iota \circ \phi(\lambda \alpha+ \bar{\lambda} \alpha')\\
& \stackrel{(i)}{\geq} \iota (\lambda \phi(\alpha) + \bar{\lambda} \phi(\alpha'))\\
& \stackrel{(ii)}{\geq} \lambda \iota \circ \phi(\alpha) + \bar{\lambda} \iota \circ \phi(\alpha')\\
& = \lambda (-\psi(\alpha))  + \bar{\lambda} (-\psi(\alpha')),
\end{align}
where in $(i)$ we used the concavity of $\phi$ proved in Lemma \ref{lemm_concave} and the fact that $\iota$ is an increasing function, and where in $(ii)$ we used the concavity of $\iota$. This implies that $-\psi(\alpha)$ is a concave function, thus, $\psi(\alpha)$ is  convex over $\alpha\in(0,\frac{\pi}{2})$. 
\end{proof}

We can now prove Proposition \ref{two_roots}. First note that from \eqref{upsi_fx}
\begin{align*}
\upsilon(\alpha)=\frac{(\vsigma-1)\cos(\alpha)+ \vrho}{\cos(\alpha)+\sin(\alpha)-1}-\frac{\cos(\alpha)+\sin(\alpha)-1}{(\vsigma+1)\cos(\alpha)+ \vrho}. 
\end{align*}
As ${\vsigma -1 > 0}$ and $\vrho > 0$, the first term is a strictly convex function of $\alpha\in (0, \frac{\pi}{2})$ from Lemma \ref{lemm_convex} (by setting $b=\frac{\vrho}{\vsigma-1} > 0$). The second term is also a concave function of $\alpha$ from Lemma \ref{lemm_concave} (by setting $b=\frac{\vrho}{\vsigma+1} > 0$). Therefore, $\upsilon(\alpha)$ is strictly convex in $(0, \frac{\pi}{2})$. 

As ${\upsilon(0^+)=\upsilon (\frac{\pi}{2}^-)=+\infty}$, the roots of $\upsilon$ should lie in $(0, \frac{\pi}{2})$. Suppose that $\upsilon(\alpha)$ has more than two distinct roots, i.e., there are $\alpha_1 < \alpha_2 < \alpha_3$ in $(0, \frac{\pi}{2})$ with $\upsilon(\alpha_i)=0$, for $i=1,2,3$. Then, setting $\lambda=\frac{\alpha_2-\alpha_1}{\alpha_3-\alpha_1} \in (0,1)$, a simple calculation shows that $\alpha_2=\lambda \alpha_3 + (1-\lambda) \alpha_1$. Thus, we have
\begin{align}
0=\upsilon(\alpha_2)&=\upsilon(\lambda \alpha_3 + (1-\lambda) \alpha_1) \nonumber \\
&\stackrel{(i)}{<} \lambda \upsilon(\alpha_3) + (1-\lambda) \upsilon(\alpha_1)=0,\label{contradict}
\end{align}
where in $(i)$ we used the the fact that $\lambda \in (0,1)$ and that $\upsilon(\alpha)$ is strictly convex for $\alpha \in(0,1)$. Since \eqref{contradict} is a contradiction, $\upsilon(\alpha)$ cannot have more than two roots.

\subsection{Proof of Proposition \ref{root_prop}}\label{app:root_prop}
First note that $\upsilon(\alpha, \vsigma, \vrho)$ is a continuous and differentiable function of $(\alpha, \vsigma, \vrho)\in (0, \frac{\pi}{2}) \times (1, \infty)\times (0,1)$ of any order. 

To prove $(i)$, let $(\vsigma_0, \vrho_0)\in \Dxirho$ be an arbitrary point and let $\alpha_{\min}(\vsigma_0, \vrho_0)< \alpha_{\max}(\vsigma_0, \vrho_0)$ be the corresponding two roots. From the strict convexity of $\upsilon$ proved in Proposition \ref{two_roots}, we have $\upsilon(\alpha_0, \vsigma_0, \vrho_0) <0$ for $\alpha_0=\frac{\alpha_{\min}(\vsigma_0, \vrho_0) +\alpha_{\max}(\vsigma_0, \vrho_0)}{2}$. Thus, from the continuity of $\upsilon(\alpha_0, \vsigma, \vrho)$ it results that there is an open set $\clD_0:=(\vsigma_0-\varepsilon, \vsigma_0+\varepsilon) \times (\vrho_0 -\varepsilon, \vrho_0+\varepsilon)$, with a sufficiently small $\varepsilon>0$,  around $(\vsigma_0, \vrho_0)$ over which $\upsilon(\alpha_0, \vsigma, \vrho) <0$. For all  $(\vsigma, \vrho) \in \clD_0$, $\upsilon(\alpha, \vsigma, \vrho)$ must have two roots from Proposition \ref{two_roots}. This implies that ${\clD_0 \subset \Dxirho}$, thus, $\Dxirho$ is an open set.

To prove $(ii)$, note that $\upsilon(\alpha, \vsigma, \vrho)=0$ defines $\alpha_{\min}$ and $\alpha_{\max}$ implicitly as  functions of $(\vsigma, \vrho) \in \Dxirho$. Since $\Dxirho$ is an open set from $(i)$,  Implicit Function Theorem \cite{dontchev2009implicit} implies that $\alpha_{{\max/\min}}$ would be differentiable at a point $(\vsigma_0, \vrho_0) \in \Dxirho$ provided that $\frac{\partial}{\partial \alpha} \upsilon(\alpha, \vsigma_0, \vrho_0)=\upsilon'(\alpha, \vsigma_0, \vrho_0) \neq 0$  at ${\alpha=\alpha_{{\max/\min}} (\vsigma_0, \vrho_0)}$. So, we need to prove that this condition is satisfied (see, e.g., Fig.\,\ref{fig:upsi_fx} where $\upsilon'(\alpha, \vsigma_0, \vrho_0)$ is strictly positive at $\alpha_{\max}$ and strictly negative at $\alpha_{\min}$). Suppose, for example, that $\upsilon'(\alpha_{\min}, \vsigma_0, \vrho_0)=0$. Then, from the strict convexity of $\upsilon(\alpha, \vsigma_0, \vrho_0)$ proved in Proposition \ref{two_roots}, it results that $\upsilon'(\alpha, \vsigma_0, \vrho_0)>0$ for $\alpha \in (\alpha_{\min}, \alpha_{\max})$, which from $\upsilon(\alpha_{\min}, \vsigma_0, \vrho_0)=0$ implies that $\upsilon(\alpha_{\max}, \vsigma_0, \vrho_0)>0$, thus, contradicting the fact that $\alpha_{\max}$ is another root of $\upsilon(\alpha, \vsigma_0, \vrho_0)$. A similar argument shows that $\upsilon'(\alpha_{\max}, \vsigma_0, \vrho_0)\neq 0$. Therefore, both $\alpha_{\min}$ and $\alpha_{\max}$ are differentiable  in $\Dxirho$.

To prove $(iii)$, we  first check that $\upsilon(\vsigma, \vrho)$ is an increasing function of $\vsigma$ and $\vrho$. This  follows simply from the fact that $\cos(\alpha)\geq 0$ for $\alpha \in [0, \frac{\pi}{2}]$, thus, the numerator/denominator of the first/second term in \eqref{upsi_fx} is an increasing function of $(\vsigma, \vrho)$. This implies that $\frac{\partial}{\partial \vsigma} \upsilon(\alpha, \vsigma, \vrho)>0$ and $\frac{\partial}{\partial \vrho} \upsilon(\alpha, \vsigma, \vrho)>0$ for all $\alpha \in (0, \frac{\pi}{2})$ including $\alpha_{\min/\max}$. Using the Implicit Function Theorem \cite{dontchev2009implicit} and taking the derivative with respect to $\vsigma$ from $\upsilon(\alpha_{\min}(\vsigma, \vrho), \vsigma, \vrho)=0$ yields  
\begin{align*}
\upsilon'(\alpha_{\min}, \vsigma, \vrho) \frac{\partial}{\partial \vsigma} \alpha_{\min}(\vsigma, \vrho) =-\frac{\partial}{\partial \vsigma} \upsilon(\alpha, \vsigma, \vrho)\big |_{\alpha=\alpha_{\min}}\leq 0.
\end{align*}
From $\upsilon'(\alpha_{\min}, \vsigma, \vrho)<0$ established in $(ii)$ (see also Fig.\,\ref{fig:upsi_fx}), this implies that $\frac{\partial}{\partial \vsigma} \alpha_{\min}(\vsigma, \vrho)>0$. Similarly, we obtain that $\frac{\partial}{\partial \vrho} \alpha_{\min}(\vsigma, \vrho)>0$. Thus, $\alpha_{\min}(\vsigma, \vrho)$ is an increasing function of $(\vsigma, \vrho)$. The result for $\alpha_{\max}(\vsigma, \vrho)$ follows similarly with the only difference that $\upsilon'(\alpha_{\max}, \vsigma, \vrho)>0$ (see  Fig.\,\ref{fig:upsi_fx}), which implies that $\alpha_{\max}(\vsigma, \vrho)$ is a decreasing function of $(\vsigma, \vrho)$.

To prove $(iv)$, it is easier to use \eqref{eq:fx_pt}. Note that 
\begin{align}
(\cos(\alpha)&+\sin(\alpha)-1)^2=(\vsigma \cos(\alpha)+\vrho )^2-\cos(\alpha)^2\nonumber\\
&=(\vsigma^2 -1) \cos(\alpha)^2 + 2 \vrho \vsigma \cos(\alpha) + \vrho^2\nonumber\\
&\leq (\vsigma^2 -1) + 2 \vrho \vsigma+\vrho^2 =(\vsigma+\vrho)^2-1,\label{dm_rts}
\end{align}
where the last term converges to $0$ as ${(\vsigma, \vrho)\to (1,0)}$. Let us define the open region $\clD=\{(\vsigma, \vrho): (\vsigma+\vrho)^2-1< \varepsilon^2\}$ with $\varepsilon \in (0, \sqrt{2}-1)$. For any $(\vsigma, \vrho) \in \clD$, it is seen from \eqref{dm_rts} that \eqref{eq:fx_pt} has a solution satisfying $\cos(\alpha)+ \sin(\alpha) < 1+\varepsilon$. Note that such a solution exists  since $1+\varepsilon \in (1, \sqrt{2})$, which is a subset of the range of $\cos(\alpha)+\sin(\alpha)-1$ for $\alpha \in [0, \frac{\pi}{2}]$ given by $[1, \sqrt{2}]$. Thus, $\clD\subseteq \Dxirho$. In particular, $\clD$ contains all the paths along which $(\vsigma, \vrho)$ can approach to $(1,0)$ from the inside of $\Dxirho$. From the identity $\cos(\alpha)+\sin(\alpha)=\sqrt{2} \cos(\alpha-\frac{\pi}{4})$, it results that for any $(\vsigma, \vrho) \in \clD$
\begin{align}
\alpha_{\max}(\vsigma, \vrho)& \geq \frac{\pi}{4} + \cos^{-1}(\frac{1+\varepsilon}{\sqrt{2}}),\\
\alpha_{\min}(\vsigma, \vrho)& \leq  \frac{\pi}{4} - \cos^{-1}(\frac{1+\varepsilon}{\sqrt{2}}),
\end{align}
Taking the limit as $\varepsilon \to 0$ yields that any limit point of $\alpha_{\max}(\vsigma, \vrho)$ as ${(\vsigma, \vrho)\to (1,0)}$ should be larger than $\lim_{\varepsilon \to 0} \frac{\pi}{4} + \cos^{-1}(\frac{1+\varepsilon}{\sqrt{2}})=\frac{\pi}{2}$. Since ${\alpha \in [0,\frac{\pi}{2}]}$, this implies that $\lim_{{(\vsigma, \vrho)\to (1,0)}}\alpha_{\max} (\vsigma, \vrho)=\frac{\pi}{2}$. Similarly, it results that $\lim_{{(\vsigma, \vrho)\to (1,0)}}\alpha_{\min} (\vsigma, \vrho)=0$. This completes the proof.

\balance
{\small
\bibliographystyle{IEEEtran}
\bibliography{references}
}

\end{document}